\documentclass[11pt,a4paper]{article}
\usepackage[T1]{fontenc}
\usepackage[utf8]{inputenc}
\usepackage{lmodern}
\usepackage[textwidth=6.5in,top=1in,bottom=1in,headheight=0pt,headsep=0pt]{geometry}
\usepackage{titling}
\usepackage{authblk}
\usepackage{amsmath,amssymb,amsthm,mathtools}
\usepackage{microtype}
\usepackage{enumitem}
\usepackage{booktabs}
\usepackage{needspace}
\usepackage[font=small,labelfont=bf,labelsep=period]{caption}
\usepackage{tikz}
\usetikzlibrary{arrows.meta,positioning,calc}
\usepackage[numbers]{natbib}
\usepackage{xcolor}
\usepackage{xurl}
\usepackage{hyperref}
\hypersetup{
 colorlinks=true,
 linkcolor=blue!50!black,
 citecolor=blue!50!black,
 urlcolor=blue!50!black,
 pdftitle={A Dichotomy for Planar Graph Homomorphisms with Nonnegative Weights},
 pdfauthor={Chenghua Liu and Boning Meng},
 pdfsubject={Research manuscript: exact counting, planar gadgets, and tensor structure}
}
\newtheorem{theorem}{Theorem}[section]
\newtheorem{lemma}[theorem]{Lemma}
\newtheorem{proposition}[theorem]{Proposition}
\newtheorem{corollary}[theorem]{Corollary}
\theoremstyle{definition}

\theoremstyle{remark}

\newtheorem*{remark*}{Remark}
\numberwithin{equation}{section}
\setlist{topsep=4pt,itemsep=2pt,parsep=0pt}
\allowdisplaybreaks[1]
\newcommand{\PlGH}[1]{\operatorname{Pl\text{-}GH}(#1)}

\title{A Dichotomy for Planar Graph Homomorphisms\\
with Nonnegative Weights}
\author[1]{Chenghua Liu}
\author[2]{Boning Meng}
\affil[1]{Institute of Software, Chinese Academy of Sciences, Beijing, China}
\affil[2]{University of Regensburg, Regensburg, Germany}
\affil[ ]{\href{mailto:liuch.russell@gmail.com}{\texttt{liuch.russell@gmail.com}},
\href{mailto:mengboning2013@gmail.com}{\texttt{mengboning2013@gmail.com}}}
\date{}

\begin{document}
\maketitle

\begin{abstract}
We prove a complete complexity dichotomy for planar graph homomorphism
counting with any fixed symmetric nonnegative matrix of arbitrary finite
order, giving an explicit criterion for tractability. We also
characterize exactly which fixed positive vertex weights preserve
tractability, with both classifications extending from algebraic weights
to fixed real weights in a prescribed exact representation. Our proof
hinges on an entropy-based continuation argument:
maximal logarithmic support identifies distance kernels as maximum-entropy
completions, extending their positive definiteness throughout the parameter
interval. This enables distance geometry to recover hidden product
coordinates even when planar gadgets cannot distinguish colors;
counting-hardness arguments then force the factors to be zero-field
Boolean Ising interactions.
The classification also yields complete tractability criteria for clock
models, coupled Ising systems, and planar contractions of stoquastic
imaginary-time kernels. All results have been formally verified in Lean~4.%
\footnote{The formalization is available at
\url{https://github.com/liuchliuch/planar-homomorphisms-lean}.}
\end{abstract}

\section{Introduction}\label{sec:introduction}

Counting graph colorings and evaluating spin partition functions are
two versions of the same problem: sum the weights of all assignments
of states to vertices, with interactions along edges. In combinatorics,
an interaction can prescribe which pairs of colors are allowed; in
statistical physics, it assigns a Boltzmann weight to each pair of
spins. The resulting partition function is both a counting invariant
and the normalizing constant for a Gibbs distribution. Understanding
when it can be computed efficiently is a common question across these
subjects.

Planarity makes this question particularly interesting. The
Fisher--Kasteleyn--Temperley (FKT) algorithm for the zero-field Ising model
\cite{kasteleyn1961,kasteleyn1963,temperleyFisher1961} is a striking
exception to the hardness of exact counting on general graphs
\cite{bulatovGrohe2005}. Independent Ising systems can be combined
to give tractable models with arbitrarily many states. Does every
tractable nonnegative interaction arise this way, apart from elementary
local summations? And which single-site fields preserve this advantage
of planarity? These questions ask for the extent of the Ising
phenomenon, beyond a particular model or number of states.

We study them in the language of weighted graph homomorphisms.
Each vertex receives one of $q$ states, also called colors, indexed
by $[q]=\{1,\ldots,q\}$. A fixed symmetric nonnegative matrix $M$
specifies the edge weight $M_{ij}$ when its endpoints receive states
$i,j$. The partition function is
\[
 Z_M(G)=\sum_{\sigma:V(G)\to[q]}
                  \prod_{uv\in E(G)}M_{\sigma(u),\sigma(v)}.
\]
When $M$ is a zero--one adjacency matrix, this counts homomorphisms
to its target graph. In a spin system, $M_{ij}=e^{-\beta E_{ij}}$
encodes the pair energy $E_{ij}$ at inverse temperature $\beta$;
zero entries allow hard exclusions. Write $\PlGH{M}$ for exact
evaluation on planar input graphs. The interaction and its number of
states are fixed; the size and shape of the input graph vary.

\subsection{Main results}

Our classification gives an affirmative answer to the first question:
\emph{all additional tractability supplied by planarity comes from
independent zero-field Ising coordinates}. This holds for every
nonnegative interaction, including singular matrices and matrices
with zero entries. The classification also determines exactly which fixed
positive vertex weights preserve tractability.

To state the result, write
\begin{equation}\label{eq:ising-factors}
 W(\rho)=\begin{pmatrix}1&\rho\\\rho&1\end{pmatrix},\qquad
 T(\boldsymbol\rho)=\bigotimes_{r=1}^dW(\rho_r),\qquad \rho_r>0.
\end{equation}
We allow $d=0$, with $T(\varnothing)=[1]$.
A color of $T$ is a bit string: each coordinate contributes an
independent two-state interaction. The tensor is \emph{nondegenerate}
when every $\rho_r\ne1$. Write $M\cong N$ for equality after a
simultaneous row and column permutation. The support graph of $M$
has vertex set $[q]$ and an edge $ij$, including a possible loop,
when $M_{ij}>0$. We call its connected components the
\emph{support components} of $M$.

\begin{theorem}[Nonnegative interactions]\label{thm:main}
Let $M=M^{\mathsf T}\in\mathbb R_{\ge0}^{q\times q}$ have fixed
algebraic entries, and write $M\cong\bigoplus_{\alpha=1}^s M_\alpha$
for its support components. Then $\PlGH{M}\in\mathsf{FP}$ if each
$M_\alpha$, after relabeling, has one of the forms
\begin{equation}\label{eq:tractable-forms}
 [0],\qquad
 (aa^{\mathsf T})\otimes T(\boldsymbol\rho),\qquad
 \begin{pmatrix}0&ab^{\mathsf T}\\ba^{\mathsf T}&0\end{pmatrix}
       \otimes T(\boldsymbol\rho),
\end{equation}
with
\[
 a\in\mathbb R_{>0}^{k},\quad b\in\mathbb R_{>0}^{\ell},\quad
 \boldsymbol\rho\in\mathbb R_{>0}^{d},\qquad k,\ell\ge1,\quad d\ge0.
\]
The parameters may differ between components. In every other case,
$\PlGH{M}$ is $\#\mathrm P$-hard under polynomial-time Turing reductions.
\end{theorem}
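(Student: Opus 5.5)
We would split the proof into the tractability direction and the hardness direction, and reduce both to the connected case first.

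\textbf{Reduction to components.} For a connected planar input $G$ we have $Z_M(G)=\sum_\alpha Z_{M_\alpha}(G)$, and $Z_M$ is multiplicative over the components of $G$. So tractability of every block gives tractability of $M$. For hardness we would use the standard interpolation trick: take disjoint unions and thickenings, which are planar and scale each block's contribution differently. This recovers each $Z_{M_\alpha}$ from oracle calls to $Z_M$, via a Vandermonde-type system on the distinct block contributions, in the manner of Dyer--Greenhill and Bulatov--Grohe. Hence it suffices to treat a single connected $M$.

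\textbf{Tractable direction.} If $M=(aa^{\mathsf T})\otimes T(\boldsymbol\rho)$, the partition function factorizes over coordinates. Each edge contributes $a_{i}a_{j}$, so the rank-one factor gives $\prod_v(\sum_i a_i^{\deg v})$. Each $W(\rho_r)$ gives a zero-field Ising partition function on planar $G$, which FKT computes exactly in polynomial time with algebraic arithmetic. The bipartite form is handled the same way: if $G$ is not bipartite the value is $0$; otherwise fix the two sides, and the rank-one part becomes a product of vertex sums on each side, times the same Ising factors. The number of bipartite components of $G$ contributes a factor of $2$ for each choice of side.

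\textbf{Hardness direction.} Let $M$ be connected and not of the stated form. First we would reduce to the positive case. If $M$ is non-bipartite, some planar even-length path gadget $M^{(2t)}$ is entrywise positive. If $M$ is bipartite, we pass to the positive blocks of $M^2$ on each side. We would then show that $M$ has the tractable form if and only if $M^{2}$ (respectively its blocks) does, which uses the rank and sign structure of the tensor. Now let $N>0$ be symmetric. Twin reduction (collapsing proportional rows) yields a rank-one factor times a twin-free $N'$, and the claim becomes that $N'\cong T(\boldsymbol\rho)$ up to scaling. We would use the earlier entropy/continuation results: consider the planar gadgets $N^{\circ t}$ (Hadamard powers from thickening) and the path powers, and define the kernel $D_{ij}=-\log(N_{ij}/\sqrt{N_{ii}N_{jj}})$. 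The entropy argument from earlier in the paper (maximum-entropy completion) shows that $\exp(-tD)$ is positive definite for all $t$ in the relevant interval. By Schoenberg's theorem $D$ is then a negative-type (squared Euclidean-type) distance. The distance-geometry step then embeds the colors isometrically, and this recovers product coordinates: $D$ becomes a sum of cut metrics on a hypercube, $D_{ij}=\sum_r c_r[x_i(r)\ne x_j(r)]$.

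\textbf{Forcing the Ising form.} With the coordinates in hand, we would show that $N'$ is exactly $\bigotimes W(\rho_r)$ up to diagonal scaling. This means each color set is a full cube $\{0,1\}^d$ and there are no nonuniform diagonal fields. Any deviation gives, via stretching and thickening gadgets together with pinning, which is obtained by interpolation since gadgets cannot distinguish colors, a planar-realizable $2\times2$ Ising interaction with nonzero field or a $3$-state non-product interaction. Both are \#P-hard by known planar results (the planar Boolean Holant/CSP dichotomy of Cai--Lu--Xia). The main obstacle is the previous step: certifying positive definiteness throughout the whole interval using only planar gadgets, and extracting genuine product coordinates from $D$ when colors are indistinguishable. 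We would attack this with the maximal-log-support entropy argument, using it to identify $D$ with the completion kernel and continuing positivity from small $t$ where it is evident.
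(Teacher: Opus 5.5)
Your proposal has genuine gaps, both in the reduction to the positive case and in the structural core.

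\textbf{The reduction to the positive case.} You claim that a connected $M$ has the tractable form if and only if $M^2$ (or, for bipartite $M$, the diagonal blocks of $M^2$) does. This is false, and the paper contains a counterexample. Take $T=W(r)\otimes W(r)$ with $0<r<1$, and let $P$ swap the two bits. Then $M=PT=TP$ is symmetric and positive, and $M^2=T^2=(1+r^2)^2\,W(\rho)\otimes W(\rho)$ with $\rho=2r/(1+r^2)$ is a scaled Ising tensor. Yet $\det M=-(1-r^2)^4<0$. Every tractable positive full-rank four-state matrix is $cW(\rho_1)\otimes W(\rho_2)$ up to relabeling, with positive determinant, so $M$ is hard. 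Squaring discards how $M$ acts on the coordinates, which is exactly what decides tractability.

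The bipartite version fails similarly. Ising Gram matrices $VV^{\mathsf T}$ and $V^{\mathsf T}V$ do not determine the orthogonal factor $O$ in $C=\Theta O$ from the proof of Proposition~\ref{prop:rectangular}. With $\theta_1=\theta_2$, any rotation inside the degree-one layer keeps both Grams Ising; only the noise and norm-equality argument forces $O$ to be a signed permutation.

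For a positive component the paper does not try to recover $M$ from $M^2$. A positive definite tensor Gram $(B^2)^{\circ p}$ only seeds common coordinates. Corollary~\ref{cor:common} then places every positive available matrix in those coordinates via $N\circ E_{Q_d}(x)$, in particular the normalized core $B$ itself, even if it is singular or indefinite.

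Two further pieces are missing. Zero patterns need their own argument: the support interpolation and integrality argument of Proposition~\ref{prop:zero-one}. And collapsing proportional rows does not produce ``a rank-one factor times a twin-free $N'$''. It leaves a unit-diagonal core with the moment weights $u_s(t)$ of \eqref{eq:symmetric-moments} (with $w=\mathbf1$). Equality of the amplitude masses above all core states must then be proved, which the paper does via Lemma~\ref{lem:core-fields} and a Vandermonde system.

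\textbf{The structural core.} This is not what the paper's tools provide, and as stated it cannot be proved. Lemma~\ref{lem:entropy} concerns graph-distance kernels $E_R(x)=(x^{d_R(i,j)})$. Its hypothesis \eqref{eq:log-sparse} is supplied only when $R$ is the maximal off-diagonal support graph of the \emph{ordinary} matrix logarithm among available positive definite matrices. It says nothing about the entrywise powers $e^{-tD}=\tilde N^{\circ t}$ of your normalized matrix $\tilde N_{ij}=N_{ij}/\sqrt{N_{ii}N_{jj}}$.

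Your claim even fails for tractable inputs: for $N=W(2)$, $\tilde N^{\circ t}=W(2^t)$ is indefinite for every $t>0$. Where it does hold, negative type gives no product coordinates. It does not imply a cut decomposition, since squared distances of three collinear points violate the triangle inequality. Moreover, the hard Potts matrix $I_3+J_3$ gives an equilateral $D$ of negative type for every $t$.

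The paper obtains a Cartesian product of cliques from an \emph{integer graph metric} together with local rigidity forced by maximality. That rigidity consists of exactly $h!$ equal-weight shortest paths, from \eqref{eq:path-moment}, and nonadjacent common neighbours, from the Wheatstone gadget. These feed the Schoenberg embedding and the parallelogram rule.

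Nor can pinning be obtained by interpolation. Precisely because gadgets cannot distinguish colors, it is unavailable. For $W(1/2)\otimes W(1/3)$, bit flips act transitively and every edge gadget has constant diagonal. Lemma~\ref{lem:root} gives only a single restricted root. The paper instead isolates hard factors by product and spectral interpolation:
\begin{itemize}
\item for Potts factors, unique factorization of the entries of $E_R(t)^2$ (Lemma~\ref{lem:potts});
\item for Boolean fields, monodromy around the branch points $x=\mathrm i\beta_g$ (Lemma~\ref{lem:boolean-field}).
\end{itemize}

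\textbf{Smaller points.} The three-state endpoint is Vertigan's planar Tutte result, not a Boolean CSP dichotomy. Component separation should use root identification (Lemma~\ref{lem:root}), since disjoint unions multiply partition functions and thickenings do not separate the blocks.
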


The algorithm rests on a general tensor-product identity: for any
interaction matrices $A,B$ and any graph $G$,
\[
 Z_{A\otimes B}(G)=Z_A(G)Z_B(G).
\]
A color of $A\otimes B$ is a pair of colors, and the two coordinate
colorings contribute independently to the sum. Thus each tensor
factor in \eqref{eq:tractable-forms} can be evaluated separately.
For the rank-one factor $aa^{\mathsf T}$, each vertex $v$ contributes
the independent sum $\sum_i a_i^{\deg(v)}$. The bipartite rank-two
factor gives zero on a nonbipartite input; on a connected bipartite
input, each of the two assignments of its parts to the target sides
gives such local sums, using $a$ and $b$. These factors are tractable
on arbitrary graphs \cite{bulatovGrohe2005}. Each $W(\rho_r)$ factor
is evaluated on planar graphs by the FKT algorithm, and the factor
values are multiplied. Zero singletons
contribute only on edgeless inputs. For each connected input component,
sum over target support components; then multiply over input
components. Large state spaces can therefore be tractable, but their
nontrivial interactions must have an underlying binary product structure.

For the second question, give state $i$ a fixed positive vertex weight
$w_i$, the same at every input vertex. Define
\[
 Z_{M,w}(G)
 =\sum_{\sigma:V(G)\to[q]}
   \left(\prod_{v\in V(G)}w_{\sigma(v)}\right)
   \left(\prod_{uv\in E(G)}M_{\sigma(u),\sigma(v)}\right).
\]
The corresponding planar evaluation problem is $\PlGH{M,w}$.
Physically, $w_i=e^{-\beta U_i}$ represents a single-site potential.
For a nondegenerate Ising tensor, the weighted problem is tractable
exactly when every Ising bit string has the same vertex weight.

If rows $i$ and $j$ of $M$ are identical, assigning $i$ or $j$ to a
vertex gives the same weight on each incident edge. We can therefore
replace these two states by one state with vertex weight $w_i+w_j$,
without changing the partition function. Theorem~\ref{thm:weighted}
gives the complete weighted classification for the forms in
\eqref{eq:tractable-forms}, accounting for these sums and the
elementary local factors.

The algebraic formulation uses exact number-field encodings.
Theorem~\ref{thm:real} extends both classifications to arbitrary fixed
real weights in the field representation of
\cite[Appendix~A]{cmy2026}, so they apply to Boltzmann parameters
without an algebraicity restriction.

\subsection{Applications}

The structural criterion turns questions about individual models into
short tests for independent Ising coordinates and compatible fields.
Section~\ref{sec:applications} develops the following consequences
for exact evaluation on planar graphs.

\begin{itemize}[leftmargin=*,itemsep=5pt]
\item \textbf{Graph homomorphisms.}
For unweighted targets, planarity does not enlarge the tractable
class of Dyer and Greenhill \cite{dyerGreenhill2000,dyerGreenhill2004}:
every other target remains $\#\mathrm P$-complete even on planar inputs.
Thus the additional tractability supplied by planarity is a weighted
phenomenon, arising from independent zero-field Ising factors.

\item \textbf{Planar bipartite graphs.}
We obtain a complete dichotomy also when inputs are required to be
bipartite, allowing different fixed vertex weights on the two
prescribed parts. This restriction strictly enlarges the tractable
class: some interactions that are hard on general planar graphs
become tractable (Section~\ref{sec:bipartite-inputs}).

\item \textbf{Clock and spin-one models.}
In the clock model, spins occupy $q$ equally spaced angles, and their
pair weight is the exponential of a coupling times the cosine of
their angular difference \cite{kim2017}. For every nonzero real
coupling, planar evaluation is tractable exactly for $q=2,4$;
positive vertex weights must also be constant
(Corollary~\ref{cor:clock}). The four-state exception consists of two
independent Ising bits. In contrast, the spin-one Blume--Capel model
\cite{buteraPernici2018} is hard at every nonzero pair coupling,
for all finite single-site potentials.

\item \textbf{Coupled Ising systems.}
A state can contain several Ising spins, with interactions coupling
products of these spins. With unit vertex weights, tractability is equivalent to linear
independence of the binary labels of the active interactions
(Corollary~\ref{cor:characters}): these products must be able to
serve as independent coordinates. For Ashkin--Teller systems
\cite{grestWidom1981} with only four-spin couplings between layers,
this says that the fixed graph describing which layers are coupled
must be a forest. The weighted criterion specifies which single-site
potentials preserve these decompositions.

\item \textbf{Quantum imaginary-time kernels.}
For a fixed real symmetric Hamiltonian $h$ with nonpositive
off-diagonal entries and a connected transition graph, the
imaginary-time kernel $e^{-\tau h}$ is positive for every $\tau>0$.
With unit vertex weights, its contraction on planar graphs is tractable
exactly when, after relabeling states and adding a scalar,
$h$ is a sum of independent negative bit-flip terms
(Corollary~\ref{cor:quantum-kernel}). For the quantum clock hopping
term \cite{vishnu2025}, this again singles out $q=2,4$.
\end{itemize}

The classifications persist on graphs of any fixed bounded orientable
genus, where the extension of the FKT algorithm uses a bounded
number of Pfaffians
\cite{pham2020}. The same structural boundary therefore describes
the algorithmic benefit of bounded topology.

\subsection{Related work}

Complexity classifications organize this landscape by asking which
local interactions make counting easy. For graph homomorphisms on
general graphs, the theory progressed from unweighted targets
\cite{dyerGreenhill2000,dyerGreenhill2004} to nonnegative weights
\cite{bulatovGrohe2005}, then to real and complex algebraic weights
\cite{goldbergGroheJerrumThurley2010,caiChenLu2013}.
The counting constraint satisfaction framework allows relations of
any arity, with corresponding arbitrary-domain classifications
\cite{bulatov2008,caiChenLu2016,caiChen2017}.
These results make tractability a structural property of the
interactions, rather than a collection of isolated algorithms.

For planar inputs, Valiant's holographic algorithms
\cite{valiant2008} use the FKT algorithm within a broader framework.
The planar Boolean counting CSP classifications
\cite{caiLuXia2017,guoWilliams2020,caiFu2016} identify the additional
tractable cases through matchgates, which encode weighted perfect
matchings. General planar Holant problems, with local constraints on
edge assignments, admit further tractable families
\cite{caiFuGuoWilliams2022}. The present classification identifies
the full extent of this phenomenon for nonnegative graph
homomorphisms with arbitrarily many colors.

Within this problem, Cai and Maran classified real three-state
matrices and nonnegative full-rank four-state matrices
\cite{caiMaran2023,caiMaran2024}. Cai, Maran and Young
\cite{cmy2026} obtained arbitrary-domain results by diagonal
separation: planar edge gadgets distinguish colors through their
diagonal entries. They related obstructions to separation to quantum
automorphisms. Highly symmetric interactions require a different
route. For example, $W(1/2)\otimes W(1/3)$ has rank four and admits
independent coordinate flips acting transitively on its colors.
Every edge gadget consequently has constant diagonal, yet the
partition function is the product of two Ising partition functions.
Our proof recovers hidden coordinates while retaining this symmetry.

\subsection{Proof overview}

The central difficulty is to recover independent coordinates without
assuming that gadgets can distinguish the colors. Our starting invariant
is the off-diagonal support of the \emph{ordinary matrix logarithm}.
For a positive definite Ising tensor, this support is a Boolean cube
\eqref{eq:ising-log}.

We begin with a positive definite interaction available through planar
reductions and choose one whose logarithmic support graph has as many
edges as possible. This maximality turns the freedom to construct
further gadgets into a restriction: no construction can preserve its edges and add a new
one. Interpolation then supplies matrices whose entries are powers
of a parameter, with exponent equal to distance in this graph.
The interpolation uses finitely many exact oracle queries.
Near zero, maximality forces the logarithm of each distance kernel
to have no off-diagonal entries outside the chosen graph. Further planar gadgets
restrict shortest paths and common neighbors. In particular, every
distance-two pair has exactly two common neighbors, which are
nonadjacent.

The bridge from these algebraic constraints to global structure is
\emph{matrix entropy}. The distance matrices are initially positive
definite only near one endpoint of the parameter range. By viewing
them as unique maximum-entropy completions of prescribed entries,
we extend positive definiteness throughout the interval.
The completion fixes the diagonal and the entries on graph edges,
leaving the other entries free. Its stationarity equations set the
corresponding entries of the logarithm to zero, precisely the
condition supplied by maximality. Uniqueness and analytic dependence
on the parameter then identify the completion with the distance
kernel throughout the interval.
This links
the counting problem to Gibbs manifolds and logarithmic sparsity
\cite{pavlov2024,pavlovSturmfelsTelen2024}. After normalization, the
same completion is quantum maximum-entropy inference: among density
matrices with specified expectation values, it selects the state of
largest von Neumann entropy \cite{watrous2018}.

Positive definiteness now brings in the classical connection between
distance kernels and Euclidean embeddings \cite{schoenberg1938}.
Graph distances become squared Euclidean distances; local squares
become parallelograms. Together with maximality, this forces a
Cartesian product of complete graphs.
The parallelogram rule lets us transport the cliques through one
vertex consistently across the graph. Their spans are orthogonal,
so they define separate coordinates; connectedness and this
transport show that every combination of coordinates occurs.
Known hardness results rule
out coordinates with three or more states and unequal-diagonal
two-state factors, leaving independent zero-field Ising coordinates.
To apply these hardness results, interpolation recovers individual
factors from the available partition functions. Once larger cliques
are excluded, maximality forces the logarithms of the remaining
positive definite interactions to separate into one-bit terms.
Exponentiation gives tensor decompositions in common coordinates,
which are also needed to compare auxiliary interactions later.
Thus entropy supplies a global positivity
principle, and geometry turns it into the product structure required
by the counting algorithm.

We next remove the positive-definiteness assumption. First, we classify
the possible zero patterns: unless hardness already follows, each
support component is entrywise positive, has complete bipartite
support, or is a zero singleton. After separating the elementary
local factors, we treat the two nontrivial cases. For an entrywise
positive component, positive definite auxiliary interactions recover
its Ising structure, even when the original matrix is singular or
indefinite. For a bipartite component, the interaction between the
two sets of states is a rectangular matrix. The symmetric theory
recovers Ising coordinates on each side separately; equality in a
Fourier norm bound then forces these coordinates to pair across the
interaction, yielding its tensor decomposition.
The two Gram matrices initially determine independent bases on the
two sides. Distance kernels distinguish Fourier degrees, while the
norm equality removes the remaining freedom to mix coordinates
within those degrees.
Finally, finite
moments recover the local factors separated off earlier and determine
which vertex weights preserve tractability, completing the unweighted
and weighted classifications.
Concretely, these moments record the total vertex weight above each
core state at each amplitude. A finite Vandermonde system recovers
those totals, and tractability requires them to agree across the
Ising coordinates.

\paragraph{Concurrent work.}

After we publicly released our initial draft~\cite{liuMeng2026draft},
Cai and Tang informed us of their
independently obtained results~\cite{caiTang2026}.
Cai and Tang prove the same classification for entrywise positive,
positive definite matrices; their v1 states the nonnegative extension
with its proof deferred to Part II. Their route recovers product
structure through an algebra of \emph{centered entrywise logarithms}
and distance-regular graph classification.

\section{Planar simulation tools}\label{sec:tools}

Our graphs may have parallel edges and loops.
Inequalities $M\ge0$ and $M>0$ refer to entries;
$M\succeq0$ and $M\succ0$ mean positive semidefinite and positive
definite, respectively. Write $I_q,J_q$ for the identity and all-ones
matrices and $\mathbf1_q$ for the all-ones vector, omitting the size
when clear.
Loops count twice towards the degree; $c(G)$ is the number of connected
components, including isolated vertices. Write $A\circ B$ for the
entrywise product, $A^{\circ t}$ for an entrywise power, and $A^t$ for
an ordinary matrix power.
For a symmetric matrix $A=\sum_\lambda\lambda P_\lambda$, where
$P_\lambda$ projects orthogonally onto the $\lambda$-eigenspace, we use
$f(A)=\sum_\lambda f(\lambda)P_\lambda$ \cite{higham2008}.
For $A\succ0$, this defines $\log A$ and $A^t$.
Perron--Frobenius theory gives a positive
matrix a positive leading eigenvector, unique up to scale
\cite{hornJohnson2013}.

\paragraph{The logarithm of an Ising tensor.}
Let $M=\gamma T(\boldsymbol\rho)$, with $\gamma>0$ and $0<\rho_r<1$,
and let $X_r$ be the permutation matrix flipping bit $r$.
For some scalar $\alpha$,
\begin{equation}\label{eq:ising-log}
 \log M=\alpha I+\sum_r\frac12\log\frac{1+\rho_r}{1-\rho_r}\,X_r.
\end{equation}
This follows by diagonalizing the two-state factors. The
off-diagonal support of $\log M$ is exactly the Boolean cube.

\paragraph{Planar gadgets.}
During reductions we use \emph{mixed instances}: each edge is labelled
by a matrix from a fixed finite set $\mathcal F$, and a vertex may
carry unary functions $u\in\mathcal F$, each multiplying a coloring's
weight by $u(\sigma(v))$. The fixed background factor $w_{\sigma(v)}$
is counted once per vertex. The resulting evaluation problem is
$\PlGH{\mathcal F,w}$; omitting $w$ means $w=\mathbf1$.

A two-terminal planar gadget has ordered terminals on its outer face.
Its signature entry $(i,j)$ is the sum over internal colors when the
terminal colors are fixed to $i,j$. If their color sets are $X,Y$,
respectively, the signature is an $|X|\times|Y|$ matrix, as when the
terminals lie on opposite sides of a bipartite interaction.
Swapping the terminals transposes the signature.

Background vertex weights are omitted at terminals, so substitution
into an input edge counts them once; explicit terminal unaries are
retained. With
$D=\operatorname{diag}(w)$, series composition gives $ADB$ and
parallel composition gives $A\circ B$. In particular, with unit
background, an $h$-edge path gives $A^h$, while $h$ parallel edges
give $A^{\circ h}$. These are the stretching and thickening gadgets
of \cite{cmy2026}; a loop gives the unary $i\mapsto A_{ii}$.
Substitutions preserve the cyclic order at input vertices.

\paragraph{Joint simulation.}
For fixed background weights $w$ and endpoint color sets, a finite
set $\mathcal B$ is \emph{available} from a fixed language $\mathcal F$ if
\[
 \PlGH{\mathcal F\cup\mathcal B,w}\le_T\PlGH{\mathcal F,w}.
\]
A gadget realizes a signature by substitution;
interpolation simulates it through finitely many oracle queries.
Until Section~\ref{sec:real},
all constants in counting problems are algebraic and computations take
place in fixed number fields. Uniformity for a rational parameter
means polynomial cost in its binary encoding length, denoted by
$\operatorname{bit}(x)$, as well as the instance size.

\subsection{Interpolation and functional identities}

\begin{lemma}[Product and spectral interpolation]\label{lem:interpolation}
Let $B,C$ be fixed algebraic matrices with the same row and column
index sets, and suppose $C$ is available. Assume
\[
 C_{ij}=0\ \Longrightarrow\ B_{ij}=0.
\]
For every $m\ge1$ and two sequences of positions
$(i_r,j_r)_{r=1}^m$, $(k_r,\ell_r)_{r=1}^m$, with repetitions allowed
and $C_{i_rj_r}C_{k_r\ell_r}\ne0$ for every $r$, assume also that
\[
 \prod_{r=1}^m C_{i_rj_r}=\prod_{r=1}^m C_{k_r\ell_r}
 \quad\Longrightarrow\quad
 \prod_{r=1}^m B_{i_rj_r}=\prod_{r=1}^m B_{k_r\ell_r}.
\]
Then $B$ is available. The same statement holds for unary functions.

With unit background weights, an available symmetric matrix $C$ also gives
\[
 \begin{aligned}
 C\succeq0&\quad\Longrightarrow\quad
     \Pi_C:=\sum_{\lambda>0}P_\lambda\ \text{is available},\\
 C\succ0&\quad\Longrightarrow\quad
     C^r\ \text{is available for every fixed }r\in\mathbb Q.
 \end{aligned}
\]
Here $P_\lambda$ is the spectral projector of $C$, so $\Pi_C$ is the
orthogonal projector onto $\operatorname{im}C$.
\end{lemma}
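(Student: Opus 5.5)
The plan is the standard thickening interpolation, followed by a stretching interpolation for the spectral parts. Let $I$ be an instance of $\PlGH{\mathcal F\cup\{B\},w}$, and let $E_B$ be its set of $B$-edges, with $|E_B|=n$.

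\textbf{Product interpolation.} For $h=1,\dots,N$, replace each $B$-edge by $h$ parallel $C$-edges. Parallel composition gives $C^{\circ h}$ on every such edge, and planarity and cyclic order are preserved. Group the colorings by the multiset of positions $(\sigma(u),\sigma(v))$ used on $E_B$. After that, group by the distinct values $\mu$ of the product $\prod_{e\in E_B}C_{\sigma(e)}$, keeping only nonzero values. Then
\[
 Z_h=\sum_\mu c_\mu\,\mu^h .
\]
Here $c_\mu$ is the total weight of the rest of the instance over colorings whose $C$-product equals $\mu$. Colorings with a zero $C$-entry contribute nothing. By the hypothesis $C_{ij}=0\Rightarrow B_{ij}=0$, they also contribute nothing to $Z_B(I)$.

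The product hypothesis ensures that, on each class $\mu$, the product $\prod_{e}B_{\sigma(e)}$ is a single value $\beta_\mu$. Comparing two colorings in the same class uses exactly the stated implication with $m=n$, where repeated positions are allowed. Hence
\[
 Z_B(I)=\sum_\mu c_\mu\beta_\mu .
\]
The number of classes is polynomial in $n$, since there are at most $\binom{n+q^2}{q^2}$ multisets. So after polynomially many queries, solving the Vandermonde system recovers every $c_\mu$.

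The hard step is making this exact and polynomial-time. The $\mu$ are products of at most $n$ algebraic numbers in a fixed number field, so the system is solved over that field with polynomial bit size. We need the $\mu$ themselves, and the matching $\beta_\mu$, to be computable. To get them, enumerate the multisets symbolically and compute $\mu$ and $\beta_\mu$ in the field; well-definedness of $\beta_\mu$ is exactly the hypothesis. The unary case is identical, using $h$ copies of the unary at a vertex.

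\textbf{Spectral statements.} With unit weights, series composition gives $C^h$. Write $C=\sum_\lambda\lambda P_\lambda$, so $C^h=\sum_\lambda\lambda^hP_\lambda$. Replacing each target edge by an $h$-path makes the instance value a polynomial
\[
 Z_h=\sum_{\vec\lambda}c_{\vec\lambda}\prod_e\lambda_e^{\,h}.
\]
Here $c_{\vec\lambda}$ is the value with $P_{\lambda_e}$ placed on edge $e$, and these values are again grouped by the product $\prod_e\lambda_e$. The eigenvalues are algebraic and positive, and the number of groups is polynomial. Interpolation over $h\ge1$ therefore recovers every group sum. Any function $f(C)=\sum f(\lambda)P_\lambda$ with $f$ multiplicative on the relevant products is then a linear combination of the group sums. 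For $\Pi_C$, take $f=\mathbf 1_{\lambda>0}$; the product $\prod_e f(\lambda_e)$ depends only on whether some $\lambda_e$ is zero, and zero products are separated from the rest. For $C^r$ with $r\in\mathbb Q$, take $f(\lambda)=\lambda^r$, so that $\prod_e\lambda_e^{\,r}=\mu^r$ is constant on each class.

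One caveat is that $\lambda^r$ may not lie in the base field. It is still algebraic, so we work in a fixed extension, and the values stay computable.
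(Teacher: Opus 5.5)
Your proposal is correct and follows the paper's proof: you use thickening with Vandermonde interpolation grouped by products of nonzero $C$-entries and well-defined target products $\beta_\mu$, and then stretching with grouping by products of eigenvalues, using target weights $1$ for $\Pi_C$ and $\mu^r$ for $C^r$. The one slip is calling the eigenvalues positive when only $C\succeq0$ is assumed; as you essentially note, tuples containing a zero eigenvalue vanish for every $h\ge1$ and also vanish in $\Pi_C$, so only the positive-product groups need to be recovered.
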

\begin{proof}
For product interpolation, fix a mixed instance $\Omega$ with $m\ge1$ marked occurrences of $B$.
Let $s$ be the number of distinct nonzero entries of $C$; if $s=0$,
both matrices are zero and the claim is immediate.
Replacing each marked occurrence by $h$ parallel copies of $C$ gives
\[
 y_h=\sum_\lambda a_\lambda\lambda^h,\qquad
 \#\{\lambda\}\le\binom{m+s-1}{s-1}.
\]
Here $\lambda$ ranges over length-$m$ products of nonzero $C$-entries,
and $a_\lambda$ sums the remaining weights of the corresponding
colorings. The hypotheses assign a well-defined target product
\[
 b_\lambda=\prod_{r=1}^m B_{i_rj_r}
 \quad\text{whenever}\quad
 \lambda=\prod_{r=1}^m C_{i_rj_r}\ne0.
\]
Colorings using a zero entry of $C$ also contribute zero for $B$.
Vandermonde interpolation at consecutive positive integers $h$
recovers the $a_\lambda$, and the desired value is
\[
 Z(\Omega)=\sum_\lambda a_\lambda b_\lambda.
\]
For unary functions, replace parallel edges by repeated unary factors.

For the spectral assertion, use unit background weights and
replace marked edges by paths of
length $h$ and expand $C^h=\sum_{\lambda>0}\lambda^hP_\lambda$.
The same Vandermonde calculation now groups products of positive
eigenvalues. For these groups the target products are
\[
 b_\lambda=1\quad\text{for }\Pi_C,
 \qquad b_\lambda=\lambda^r\quad\text{for }C^r.
\]
Zero eigenspaces contribute zero to the projector; the power case
has $C\succ0$. The fixed matrix size bounds all product counts,
query sizes, and exact bit lengths polynomially.
\end{proof}

In particular, from $C\ge0$ we obtain its zero--one support matrix
and the zero--one indicators of its smallest positive and largest
entries. Positive unaries admit
fixed rational powers. Known nonzero scalar multiples are harmless,
contributing one scalar for each marked occurrence.

The next lemma turns identities of matrix-valued functions into
finite reductions. This lets us extract a distance or a tensor factor
from a family of gadgets without evaluating the functions near a limit.

\begin{lemma}[Transfer of product identities]\label{lem:transfer}
Let $F(t)$ be a polynomial matrix with fixed algebraic coefficients,
or $F(t)=A^t$ for a fixed algebraic $A\succ0$. Let $B$ be a fixed
algebraic matrix with the same row and column index sets as $F(t)$.
Suppose there is an integer $n_0\ge1$ such that
\[
 F(n)>0\quad(n\ge n_0),
\]
and these integer samples are available by one algorithm of cost
$\operatorname{poly}(N+n)$, where $N$ is the instance size.
For every $m\ge1$ and every two sequences of entry positions,
assume
\[
 \prod_{r=1}^m F_{i_rj_r}(t)\equiv
 \prod_{r=1}^m F_{k_r\ell_r}(t)
 \quad\Longrightarrow\quad
 \prod_{r=1}^m B_{i_rj_r}=\prod_{r=1}^m B_{k_r\ell_r}.
\]
Here $\equiv$ means equality as functions of $t$. Then $B$ is available.

For a target $B(x)$ with rational parameter $x$, require the same
implication for each $x$. Let $K$ contain the fixed source data.
Suppose $B(x)$ is encoded in effective number fields $K_x\supseteq K$
with polynomial-time arithmetic and, for a fixed $c$,
\[
 [K_x:K]\le c,\qquad
 \operatorname{bit}(B(x))+\operatorname{time}(B(x))
       \le\operatorname{poly}(\operatorname{bit}(x)).
\]
Here $\operatorname{bit}$ and $\operatorname{time}$ count the encoding
size and construction time, including the field presentation.
Then $B(x)$ is available uniformly in $\operatorname{bit}(x)$.
\end{lemma}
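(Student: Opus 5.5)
The approach mirrors the product interpolation of Lemma~\ref{lem:interpolation}, with the parameter $h$ of the thickening gadget replaced by the integer parameter $n$ of $F$. Fix a mixed instance $\Omega$ with $m\ge1$ marked occurrences of $B$. Replacing each marked edge by a copy of $F(n)$, for $n\ge n_0$, gives an oracle-computable value
\[
 y_n=\sum_{\kappa}a_\kappa\prod_{r=1}^m F_{i_rj_r}(n).
\]
Here $\kappa$ ranges over the colorings of the marked endpoints, and $a_\kappa$ collects the remaining weight. Positivity $F(n)>0$ guarantees that no entry vanishes at the samples, so every coloring class genuinely contributes.

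The key step is to group the terms by the function $\phi_\kappa(t)=\prod_r F_{i_rj_r}(t)$ up to identity in $t$. By the hypothesis, each equivalence class carries a well-defined target product $b_\phi=\prod_r B_{i_rj_r}$, and the desired value is $Z(\Omega)=\sum_\phi A_\phi b_\phi$, where $A_\phi$ sums the $a_\kappa$ over the class. We must recover the $A_\phi$ from the samples $y_n$.

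\emph{Case $F(t)=A^t$.} Diagonalize $A=\sum_\mu \mu P_\mu$. Then each entry is $F_{ij}(t)=\sum_\mu (P_\mu)_{ij}\mu^t$, and each $\phi$ is an exponential polynomial $\sum_\nu c_\nu\nu^t$ in products $\nu$ of $m$ eigenvalues. There are polynomially many such $\nu$, at most $\binom{m+q'-1}{q'-1}$ for $q'$ distinct eigenvalues. Vandermonde interpolation at $n=n_0,\dots,n_0+R-1$ recovers $\sum_\phi A_\phi c_{\phi,\nu}$ for every $\nu$.

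\emph{Polynomial case.} Expand in monomials $t^e$, with $e\le m\deg F$, and interpolate in the same way.

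In both cases, distinct classes $\phi$ are distinct functions, but their coefficient vectors may overlap. The obstacle is therefore solving for the $A_\phi$ from these linear combinations, which requires the vectors $(c_{\phi,\nu})_\nu$ to be linearly independent; they need not be.

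To resolve this, I would exploit the multiplicative structure. The $\phi$ are products of $m$ factors drawn from the finitely many distinct entry functions $f_1,\dots,f_s$. Group the marked positions by which entry function they use, recording $\phi=\prod_k f_k^{e_k}$ with $\sum e_k=m$. A second parameter then makes the system solvable: replace each marked edge by $h$ parallel copies of $F(n)$, which gives $\prod_k f_k(n)^{he_k}$. Varying $h$ separates the distinct values $\phi(n)$ at a fixed $n$, as in Lemma~\ref{lem:interpolation}. Distinct functions $\phi\not\equiv\phi'$ agree at only finitely many integers, because they are analytic for $t$ large and algebraic in the relevant sense. So some $n$ among polynomially many candidates separates all pairs simultaneously. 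At such an $n$, apply product interpolation with $C=F(n)$, whose products now separate exactly the classes; this recovers each $A_\phi$. The bound on the number of candidates is the delicate point. Each difference $\phi-\phi'$ is an exponential polynomial with polynomially many terms, so it has polynomially many integer zeros: for polynomials this bound is the degree, and for exponential polynomials with positive bases it follows from the Descartes-type rule. A union bound over pairs then gives a polynomial search range.

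For the uniform statement, run the same algorithm for each rational $x$. The target products $b_\phi$ are computed in $K_x$, whose degree over $K$ is bounded by $c$ and whose arithmetic is polynomial in $\operatorname{bit}(x)$. The interpolation data do not depend on $x$ at all, so the total cost is polynomial in $N+\operatorname{bit}(x)$.
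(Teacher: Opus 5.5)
Your proposal is correct and is essentially the paper's proof. After the abandoned direct-interpolation detour, you choose an integer sample $n\ge n_0$ in a polynomial-size window that avoids the polynomially many points where two nonidentical length-$m$ product functions coincide (degree bound for polynomial $F$, exponential-sum zero bound for $A^t$), run the length-$m$ product interpolation of Lemma~\ref{lem:interpolation} with $C=F(n)$, and note that only the target products depend on $x$. The one step you leave implicit is how the reduction finds such a separating sample; the paper takes the $n$ in the window maximizing the number of distinct numerical values of these products, and these maximizers are exactly the nonexceptional samples.
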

\begin{proof}
Fix an input $\Omega$ with $m\ge1$ marked occurrences of $B$;
for $m=0$ there is nothing to replace. We choose an integer sample
$F(n_m)$ that preserves the product relations needed for this input.

\emph{1. Bound coincidences between distinct product functions.}
Let $E$ index the matrix entries, write $F_e=F_{ij}$ and $B_e=B_{ij}$
for $e=(i,j)$, and put
\[
 \begin{gathered}
 \mathcal I_m=\left\{\alpha\in\mathbb Z_{\ge0}^{E}:
                         \sum_{e\in E}\alpha_e=m\right\},
 \qquad P_m=|\mathcal I_m|=\binom{m+|E|-1}{|E|-1},\\
 p_\alpha(t)=\prod_{e:\alpha_e>0}F_e(t)^{\alpha_e},
 \qquad b_\alpha=\prod_{e:\alpha_e>0}B_e^{\alpha_e}.
 \end{gathered}
\]
Every nonzero difference $p_\alpha-p_\beta$ has at most $D_m$ real
zeros, where
\[
 D_m=\begin{cases}
 m\delta,&F\text{ is polynomial, }\delta=\max_{i,j}\deg F_{ij},\\[2pt]
 \binom{m+s-1}{s-1}-1,&F(t)=A^t,\ s=\#\{\text{distinct eigenvalues of }A\}.
 \end{cases}
\]
The first bound is the polynomial degree bound. For the second,
expand in the positive eigenvalues of $A$: there are at most
$\binom{m+s-1}{s-1}$ distinct exponential bases. A nonzero sum of
$k$ real exponentials with distinct exponents has at most $k-1$
zeros, by induction using Rolle's theorem.

\emph{2. Choose a sample and apply product interpolation.}
Among all pairs $\alpha,\beta\in\mathcal I_m$, there are at most
$\binom{P_m}{2}D_m$ exceptional real points where nonidentical
product functions agree. Hence the integer interval
\[
 I_m=\left\{n_0,\ldots,n_0+\binom{P_m}{2}D_m\right\}
\]
contains a nonexceptional sample. Evaluate the products and choose
\[
 n_m\in\operatorname*{arg\,max}_{n\in I_m}
                \#\{p_\alpha(n):\alpha\in\mathcal I_m\}.
\]
Identical functions agree at every sample, while an extra coincidence
reduces this count. Thus, for every $\alpha,\beta\in\mathcal I_m$,
\[
 p_\alpha(n_m)=p_\beta(n_m)
 \ \Longleftrightarrow\ p_\alpha\equiv p_\beta
 \ \Longrightarrow\ b_\alpha=b_\beta.
\]
The last implication is the hypothesis of this lemma. Since
$F(n_m)>0$, the length-$m$ Vandermonde construction in the proof of
Lemma~\ref{lem:interpolation}, with $C=F(n_m)$, evaluates $\Omega$.
The bounds $P_m,D_m,n_m=\operatorname{poly}(m)$ ensure polynomially
many exact comparisons and queries of polynomial size. Sample
entries and products have polynomial bit length, so composing with
the assumed simulation of $F(n_m)$ proves availability of $B$.

\emph{3. Uniformity for $B(x)$.}
The sample $n_m$, interpolation bases, and recovered coefficients
depend only on $F$ and $\Omega$. Only the target products change to
\[
 b_\alpha(x)=\prod_{e:\alpha_e>0}B_e(x)^{\alpha_e}.
\]
The stated field, construction-time, and encoding bounds therefore
give total cost $\operatorname{poly}(|\Omega|+\operatorname{bit}(x))$.
\end{proof}

Parameter-dependent spectral calculations may use different bounded-degree
extensions $K_x$ of a fixed number field $K$. Whenever an inner answer
$z_x$ belongs to $K$, the field trace (the sum of its algebraic
conjugates) returns it to that encoding before an outer interpolation:
$z_x=[K_x:K]^{-1}\operatorname{Tr}_{K_x/K}(z_x)$.
Degree and coefficient-size bounds keep this operation
polynomial-time in each sample field.

\subsection{Roots, quotients, and background weights}

\begin{lemma}[One-root restriction]\label{lem:root}
Fix an algebraic symmetric matrix $A$ on $[q]$, algebraic weights
$w\in\mathbb R_{>0}^q$, and a subset $X\subseteq[q]$.
There are fixed rooted planar graphs $(F_1,r_1),\ldots,(F_s,r_s)$
and coefficients $c_1,\ldots,c_s\in\mathbb Q(A,w)$ such that every
connected rooted planar graph $(G,r)$ satisfies
\[
 \sum_{\substack{\sigma:V(G)\to[q]\\\sigma(r)\in X}}
   \prod_{v\in V(G)}w_{\sigma(v)}
   \prod_{uv\in E(G)}A_{\sigma(u),\sigma(v)}
 =\sum_{j=1}^s c_j Z_{A,w}(G\vee F_j).
\]
Here $G\vee F_j$ identifies $r$ with $r_j$. In particular, for every
support-component vertex set $S$ of $A$,
\[
 \PlGH{A[S],w|_S}\le_T\PlGH{A,w}.
\]
\end{lemma}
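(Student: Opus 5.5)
The plan is to express the restricted sum as a pairing between a root-state vector of $G$ and a root unary of the attached graph, and then to choose the attached graphs so that their unaries span every function that this pairing can detect. For a rooted planar graph $(F,r)$, let $h_F:[q]\to\mathbb R$ be its \emph{root unary}: $h_F(i)$ is the weighted sum over colorings of $F$ with $\sigma(r)=i$, with the background weight of the root omitted. Gluing at the root gives
\[
 Z_{A,w}(G\vee F)=\sum_{i}w_i\,h_G(i)\,h_F(i),
\]
and the left-hand side of the claimed identity equals $\sum_{i\in X}w_i h_G(i)$. Identifying two planar graphs at a single vertex keeps the result planar, so every $G\vee F_j$ is a legitimate input. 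Gluing two rooted graphs at their roots multiplies their root unaries, and the one-vertex graph gives $h\equiv1$.

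\emph{Step 1: an equivalence relation on colors.} Call colors $i,j$ equivalent if $h_F(i)=h_F(j)$ for every rooted planar graph $(F,r)$. Since $[q]$ is finite, there are finitely many classes $C_1,\ldots,C_p$. For each pair of distinct classes, fix one rooted planar graph whose root unary separates them; this gives a fixed finite list of gadgets.

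\emph{Step 2: class indicators.} For each class $C$, form the product
\[
 e_C=\prod_{C'\neq C}\frac{h_{F_{C,C'}}-h_{F_{C,C'}}(C')}{h_{F_{C,C'}}(C)-h_{F_{C,C'}}(C')}.
\]
Expanding this product as a polynomial in the separating unaries, each monomial is the root unary of a root-glued product of the fixed gadgets, with constant term given by the one-vertex graph. Hence $e_C$ is a fixed $\mathbb Q(A,w)$-linear combination of root unaries of fixed rooted planar graphs. Consequently every function that is constant on classes has such a representation.

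\emph{Step 3: the target function.} Because $G$ is itself a rooted planar graph, $h_G$ is constant on each class, say with value $h_G(C)$. Therefore
\[
 \sum_{i\in X}w_ih_G(i)=\sum_C h_G(C)\,w(X\cap C),
 \qquad
 Z_{A,w}(G\vee F)=\sum_C h_G(C)\,h_F(C)\,w(C).
\]
Since $w>0$, we have $w(C)>0$, and we may take the class-constant function $g(C)=w(X\cap C)/w(C)$. Writing $g=\sum_C g(C)e_C$ and expanding into the fixed gadgets yields the rooted graphs $(F_j,r_j)$ and the coefficients $c_j\in\mathbb Q(A,w)$. These depend only on $A$, $w$ and $X$, not on $G$. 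With this choice of $g$, the second display equals the first, which proves the identity.

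The step I expect to be the main obstacle is making sure the coefficients do not depend on $G$, since the equivalence relation is defined using all rooted planar graphs. This is resolved by observing that $h_G$ is automatically constant on classes. As a result, only the finitely many separating gadgets enter, and the coefficients come from a Lagrange-type expansion over the finitely many class values, which are algebraic.

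\emph{The reduction to $\PlGH{A[S],w|_S}$.} Take $X=S$ for a support component $S$. For a connected input $G$ with root $r$, a coloring with $\sigma(r)\in S$ and nonzero weight keeps every vertex in $S$, because each edge weight must be positive. So the restricted sum equals $Z_{A[S],w|_S}(G)$. This holds also when $G$ is a single vertex, where it gives $w(S)$. For a general input, the partition function is the product over connected components. Each component is evaluated by $s$ queries $Z_{A,w}(G_c\vee F_j)$ combined with the fixed coefficients, and these coefficients are exact numbers in a fixed number field. This gives the Turing reduction.
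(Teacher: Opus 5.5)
Your proof is correct, but it reaches the coefficients by a different route from the paper. The paper never identifies the span $U$ of root unaries. It takes the orthogonal projection $\Pi\mathbf 1_X$ of $\mathbf 1_X$ onto $U$ in the inner product $\langle f,g\rangle_w=\sum_i w_if_ig_i$, writes this projection in a fixed spanning set of rooted signatures by solving a positive definite Gram system, and uses only that $\mathbf 1_X-\Pi\mathbf 1_X$ is orthogonal to every $f_G$.

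You instead use that gluing at the root makes $U$ a unital algebra. Lagrange products of finitely many separating gadgets then show that $U$ is exactly the algebra of functions constant on the separation classes. The required function is then explicitly the $w$-conditional expectation $g(C)=w(X\cap C)/w(C)$, which is precisely the paper's $\Pi\mathbf 1_X$.

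The paper's argument is shorter, uses at most $q$ gadgets, and needs no multiplicative structure. However, its solvability comes from positive definiteness of the Gram matrix, which uses real entries and $w>0$. Your construction uses more gadgets (products over subsets of separators), though still finitely many, and needs only $w(C)\neq 0$. In return it gives explicit gadgets and coefficients. It also makes the paper's subsequent remark transparent: the simulated function agrees with $\mathbf 1_X$ as a genuine unary exactly when $X$ is a union of classes.

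One minor wording point: for a possibly signed $A$, the support-component step should say nonzero rather than positive edge weights.
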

\begin{proof}
Let $f_F(i)$ be the conditional value of a rooted planar graph $F$
with root color $i$ and root weight omitted. Let $U$ span these vectors,
with inner product $\langle f,g\rangle_w=\sum_iw_if_ig_i$.
Project the indicator $\mathbf1_X$ onto $U$, writing the projection as
$\sum_jc_jf_{F_j}$ in a fixed rooted-signature basis. For the rooted
input $G$,
\[
 \sum_{i\in X}w_if_G(i)
   =\sum_jc_j\langle f_{F_j},f_G\rangle_w
   =\sum_jc_j Z_{A,w}(G\vee F_j).
\]
Root identification preserves planarity.
The basis and coefficients are
fixed; the latter are obtained by a positive definite Gram system.
Restricting a root to a target component restricts every nonzero
coloring of a connected input to that component.
\end{proof}
This is a reduction for one restricted root. It does not make
$\mathbf1_X$ available as a unary that can be imposed independently
at arbitrarily many vertices: the projection need only agree with
$\mathbf1_X$ when paired with rooted signatures.

\begin{lemma}[Tensor Gram matrices and weight removal]\label{lem:weights-remove}
Let $v_1,\ldots,v_q\in\mathbb R^n$ be nonzero and pairwise
nonproportional. For $p=\max(1,q-1)$, the tensors
$v_1^{\otimes p},\ldots,v_q^{\otimes p}$ are linearly independent.

Let $B=B^{\mathsf T}\in\mathbb R^{q\times q}$ and
$w\in\mathbb R_{>0}^q$ have fixed algebraic entries. Suppose the rows
of $B$ are nonzero and pairwise nonproportional, and put
\[
 D=\operatorname{diag}(w),\qquad
 K=(BDB)^{\circ p}\succ0,\qquad u_r(i)=w_i^r.
\]
For every fixed finite language $\mathcal F$ containing $B$ and
every fixed $r\in\mathbb Q$,
\[
 \begin{aligned}
 \PlGH{\mathcal F\cup\{D^{-1},u_r\},w}
     &\le_T\PlGH{\mathcal F,w},\\
 \PlGH{\mathcal F,\mathbf1}&\le_T\PlGH{\mathcal F,w}.
 \end{aligned}
\]
Thus weight removal retains all constraints in $\mathcal F$.
\end{lemma}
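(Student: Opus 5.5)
The plan has three parts: the tensor claim, a construction of $K$ from planar gadgets, and then spectral interpolation with the inverse power $K^{-1/p}$ to cancel the background weights.

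\emph{The tensor claim.} Suppose $\sum_i c_i v_i^{\otimes p}=0$ with some $c_i\neq0$. For each $j$ we want a product of linear functionals $\prod_{k=1}^{p}\ell_k$ that is nonzero on $v_j$ and vanishes on every $v_i$ with $i\ne j$. For each $i\ne j$, choose a functional $\ell_i$ with $\ell_i(v_i)=0\ne\ell_i(v_j)$; this exists because $v_i,v_j$ are nonproportional. There are $q-1\le p$ such functionals, so we pad the product with copies of a functional nonzero on $v_j$. Pairing the relation with $\ell_{1}\otimes\cdots\otimes\ell_p$ then kills every term except the $j$-th and gives $c_j=0$, a contradiction.

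\emph{Construction of $K$.} Apply the claim to the vectors $v_i=D^{1/2}B e_i$, the columns of $D^{1/2}B$. These are nonzero and pairwise nonproportional because the rows of $B$ are. The Gram matrix of the $v_i^{\otimes p}$ is
\[
 \bigl(\langle v_i,v_j\rangle^p\bigr)_{ij}=(BDB)^{\circ p}=K ,
\]
so $K\succ0$ by the tensor claim. The matrix $K$ is available by a planar gadget: a length-two path $B\,D\,B$ has its middle vertex carrying background weight $w$, and $p$ such paths in parallel give $K$. Both compositions are series and parallel, so planarity is preserved.

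\emph{Spectral interpolation of $K^{-1/p}$.} The spectral part of Lemma~\ref{lem:interpolation} was stated only for unit background weights, because a path of length $h$ inserts $D$ between consecutive factors. I would handle this by working with $\widetilde K=D^{1/2}KD^{1/2}$: a path of $h$ copies of $K$ yields $K(DK)^{h-1}=D^{-1/2}\widetilde K^{h}D^{-1/2}$, and $\widetilde K\succ0$. Vandermonde interpolation over the eigenvalues of $\widetilde K$, exactly as in that proof, then makes $D^{-1/2}f(\widetilde K)D^{-1/2}$ available for any $f$ with $f(\lambda)$ algebraic and computable, in particular $f(\lambda)=\lambda^r$. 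Taking $r=-1$ gives $D^{-1/2}\widetilde K^{-1}D^{-1/2}=D^{-1}K^{-1}D^{-1}$. Using this edge in series with $K$ through weighted middle vertices, that is $K\,D\,(D^{-1}K^{-1}D^{-1})\,D\,K$, returns $K$, so the inverse alone does not isolate $D^{-1}$. Instead I would use the diagonal behaviour of the pair $\widetilde K,\widetilde K^{-1}$:
\begin{itemize}
\item A path gadget whose endpoints are identified, i.e.\ a loop, produces unaries $i\mapsto [D^{-1/2}f(\widetilde K)D^{-1/2}]_{ii}$.
\item Choosing $f\equiv1$ (the case $r=0$) gives the matrix $D^{-1}$ itself as an available edge signature $D^{-1/2}ID^{-1/2}$. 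This is exactly the required $D^{-1}$, and it is also what a degenerate ``length-zero path'' of weighted vertices should be.
\item Its loop gives the unary $u_{-1}(i)=w_i^{-1}$.
\item Positive unaries admit fixed rational powers by product interpolation, which yields every $u_r$.
\end{itemize}
All bit-length bounds follow as in Lemma~\ref{lem:interpolation}.

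\emph{Weight removal.} Attach $u_{-1}$ at every vertex of an instance of $\PlGH{\mathcal F,\mathbf 1}$. This cancels each background factor $w_{\sigma(v)}$ and reduces the unweighted problem to the weighted one while keeping $\mathcal F$ intact.

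The main obstacle is making the spectral interpolation with a nonunit background rigorous: the Vandermonde system must interpolate in $h$ the family $D^{-1/2}\widetilde K^{h}D^{-1/2}$ and legitimately extrapolate to $h=0$, and I expect this is precisely why $K\succ0$, rather than mere invertibility, is needed.
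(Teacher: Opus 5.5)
Your proposal is correct and follows the paper's proof essentially step for step. It uses the same products of linear forms for independence, the same gadget $K=(BDB)^{\circ p}$ as the Gram matrix of the weighted rows, the identity $K(DK)^{h-1}=D^{-1/2}(D^{1/2}KD^{1/2})^hD^{-1/2}$ with Vandermonde extrapolation to $h=0$ to obtain $D^{-1}$, and loops plus product interpolation for $u_r$. One small correction: the grouped Vandermonde coefficients only support targets consistent on coinciding eigenvalue products, such as $f\equiv1$ or $f(\lambda)=\lambda^r$, not an arbitrary $f$. Since only $f\equiv1$ is needed, the $r=-1$ detour can simply be dropped.
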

\begin{proof}
For $q=1$ independence is immediate. Otherwise choose linear forms
$\ell_{ij}$ vanishing at $v_j$ but not $v_i$.
Their products $\prod_{j\ne i}\ell_{ij}$ separate the tensor powers.

For weight removal, the matrix $BDB$ is the Gram matrix of the weighted rows of $B$;
entrywise power $p$ gives the Gram matrix of their tensor powers.
By the independence just proved, the gadget $K=(BDB)^{\circ p}$
is positive definite.
A chain of $h$ copies has signature
\[
 K(DK)^{h-1}=D^{-1/2}(D^{1/2}KD^{1/2})^hD^{-1/2}.
\]
Spectral interpolation extrapolates from positive integer $h$ to
$h=0$, supplying $D^{-1}$.
A loop contributes $w_i(D^{-1})_{ii}=1$, cancelling the background
without changing any other constraint in $\mathcal F$.
Finally, product interpolation supplies $D^r$; its loops realize
the unary $u_r(i)=w_i^r$.
\end{proof}

If $K_1,\ldots,K_s$ are the identical-row classes of $A$, retain one
row and column per class to obtain $R$, and put
$\bar w_a=\sum_{i\in K_a}w_i$. Direct summation gives
\begin{equation}\label{eq:twins}
 Z_{A,w}(G)=Z_{R,\bar w}(G).
\end{equation}
Zero rows matter only on isolated input vertices. For a zero--one
matrix, the remaining distinct rows are automatically nonproportional,
so Lemma~\ref{lem:weights-remove} transfers unweighted quotient
hardness to every positive weighting. We call identical rows
\emph{twins}; merging them uses their numerical entries.

\subsection{Hardness endpoints and the FKT algorithm}
We use two known planar hardness results: a strictly positive,
invertible symmetric $2\times2$ matrix with unequal diagonal entries
is hard \cite{guoWilliams2020,cmy2026}; and $I_s+J_s$ is hard for
$s\ge3$, by the planar Tutte classification \cite{vertigan2005}.
The latter follows from
$Z_{I_s+J_s}(G)=s^{c(G)}T_G(s+1,2)$, where $T_G$ is the Tutte
polynomial.
Equal-diagonal Boolean matrices are evaluated by the FKT algorithm,
using the high-temperature expansion
\[
 Z_{W(\rho)}(G)=2^{|V(G)|}
 \left(\frac{1+\rho}{2}\right)^{|E(G)|}
 \sum_{\substack{F\subseteq E(G)\\\deg_F(v)\text{ even}}}
       \left(\frac{1-\rho}{1+\rho}\right)^{|F|}.
\]
Replacing each vertex by a planar matching gadget turns the even-degree
constraint into a weighted perfect-matching sum, evaluated by the
FKT algorithm.
This remains valid for $\rho>1$; loops can equivalently be removed
first, since their original diagonal contribution is $1$.

\section{Logarithmic support, entropy, and geometry}\label{sec:geometry}

Fix a source language and a color set $X$, which may be a prescribed
endpoint domain. Let $\mathcal A$ be the
family generated by symmetric planar gadget compositions and the
interpolations of Section~\ref{sec:tools}, with unit background
weights. Each finite subfamily is available: compose the
finitely many reductions in its construction. Assume $\mathcal A$
contains a connected nonnegative positive definite matrix.

For $H\succ0$ let $R_H$ be the graph on $X$ with an edge $ij$
when $i\ne j$ and $(\log H)_{ij}\ne0$. Choose a connected $M\ge0$, $M\succ0$, in
$\mathcal A$ \emph{maximizing $|E(R_M)|$}, and write $L=\log M$, $R=R_M$.
The maximum exists because there are finitely many possible supports
on $X$. The graph $R$ is connected,
since otherwise $L$, and hence $M=e^L$, would be block diagonal.

\subsection{From maximum support to distance kernels}

We may arrange that every edge entry of $L$ is positive. Indeed, for
an even integer $s$ and small positive rational $t$, the matrix
$K_s(t)=(M^t)^{\circ s}$ is nonnegative and, by the Schur product
theorem, positive definite. Moreover,
\[
 (\log K_s(t))_{ij}=t^sL_{ij}^s+O(t^{s+1})\qquad(ij\in E(R)).
\]
All old edges persist positively, and maximality forbids new ones.
Replace $M$ by one such matrix.

\begin{lemma}[Distance kernels]\label{lem:distance}
Let $H\succ0$ be a fixed algebraic matrix on $X$, with $R_H$
connected and
\[
 (\log H)_{ij}>0\qquad(ij\in E(R_H)).
\]
For every $x\in\mathbb Q_{>0}$, the matrix
\[
 E_{R_H}(x)=\bigl(x^{d_{R_H}(i,j)}\bigr)_{i,j\in X}
\]
is available from $H$, uniformly in $\operatorname{bit}(x)$.

For the maximizing support $R=R_M$ chosen above, there is an
$\varepsilon>0$ such that, for every real $0<x<\varepsilon$,
$E_R(x)\succ0$ and
\begin{equation}\label{eq:log-sparse}
 (\log E_R(x))_{ij}=0\qquad(i\ne j,\ ij\notin E(R)).
\end{equation}
\end{lemma}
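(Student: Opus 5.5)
The plan is to realize $E_{R_H}(x)$ by applying Lemma~\ref{lem:transfer} to a fractional power of $H$. The relevant product relations between entries of $H^t$ are read off from their orders of vanishing at $t=0$. Put $L_H=\log H$, so $H^t=e^{tL_H}$ is real-analytic in $t$ with $H^0=I$. For $i\ne j$ with $d=d_{R_H}(i,j)$, expanding the exponential gives
\[
 (H^t)_{ij}=c_{ij}t^{d}+O(t^{d+1}),\qquad
 c_{ij}=\frac{1}{d!}\sum_{\text{shortest paths }i=v_0,\dots,v_d=j}\ \prod_{r}(L_H)_{v_{r-1}v_r}>0 .
\]
The expansion has this form because off-diagonal entries of $L_H^k$ vanish when $k<d$ (walks must use $R_H$-edges). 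Moreover $c_{ij}>0$ because every edge entry of $L_H$ is positive. Diagonal entries tend to $1$.

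Hence there is a small rational $\tau>0$ such that $A:=H^\tau$ is entrywise positive. By the spectral part of Lemma~\ref{lem:interpolation}, $A$ is available, and so are its powers $A^n$ via paths. Since $A>0$, we get $A^n>0$ for all $n\ge1$. I will therefore apply Lemma~\ref{lem:transfer} with $F(t)=A^t=H^{\tau t}$, $n_0=1$, and target $B=E_{R_H}(x)$.

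The hypothesis to check is the following. Suppose $\prod_r F_{i_rj_r}(t)\equiv\prod_r F_{k_r\ell_r}(t)$ as functions of $t$. By analyticity, both sides then have the same order of vanishing at $t=0$, namely $\sum_r d(i_r,j_r)=\sum_r d(k_r,\ell_r)$. This gives exactly $\prod_r x^{d(i_r,j_r)}=\prod_r x^{d(k_r,\ell_r)}$. The target entries are powers of a rational $x$, with bit length linear in $\operatorname{bit}(x)$, and lie in $K_x=\mathbb Q$. So uniformity follows from the last part of Lemma~\ref{lem:transfer}.

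For the second assertion, note $E_R(x)=I+xA_R+O(x^2)$, where $A_R$ is the adjacency matrix of $R$. Hence $E_R(x)\succ0$ for small real $x$, and $\log E_R(x)=xA_R+O(x^2)$ is real-analytic there. Shrinking $\varepsilon$ ensures every entry $(\log E_R(x))_{ij}$ with $ij\in E(R)$ is nonzero, indeed positive, for $0<x<\varepsilon$. Thus $R\subseteq R_{E_R(x)}$.

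Now apply the first part with $H=M$, which is legitimate after the normalization making the edge entries of $L$ positive. For rational $x\in(0,\varepsilon)$, the matrix $E_R(x)$ is then available and belongs to $\mathcal A$. It is nonnegative, positive definite and connected. Maximality of $|E(R)|$ therefore forces $R_{E_R(x)}=R$, which is \eqref{eq:log-sparse} at every rational $x\in(0,\varepsilon)$. Each off-$R$ entry of $\log E_R(x)$ is real-analytic on $(0,\varepsilon)$ and vanishes on a dense set, so it vanishes identically. This gives the statement for all real $0<x<\varepsilon$.

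I expect the main obstacle to be the first part, namely meeting the positivity requirement $F(n)>0$ of Lemma~\ref{lem:transfer}. The hypothesis $H\succ0$ alone does not give a positive matrix: $H$ may have negative entries and a non-positive Perron direction, so $H^n$ itself need not be positive. Passing to a small fractional power $H^\tau$ resolves this, because the positive leading coefficients $c_{ij}$ make it entrywise positive. A second point to check is that the zero-order comparison stays valid after rescaling $t$ by $\tau$, which is immediate since orders of vanishing are unchanged.
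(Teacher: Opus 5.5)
Your proof is correct and follows the paper's argument. You apply Lemma~\ref{lem:transfer} to a power family of $H$, read $R_H$-distances off the orders of vanishing at $t=0$, and then use maximality plus analyticity for the sparsity identity. The obstacle you anticipate does not arise, and your detour through the fractional power $H^\tau$ is valid but unnecessary: $\log H$ has nonnegative off-diagonal entries with connected support, so shifting it by a scalar and using the exponential series gives $H^t>0$ for every $t>0$, which is why the paper takes $F(t)=H^t$ directly with samples realized by paths.
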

\begin{proof}
Put $K=\log H$. We apply Lemma~\ref{lem:transfer} to the source
family $F(t)=H^t$ and target $B=E_{R_H}(x)$.
For $h=d_{R_H}(i,j)$,
\[
 (H^t)_{ij}=\frac{(K^h)_{ij}}{h!}t^h+O(t^{h+1}),
 \qquad (K^h)_{ij}>0.
\]
The leading coefficient sums positive products over shortest paths;
diagonal entries have order zero. Orders add in products, so every
product identity for entries of $H^t$ preserves the sum of their
distances and hence the corresponding product of entries of $E_{R_H}(x)$.
Also $H^t>0$ for $t>0$: shift $K$ by a scalar to make it nonnegative
and use the exponential series. Lemma~\ref{lem:transfer} applies,
with integer samples realized by paths. The target powers of $x$
have bounded exponents, giving uniformity.

For the second assertion, take $H=M$, so $R_H=R$.
Since $E_R(0)=I$, it is positive definite near zero, and on an edge
$(\log E_R(x))_{ij}=x+O(x^2)$. For small rational $x$ it therefore
retains every edge of $R$. Maximality gives \eqref{eq:log-sparse};
analyticity extends the identity to real $x$ near zero.
\end{proof}

\begin{figure}[t]
\centering
\begin{tikzpicture}[every node/.style={font=\small},
 dot/.style={circle,draw,fill=white,inner sep=2pt}]
 \node[dot,fill=blue!30,label=left:$i$] (i) at (0,0) {};
 \node[dot,fill=blue!30,label=right:$j$] (j) at (3,0) {};
 \node[dot,label=above:$u$] (u) at (1.5,.85) {};
 \node[dot,label=below:$v$] (v) at (1.5,-.85) {};
 \draw (i)--(u)--(j)--(v)--(i); \draw[blue!60!black,thick] (u)--(v);
 \node[dot,label=below left:$0$] (o) at (6,-1) {};
 \node[dot,label=below right:$a$] (a) at (8,-1) {};
 \node[dot,label=above left:$b$] (b) at (6,1) {};
 \node[dot,label=above right:$a+b$] (c) at (8,1) {};
 \draw[blue!60!black,thick] (o)--(a) (b)--(c);
 \draw (o)--(b) (a)--(c);
 \draw (6.2,-1)--(6.2,-.8)--(6,-.8);
\end{tikzpicture}
\caption{Left: the Wheatstone gadget, with fixed terminal colors $i,j$
and summed internal colors $u,v$. Its central edge contributes the
factor $E_R(x)_{uv}$, which detects adjacency of the two common
neighbors in $R$. Right: an induced square of $R$ after the embedding
\eqref{eq:embedding}, with one vertex translated to the origin;
the opposite vertex is forced to be $a+b$.}
\label{fig:local}
\end{figure}

\begin{lemma}[Local rigidity]\label{lem:local}
Let $N\in\mathcal A$ satisfy
\[
 N\succ0,\qquad R_N=R,\qquad K=\log N,\qquad
 K_{uv}>0\quad(uv\in E(R)).
\]
For distinct $i,j\in X$, put $h=d_R(i,j)$ and let
$\mathcal P_{ij}$ be the set of shortest $i$--$j$ paths. Then
\[
 |\mathcal P_{ij}|=h!,\qquad
 \prod_{uv\in P}K_{uv}=\prod_{uv\in Q}K_{uv}
 \quad(P,Q\in\mathcal P_{ij}).
\]
Every distance-two pair has exactly two common neighbors, which
are nonadjacent. Every neighborhood is a disjoint union of cliques.
\end{lemma}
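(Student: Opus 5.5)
We plan to use maximality of $|E(R)|$ against auxiliary matrices formed from $N$ and the distance kernels $E_R(x)$ of Lemma~\ref{lem:distance}. The basic tool is a \emph{perturbation test}. Take a symmetric matrix $G(t)$ available for small rational $t$, with $G(0)=I$. Suppose $G(t)$ is positive definite near $0$ and its logarithm keeps every edge of $R$ with positive sign. Then maximality forbids any nonzero off-diagonal entry of $\log G(t)$ outside $E(R)$. Expanding $\log G(t)$ in powers of $t$ and reading off the lowest-order coefficient at a non-edge pair $ij$ will produce the combinatorial constraints.

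\textbf{Step 1 (path counting and equal products).} Compare $N^t$ with the distance kernel. For $h=d_R(i,j)$,
\[
 (N^t)_{ij}=\frac{t^h}{h!}\sum_{P\in\mathcal P_{ij}}\prod_{uv\in P}K_{uv}+O(t^{h+1}).
\]
Consider the product $G(t)=N^{t}\circ E_R(x)^{\circ(-1)}$, or more cleanly, the matrix obtained by product interpolation (Lemma~\ref{lem:interpolation}) that replaces each entry by its leading coefficient. Lemma~\ref{lem:transfer} lets us realize the entrywise ``leading coefficient times $x^{d}$'' matrix, since the relevant product identities are preserved. Its logarithm must again satisfy \eqref{eq:log-sparse}. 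We will compare this with $E_R(x)$, whose logarithm near $0$ has non-edge entries vanishing by \eqref{eq:log-sparse}. The idea is that $\log$ of $E_R(x)$ up to order $x^h$ has $(i,j)$ coefficient $c_h=\bigl(|\mathcal P_{ij}|-\text{(correction terms)}\bigr)$, and vanishing forces $|\mathcal P_{ij}|$ to match the count coming from $\log$ of the ``free'' series. Concretely, with $L_E=\log E_R(x)=xA+O(x^2)$ supported on $E(R)$ and the diagonal, we have $E_R(x)=e^{L_E}$. Matching $x^h$ coefficients at $(i,j)$ gives $1=(A'^h)_{ij}/h!$ plus contributions from higher-order edge coefficients. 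We would argue inductively on $h$ that those higher contributions cancel, yielding $|\mathcal P_{ij}|=h!$.

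For the equal-product claim, we would apply the same comparison to the weighted kernel with entries $\bigl(\sum_P\prod K\bigr)/h!$. Equality in the AM–GM / Cauchy–Schwarz step comparing the $K$-weighted and unweighted counts (both equal to forced values by maximality) forces all path products to coincide.

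\textbf{Step 2 (distance two, Wheatstone gadget).} For $d(i,j)=2$, Step 1 gives exactly two common neighbors $u,v$. To exclude $uv\in E(R)$, use the Wheatstone gadget of Figure~\ref{fig:local}, built from $E_R(x)$ edges. Its $(i,j)$ entry is $2x^2+(\text{adjacency of }u,v)\cdot 2x^3+\dots$, compared against other terms. We then apply the perturbation test: the $x^3$ coefficient of the logarithm at the pair $(i,j)$ must vanish. We compute that it vanishes only when $uv\notin E(R)$.

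\textbf{Step 3 (neighborhoods are disjoint unions of cliques).} Suppose $a,b$ are nonadjacent neighbors of $u$ and $b,c$ are nonadjacent neighbors of $u$. Then $a,b$ are at distance two with common neighbors $u$ and exactly one other vertex $u'$. If $a\sim c$ but $b\not\sim c$, the distance-two pairs $(a,b)$ and $(b,c)$ share $u$, and we use nonadjacency of common neighbors to derive a contradiction. Thus nonadjacency is an equivalence relation's complement within $N(u)$, i.e.\ adjacency within $N(u)$ is transitive, giving disjoint cliques.

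The main obstacle is Step 1: controlling the higher-order logarithm coefficients on edges so that the $x^h$ coefficient identity reduces exactly to $|\mathcal P_{ij}|=h!$. We would first try an induction on $h$ that uses the already established structure for smaller distances.
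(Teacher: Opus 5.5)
Your path count is sound, and the obstacle you flag does not arise. Every entry of $\log E_R(x)$ is $O(x)$, its diagonal is $O(x^2)$, and its edge entries are $x+O(x^2)$. So for $d_R(i,j)=h$ the only terms of order $x^h$ in $\exp(\log E_R(x))_{ij}$ are $h$-fold products of $xA_R$, and higher edge coefficients enter only at order at least $h+1$. Hence \eqref{eq:log-sparse} gives $|\mathcal P_{ij}|/h!=1$ directly, with no induction.

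The genuine gap is the equal-product claim. Write $a_P=\prod_{uv\in P}K_{uv}$. The leading-coefficient kernel $G(x)=\bigl(c_{ij}x^{d_R(i,j)}\bigr)$, with $c_{ij}=\sum_{P\in\mathcal P_{ij}}a_P/h!$, is indeed available by Lemma~\ref{lem:transfer}. But running the same comparison on it is a tautology: $\log G(x)$ has edge entries $K_{uv}x+O(x^2)$, so the $x^h$ coefficient of $\exp(\log G(x))_{ij}$ is again $\sum_Pa_P/h!=c_{ij}$. Neither comparison produces a second moment $\sum_Pa_P^2$, so there is nothing to feed into Cauchy--Schwarz.

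The missing idea is a Hadamard power. It is still nonnegative and, by the Schur product theorem, positive definite, so maximality applies to it. The paper takes $H=(N^t)^{\circ s}$ for even $s$, whose logarithmic edge entries are $t^sK_{uv}^s+O(t^{s+1})$. Comparing $t^{sh}$ coefficients gives
\[
 \Bigl(\frac{\sum_Pa_P}{h!}\Bigr)^s=\frac{\sum_Pa_P^s}{h!},
\]
and letting $s\to\infty$ yields the count and the equal weights simultaneously. In your setup $G(x)^{\circ2}$ does the same job. Its expansion gives $(\sum_Pa_P/h!)^2=\sum_Pa_P^2/h!$, which together with $|\mathcal P_{ij}|=h!$ is equality in Cauchy--Schwarz.

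The Wheatstone step is the paper's, but your expansion is wrong and the decisive computation is missing. Each internal color $u$ contributes $E_{iu}E_{uj}$ of order $d_R(i,u)+d_R(u,j)\ge2$. So $\Omega(E)_{ij}$ starts at $4x^4$ (from $u=v\in\{i,j,p,q\}$), not $2x^2$. The adjacency indicator $\varepsilon$ of the common neighbors $p,q$ first appears at $x^5$: $\Omega(E)_{ij}=4x^4+(8+2\varepsilon)x^5+O(x^6)$. This is nonzero whatever $\varepsilon$ is, so the test must be run on the logarithm, not on a coefficient of $\Omega(E)_{ij}$ itself. You must subtract $\frac12((\Omega(E)-I)^2)_{ij}=4x^4+8x^5+O(x^6)$. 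This term comes from the intermediate colors $p,q$ through the edge entries $\Omega(E)_{ab}=2x^2+2x^3+O(x^4)$. Only then does the $\varepsilon$-independent part cancel, leaving $(\log\Omega(E))_{ij}=2\varepsilon x^5+O(x^6)$.

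Step 3 is simpler than you make it. An induced path $a\sim c\sim b$ inside $N(u)$ makes $a,b$ a distance-two pair whose common neighbors $u,c$ are adjacent, which the previous step forbids.
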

\begin{proof}
\emph{Shortest-path counts and weights.}
For shortest $i$--$j$ paths $P$, put $a_P=\prod_{uv\in P}K_{uv}>0$.
For even $s\ge2$, put $H=(N^t)^{\circ s}$. Near zero its logarithm
retains every edge of $R$, with entry $t^sK_{uv}^s+O(t^{s+1})$.
For rational $t>0$, maximality excludes all other off-diagonal
entries; analyticity gives the same zeros near $t=0$.
The diagonal entries of $\log H$ are $O(t)$. In the expansion
$H=\exp(\log H)$, a term reaching distance $h$ therefore needs
at least $h$ edge factors. The coefficient of $t^{sh}$ uses exactly
$h$ such factors: a diagonal insertion or a longer walk has higher
order. Comparing this shortest-path coefficient with that of
$(N^t)_{ij}^s$ gives
\begin{equation}\label{eq:path-moment}
 \left(\frac{\sum_Pa_P}{h!}\right)^s=\frac{\sum_Pa_P^s}{h!}.
\end{equation}
Taking $s$th roots as $s\to\infty$ shows that the left-hand base is
$a_*:=\max_Pa_P$. Dividing by $a_*^s$ then shows that exactly $h!$
paths attain the maximum. Their sum already equals $h!a_*$, so
positivity excludes every other path.

\emph{Common neighbors and neighborhoods.}
The path count with $h=2$ gives exactly two common neighbors.
To prove they are nonadjacent, use the
Wheatstone gadget in Figure~\ref{fig:local}, with all five edge
matrices equal to $E=E_R(x)$:
\[
 \Omega(E)_{ij}=\sum_{u,v}E_{iu}E_{uj}E_{iv}E_{vj}E_{uv}.
\]
Let $p,q$ be the two common neighbors of the distance-two pair $i,j$,
and set $\varepsilon=\mathbf1_{pq\in E(R)}$. Put
$S=\{i,j,p,q\}$ and $a(u)=d_R(i,u)+d_R(u,j)$. The exponent contributed
by $(u,v)$ is $a(u)+a(v)+d_R(u,v)$, where $a(u)=2$ exactly on $S$.
The terms of degree at most five are
\begin{center}
\begin{tabular}{@{}lcc@{}}
\toprule
Internal colors $(u,v)$ & Number & Degree\\
\midrule
$u=v\in S$ & $4$ & $4$\\
One in $\{i,j\}$, one in $\{p,q\}$ & $8$ & $5$\\
$(p,q),(q,p)$ with $pq\in E(R)$ & $2\varepsilon$ & $5$\\
\bottomrule
\end{tabular}
\end{center}
All other pairs have degree at least six: if one color lies outside
$S$, either $u=v$ or $a(u)+a(v)\ge5$ and $d_R(u,v)\ge1$.
For an edge $ab$, $\Omega(E)_{ab}=2x^2+2x^3+O(x^4)$; diagonal
entries are $1+O(x^3)$, and in general
$\Omega(E)_{ab}=O(x^{2d_R(a,b)})$. Thus only the intermediate colors $p,q$
contribute through degree five to $((\Omega(E)-I)^2)_{ij}$, giving
\begin{align*}
 \Omega(E)_{ij}&=4x^4+(8+2\varepsilon)x^5+O(x^6),\\
 ((\Omega(E)-I)^2)_{ij}&=8x^4+16x^5+O(x^6),\\
 (\log\Omega(E))_{ij}&=2\varepsilon x^5+O(x^6).
\end{align*}
Since $\Omega(E)-I=O(x^2)$, terms of order at least three in the
logarithm series contribute $O(x^6)$.
On every edge $ab$ the logarithmic entry is $2x^2+O(x^3)$, while
$\Omega(E)\to I$. Thus $\varepsilon=1$ would add a logarithmic edge
to a nonnegative positive definite available matrix, contradicting
maximality. An induced three-vertex path in a neighborhood would
give a distance-two pair with adjacent common neighbors. Hence every
neighborhood is a disjoint union of cliques.
\end{proof}

\subsection{Entropy supplies global positive definiteness}\label{sec:entropy}

Recovering Euclidean distance requires control as $x\uparrow1$.
Maximizing matrix entropy $-\operatorname{Tr}(H\log H)$ supplies
a positive definite completion throughout the interval and
identifies it with the distance kernel.

\begin{lemma}[Entropy continuation]\label{lem:entropy}
Let $R$ be a finite connected graph. Suppose there is an
$\varepsilon>0$ such that \eqref{eq:log-sparse} holds for every
$0<x<\varepsilon$. Then
\[
 E_R(x)\succ0\qquad(0<x<1).
\]
\end{lemma}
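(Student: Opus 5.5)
The plan is to realize $E_R(x)$, for every $x$ in a maximal subinterval of $(0,1)$, as the unique maximum-entropy positive definite completion of its prescribed entries. Then I would push that subinterval up to $1$.

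\textbf{Feasible set and its shape.} For $0<x<1$ let $\mathcal C_x$ be the set of symmetric $H\succeq0$ on the vertex set of $R$ with $H_{ii}=1$ and $H_{ij}=x$ for $ij\in E(R)$, all other entries free. This set is compact, since the diagonal bounds every entry, and convex. If $H\in\mathcal C_x$ and $0\le\lambda\le1$, then $\lambda H+(1-\lambda)I\in\mathcal C_{\lambda x}$, and it is positive definite when $\lambda<1$. Hence the set of $x$ admitting a positive definite completion is an interval $(0,x_{\max})$. By hypothesis it contains $(0,\varepsilon)$, since $E_R(x)\in\mathcal C_x$ there.

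\textbf{Entropy maximizer.} For $x<x_{\max}$, maximize the strictly concave function $S(H)=-\operatorname{Tr}(H\log H)+\operatorname{Tr}H$ over $\mathcal C_x$. A maximizer exists and is unique. It is positive definite, because the derivative of $-\lambda\log\lambda$ is $+\infty$ at $\lambda=0$: moving toward an interior feasible point strictly increases $S$ at any singular $H$. The gradient of $S$ is $-\log H$, so stationarity with respect to the free entries says that $(\log H)_{ij}=0$ for every $i\ne j$ with $ij\notin E(R)$. Conversely, concavity makes this first-order condition sufficient. Thus any positive definite member of $\mathcal C_x$ satisfying \eqref{eq:log-sparse} is the maximizer $H(x)$.

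For $0<x<\varepsilon$ the hypothesis therefore gives $H(x)=E_R(x)$. The Hessian of $S$ is negative definite on the affine slice. By the implicit function theorem applied to the stationarity equations, $x\mapsto H(x)$ is real-analytic on $(0,x_{\max})$. The entries of $E_R(x)$ are monomials in $x$, so the identity theorem gives $H(x)=E_R(x)$, and hence $E_R(x)\succ0$ and \eqref{eq:log-sparse}, on all of $(0,x_{\max})$.

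\textbf{Reaching $x=1$.} It remains to show $x_{\max}=1$, and I expect this to be the main obstacle. Suppose $x_{\max}<1$. Then $E_R(x_{\max})$ is a limit of positive definite matrices, hence lies in $\mathcal C_{x_{\max}}$. If it were positive definite, continuity of the polynomial entries would keep $E_R(x)$ positive definite, and feasible, slightly beyond $x_{\max}$. That contradicts maximality of $x_{\max}$. So the task is to exclude singularity at the endpoint.

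There, $\log E_R(x)$ must blow up: its smallest eigenvalue tends to $-\infty$, while $\log E_R(x)$ stays supported on the diagonal and on $E(R)$. My first attempt would use the Lagrangian dual. $\log H(x)$ equals the dual variable, a matrix $\Lambda(x)$ supported on the diagonal and edges, which minimizes the convex dual function
\[
\operatorname{Tr}e^{\Lambda}-\sum_i\Lambda_{ii}-x\sum_{ij\in E(R)}2\Lambda_{ij}.
\]
I would try to show this dual stays coercive up to $x_{\max}$, giving a uniform bound on $\Lambda(x)$. Coercivity should follow from strict feasibility of the limiting constraint values, possibly with the help of Schur-product closure $E_R(x)\circ E_R(y)=E_R(xy)$.

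A fallback is to combine that Schur identity with the Schur product theorem: a positive definite matrix Schur-multiplied by a positive semidefinite matrix with unit diagonal is positive definite. This shows the positive-semidefinite set $\{x:E_R(x)\succeq0\}$ is multiplicatively closed. I would then try to rule out an isolated singular endpoint using the monotone, analytic behaviour of $\log E_R(x)$ along the interval.
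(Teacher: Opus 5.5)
Up to your last paragraph this is the paper's proof. You use the same entropy functional on the same feasible set, show the optimizer is positive definite, and observe that stationarity in the free entries is exactly \eqref{eq:log-sparse}. Uniqueness, analytic dependence via the implicit function theorem, and the identity theorem then finish the argument. The paper applies the implicit function theorem to $\pi_R(e^{C})=I+xA_R$ with $C=\log H$ supported on the diagonal and edges, rather than to your primal free-entry equations; that difference is cosmetic.

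The gap is the step you flag yourself. You leave $x_{\max}=1$ unproved and offer only tentative strategies. The missing idea is a one-line observation the paper uses from the start: for every $0<x<1$, the matrix $(1-x)I+xJ$ lies in $\mathcal C_x$. It has unit diagonal and every off-diagonal entry equal to $x$, in particular on the edges of $R$. It is positive definite, with eigenvalues $1-x$ and $1+(q-1)x$, where $q=|V(R)|$. So a positive definite completion exists for every $x\in(0,1)$, and $x_{\max}=1$ by your own definition. Your implicit-function and identity-theorem argument then already gives $E_R(x)=H(x)\succ0$ on all of $(0,1)$. The endpoint analysis, where $\log E_R(x)$ would blow up, concerns a situation that cannot occur.

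Neither of the routes you sketch would close the gap without this fact. Dual coercivity needs strict feasibility of the constraint values at $x_{\max}$, which is the same missing observation. The Schur-product fallback gives nothing by itself: multiplicative closure of $\{x:E_R(x)\succeq0\}$ holds for any interval $(0,c]$ with $c<1$, so it cannot push the interval toward $1$. Once you insert the explicit completion $(1-x)I+xJ$, your proof is complete.
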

\begin{proof}
We construct an analytic matrix $H(x)\succ0$ on $(0,1)$ satisfying
the nonedge equations, and then show $H(x)=E_R(x)$.

For each $x\in(0,1)$ the trace is fixed, so we minimize
\begin{equation}\label{eq:entropy}
 \Phi(H)=\operatorname{Tr}(H\log H-H)
 \quad\text{over}\quad
 \mathcal F_x=\{H\succeq0:H_{ii}=1,\ H_{ij}=x\ (ij\in E(R))\},
\end{equation}
using $0\log0=0$. The feasible set is compact and contains the
positive definite matrix $(1-x)I+xJ$. A minimizer is positive
definite: moving a singular point towards this matrix changes the
zero-eigenvalue contribution by $c\epsilon\log\epsilon+O(\epsilon)$,
$c>0$, and all other contributions by $O(\epsilon)$, decreasing
$\Phi$.

On real symmetric matrices use $\langle U,V\rangle=\operatorname{Tr}(UV)$
and the Frobenius norm $\|V\|_F^2=\operatorname{Tr}(V^2)$.
Write $Df_H[V]$ for the derivative of $f$ at $H$ in direction $V$.
The gradient of $\Phi$ is $\log H$, and its Hessian is positive:
\[
 \langle V,D\log_H[V]\rangle
  =\int_0^\infty
       \|(H+sI)^{-1/2}V(H+sI)^{-1/2}\|_F^2\,ds>0
       \quad(V\ne0).
\]
There is therefore a unique minimizer $H(x)$, characterized by
$(\log H(x))_{ij}=0$ on the unconstrained off-diagonal positions.

We next show that $H(x)$ is analytic and agrees with $E_R(x)$.
Let $\mathcal S_R$ be the symmetric matrices supported on the diagonal
and edges of $R$, and $\pi_R$ the orthogonal projection onto this
space, obtained by zeroing the other entries. With $A_R$ the adjacency
matrix, the stationarity equations are
\[
 C(x)=\log H(x)\in\mathcal S_R,\qquad
 \pi_R(e^{C(x)})=I+xA_R.
\]
For nonzero $V\in\mathcal S_R$,
\[
 \langle V,D(\pi_R\exp)_C[V]\rangle
       =\int_0^1\operatorname{Tr}(Ve^{(1-s)C}Ve^{sC})\,ds>0.
\]
The derivative is invertible, so the implicit-function theorem and
uniqueness make $H(x)$ analytic throughout $(0,1)$. Near zero,
$E_R(x)$ is positive definite and satisfies the same stationarity
equations, hence equals $H(x)$. The analytic identity theorem gives
\[
 E_R(x)=H(x)\succ0\qquad(0<x<1),
\]
which is the required conclusion.
\end{proof}

\begin{remark*}[Quantum maximum entropy]
Under Lemma~\ref{lem:entropy}, put $q=|V(R)|$ and
$\rho_x=E_R(x)/q$, $0<x<1$. A density matrix is a complex
Hermitian positive semidefinite matrix of trace one. Among such
$q\times q$ matrices $\sigma$ satisfying
\[
 \sigma_{ii}=1/q,\qquad
 \operatorname{Re}\sigma_{ij}=x/q\quad(ij\in E(R)),
\]
$\rho_x$ uniquely maximizes the von Neumann entropy
$S(\sigma)=-\operatorname{Tr}(\sigma\log\sigma)$, with natural
logarithms and $0\log0=0$. Indeed, the stationarity equations above
give $\operatorname{Tr}((\sigma-\rho_x)\log\rho_x)=0$. Hence
\begin{align*}
 D(\sigma\Vert\rho_x)
   &:=\operatorname{Tr}\bigl(\sigma(\log\sigma-\log\rho_x)\bigr)
     =S(\rho_x)-S(\sigma),\\
 \|\sigma-\rho_x\|_1&\le\sqrt{2D(\sigma\Vert\rho_x)}.
\end{align*}
Here $D$ is quantum relative entropy and $\|A\|_1$ is the sum of
the singular values of $A$; the inequality is quantum Pinsker
\cite[Theorem~5.38]{watrous2018}. Thus an entropy deficit
$\varepsilon$ bounds the trace-norm error by $\sqrt{2\varepsilon}$
under the same constraints. For a Boolean $d$-cube, the recovered color
coordinates give the product state $\rho_x=(W(x)/2)^{\otimes d}$.
\end{remark*}

\subsection{Euclidean distance forces a Cartesian product}

Write $K_s$ for the complete graph on $s$ vertices.
In a Cartesian product $R_1\square\cdots\square R_d$, adjacent
tuples differ along one factor edge.

\begin{lemma}[Geometric rigidity]\label{lem:cartesian}
Let $R$ be a finite connected graph satisfying:
\begin{enumerate}[label=\textup{(\roman*)},leftmargin=2.2em]
\item $E_R(x)\succeq0$ for every $x\in(0,1)$;
\item every neighborhood is a disjoint union of cliques;
\item every distance-two pair has exactly two common neighbors,
      which are nonadjacent.
\end{enumerate}
Then
\[
 R\cong K_{s_1}\square\cdots\square K_{s_d},\qquad
 d\ge0,\quad s_r\ge2.
\]
For $d=0$, the product is a single vertex.
\end{lemma}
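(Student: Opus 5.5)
Our route goes through Schoenberg's theorem. By (i), for every $x=e^{-t}$ with $t>0$ the matrix $(e^{-t\,d_R(i,j)})$ is positive semidefinite. Schoenberg's criterion \cite{schoenberg1938} then says that $d_R$ is of negative type. Hence there are points $\phi(i)\in\mathbb R^n$ with
\begin{equation}\label{eq:embedding}
 \|\phi(i)-\phi(j)\|^2=d_R(i,j)\qquad(i,j\in X).
\end{equation}
We fix such an embedding, translated so that a base vertex $o$ sits at $0$, and argue purely geometrically from (ii), (iii) and \eqref{eq:embedding}.

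\emph{Local geometry.} Neighbours of a vertex lie at squared distance $1$. Vertices in the same neighbourhood clique are at squared distance $1$ from each other. Neighbours in different cliques are nonadjacent, so by (ii) they are at distance exactly $2$.

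Take $a,b\in N(o)$ in different cliques. Then $\|a\|^2=\|b\|^2=1$ and $\|a-b\|^2=2$, so $\langle a,b\rangle=0$. By (iii), the pair $a,b$ has a second common neighbour $c\neq o$, nonadjacent to $o$, so $d(o,c)=2$. Using $\|c\|^2=2$ and $\|c-a\|^2=\|c-b\|^2=1$, expand $\|c-a-b\|^2$. The inner products are forced, namely $\langle c,a\rangle=\langle c,b\rangle=1$, and the expansion gives $0$. Hence $c=a+b$: every induced square is a parallelogram, as in Figure~\ref{fig:local}.

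Within a clique $Q\cup\{o\}$, the points form a regular simplex of edge length $1$. So the directions $\{\phi(u)-\phi(o):u\in Q\}$ span an affine simplex, and the spans attached to different cliques at $o$ are mutually orthogonal.

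\emph{Transport.} The key step is to show that these clique structures are transported consistently along edges. Consider an edge $o\to a$ in clique $Q_1$ and another clique $Q_2$ at $o$. We claim the map $b\mapsto a+b-o$ sends $Q_2$ bijectively onto a clique at $a$. It has the same translated simplex and is disjoint from the clique of $a$ containing $o$.

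The parallelogram rule supplies the image vertices. We then check adjacency among them via the embedding: the squared distances are preserved by the translation, and squared distance $1$ implies adjacency, because \eqref{eq:embedding} makes distance-$1$ pairs exactly the edges. We also need to check that $a$ acquires no extra cliques. If $a$ had a clique $Q'$ not of this form, then transporting back along $a\to o$ would produce a clique at $o$ that we had not counted.

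From this, the number $d$ of cliques and their sizes $s_r$ are the same at every vertex. Moreover, the clique-direction subspaces $V_1,\dots,V_d$ are independent of the vertex.

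\emph{Conclusion and main obstacle.} For each $r$, let $\Delta_r\subset V_r$ be the vertex set of the size-$s_r$ simplex through $0$. Define
\[
 \psi:\ \textstyle\prod_r\Delta_r\to\phi(X),\qquad (p_1,\dots,p_d)\mapsto \sum_r p_r .
\]
Connectedness of $R$, together with transport, shows that $\psi$ is onto: every vertex is reached by a path of single-clique moves, and each move changes exactly one coordinate. Transport also shows that $\psi$ lands in $\phi(X)$. Injectivity follows from the orthogonality of the $V_r$, and $\phi$ itself is injective because $d_R(i,j)>0$ for $i\neq j$. Adjacency corresponds to changing one coordinate, which is exactly the Cartesian product $K_{s_1}\square\cdots\square K_{s_d}$.

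The main obstacle is the transport step. One must rule out that a neighbour of $o$ in $Q_1$ has cliques that are not translates, or that transported cliques collide. I would handle this with a counting-plus-geometry argument. Each vertex's neighbourhood, viewed in the embedding, is a union of orthogonal unit simplices through it. Every transported vertex $a+b-o$ is a genuine vertex by (iii). Any surplus neighbour $e$ of $a$ would be orthogonal to $a-o$, and so would yield, via (iii) applied to the distance-two pair $o,e$, a new neighbour of $o$, contradicting the count at $o$.
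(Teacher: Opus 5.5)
Your proposal is correct and follows essentially the same route as the paper: an embedding with $\|\phi(i)-\phi(j)\|^2=d_R(i,j)$, the parallelogram rule for induced squares, transport of cliques across an edge by the translation $b\mapsto a+b-o$, and orthogonal product coordinates. The paper obtains the embedding directly from the first-order expansion $E_R(x)=J-(1-x)\Delta+O((1-x)^2)$ instead of citing Schoenberg. When writing it up, state the paper's invariant explicitly: at a vertex $v$ with $\phi(v)=\sum_r z_r$, the maximal cliques are exactly $\phi(v)+(\Delta_r-z_r)$. Constancy of the direction spaces $V_r$ alone does not keep single-clique moves inside $\prod_r\Delta_r$, while your translation transport does. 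Also, in your surplus-neighbour argument the second common neighbour $b$ of $o,e$ is not a new neighbour of $o$; rather it shows, by injectivity of $\phi$, that $e$ is the translate of $b$.
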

\begin{proof}
To embed the graph distance, put $X=V(R)$, $\Delta=(d_R(i,j))_{i,j\in X}$, and $P=I-J/|X|$. Since
$E_R(x)=J-(1-x)\Delta+O((1-x)^2)$, positivity gives
$z^{\mathsf T}\Delta z\le0$ for $z\perp\mathbf1$.
Thus $-P\Delta P/2$ is a Gram matrix. Taking vectors with this Gram
matrix and expanding their squared differences gives an injective map
$f$ such that
\begin{equation}\label{eq:embedding}
 \|f(u)-f(v)\|^2=d_R(u,v).
\end{equation}
An induced square $u,v,z,w,u$ becomes a parallelogram:
the vectors $a=f(v)-f(u)$ and $b=f(w)-f(u)$ are orthogonal unit
vectors, and the prescribed distances force
$f(z)=f(u)+a+b$.

We next use this parallelogram rule to transport cliques across
an edge. Let $N(u)$ denote the neighbors of $u$. Every edge $uv$ lies in a
unique maximal clique $A$, because $N(u)$ is a disjoint union of cliques.
For each $w\in N(u)\setminus A$, the unique other common neighbor
$z$ of $v,w$ completes such a square. This gives a reversible
bijection from $N(u)\setminus A$ to $N(v)\setminus A$, translating
$f(w)$ by $f(v)-f(u)$. It transports entire maximal cliques;
an extra vertex in an image clique would enlarge the original one
under the inverse bijection.

To recover the Cartesian coordinates, choose a base vertex $o$ with $f(o)=0$, and let $S_r=f(C_r)$ for
the maximal cliques through $o$. Each $S_r$ is a unit-side regular
simplex containing zero, and their spans are orthogonal by
\eqref{eq:embedding}. We maintain the following invariant along a walk:
\[
 f(v)=\sum_rz_r,\quad z_r\in S_r,\qquad
 \{f(C):C\text{ maximal through }v\}
     =\{f(v)+(S_r-z_r):r\}.
\]
It holds at $o$. A step in the $r$th clique replaces $z_r$ by
another point of $S_r$. That clique stays fixed, while the
parallelogram rule translates every other clique by the step vector,
preserving the invariant. Orthogonal projection onto
$\operatorname{span}(S_r)$ recovers $z_r$ from $f(v)$, so different
walks to $v$ give the same coordinates. Connectedness gives
$f(V(R))\subseteq\sum_rS_r$. Conversely, the invariant lets us
change each coordinate to any point of $S_r$, giving every tuple.
Thus coordinates are unique and adjacency is exactly one coordinate change. Hence
$R=\square_r K_{|S_r|}$.
\end{proof}

\begin{theorem}[Maximum logarithmic support]\label{thm:geometry}
The maximizing graph $R$ has the form
\[
 R\cong K_{s_1}\square\cdots\square K_{s_d},\qquad
 d\ge0,\quad s_r\ge2.
\]
\end{theorem}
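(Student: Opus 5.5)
The plan is to verify the three hypotheses of Lemma~\ref{lem:cartesian} for the maximizing graph $R=R_M$ and then read off the conclusion. Recall that $M$ has already been normalized so that $M\ge0$, $M\succ0$, $R_M=R$ is connected, and every edge entry of $L=\log M$ is positive. Connectedness of $R$ was noted when $M$ was chosen.

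\emph{Hypotheses (ii) and (iii).} Apply Lemma~\ref{lem:local} with $N=M$ and $K=L$. Its assumptions are $N\succ0$, $R_N=R$, $N\in\mathcal A$, and $K_{uv}>0$ on edges, and all of these hold after the normalization. The lemma then states directly that every neighborhood of $R$ is a disjoint union of cliques. It also states that every distance-two pair has exactly two common neighbors, and that these are nonadjacent.

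\emph{Hypothesis (i).} Invoke Lemma~\ref{lem:distance} with $H=M$. Its second assertion gives an $\varepsilon>0$ such that $E_R(x)\succ0$ and the logarithmic sparsity \eqref{eq:log-sparse} holds for all real $0<x<\varepsilon$. This is exactly the hypothesis of Lemma~\ref{lem:entropy} for the connected graph $R$. That lemma then yields $E_R(x)\succ0$, and in particular $E_R(x)\succeq0$, for every $x\in(0,1)$.

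With (i)--(iii) established, Lemma~\ref{lem:cartesian} gives $R\cong K_{s_1}\square\cdots\square K_{s_d}$ with $s_r\ge2$. The case $d=0$ corresponds to $|X|=1$. The substantive work is all contained in the earlier lemmas, so the main point needing care is bookkeeping: the normalized $M$ must satisfy every hypothesis of Lemmas~\ref{lem:distance} and~\ref{lem:local} simultaneously. In particular, the replacement $M\mapsto K_s(t)$ must preserve membership in $\mathcal A$, positive definiteness, nonnegativity, and the support $R$ while making the edge entries positive. This is what the preceding paragraph arranged, using maximality to rule out new logarithmic edges.
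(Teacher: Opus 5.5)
Your proposal is correct and follows the paper's proof exactly: the paper proves the theorem by combining Lemmas~\ref{lem:distance}, \ref{lem:local}, \ref{lem:entropy}, and~\ref{lem:cartesian}, with Lemmas~\ref{lem:distance} and~\ref{lem:entropy} supplying hypothesis (i) of Lemma~\ref{lem:cartesian} and Lemma~\ref{lem:local} supplying (ii) and (iii). Your check that the normalized $M$ (in $\mathcal A$, nonnegative, positive definite, with logarithmic support graph $R_M=R$ connected and positive edge entries of $\log M$) meets the hypotheses of both Lemma~\ref{lem:distance} and Lemma~\ref{lem:local} is the right bookkeeping, with the small correction that the normalization preserves the logarithmic support $R$ rather than the entry support.
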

\begin{proof}
Combine Lemmas~\ref{lem:distance}, \ref{lem:local},
\ref{lem:entropy}, and~\ref{lem:cartesian}.
\end{proof}

\subsection{Recovering matrices in common coordinates}

Write $Q_d=K_2\square\cdots\square K_2$ for the Boolean cube.
The geometry so far describes one logarithmic support. When it is
$Q_d$, the next proposition places every available matrix in the same
coordinates, as needed to compare Gram matrices later.

\begin{proposition}[Common tensor coordinates]\label{prop:coordinates}
Suppose $R\cong Q_d$ and fix an identification of its vertices
with $\{0,1\}^d$. For every $N\in\mathcal A$ with $N\ge0$ and
$N\succ0$, there is a factorization in these fixed coordinates
\[
 N=\gamma\bigotimes_{r=1}^d C_r,\qquad
 \gamma>0,\quad C_r\in\mathbb R_{\ge0}^{2\times2},\quad C_r\succ0.
\]
If the support of $N$ is connected, then $C_r>0$ for every $r$.
If $N$ is algebraic, $\gamma$ and all $C_r$ can be chosen algebraic.
\end{proposition}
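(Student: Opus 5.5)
The plan is to show that $\log N$ is a sum of one-bit terms in the fixed coordinates, and then exponentiate. Let $N\in\mathcal A$ with $N\ge0$, $N\succ0$, and put $K=\log N$.

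\emph{Step 1: confine the logarithmic support of $N$ to $R$.} Form combined matrices $H(t)=M^{\circ s}\circ (N^t)$, or more simply $M\circ N^t$ for small rational $t>0$. These are available (parallel composition together with the spectral powers of Lemma~\ref{lem:interpolation}), nonnegative, and positive definite by the Schur product theorem. Near $t=0$ we have $\log(M\circ N^t)=\log M+O(t)$ off the diagonal, so every edge of $R$ persists. First-order perturbation in $t$ then brings in a term linear in $K$. Maximality forbids new logarithmic edges for every small rational $t$, hence by analyticity for all small real $t$. Extracting the first-order coefficient should give $K_{ij}=0$ on nonedges $ij\notin E(R)$, provided the Fréchet derivative of $\log$ at $M$ can be disentangled. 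Since $M$ has the Ising form only after this proposition, I would avoid that issue. Instead I would use $H(t)=(E_R(x))\circ N^t$ with $x$ small, whose logarithm is explicitly controlled by Lemma~\ref{lem:distance}, or directly $N^t$ mixed with $M^{\epsilon}$ through products $M^{a}N^{t}M^{a}$. I expect this step to be the main obstacle: converting ``no new edge for the combination'' into ``$K$ is supported on the diagonal plus $E(Q_d)$''. My first attempt is to take $H=\exp(\log M+tK)$ realized approximately, using Lie–Trotter products $(M^{1/n}N^{t/n})^n$, which are available symmetrized as $M^{1/2n}N^{t/n}M^{1/2n}$ chains. Its logarithm is exactly $\log M+tK$ in the limit, and for $t$ small this keeps the $R$-edges and adds any nonedge of $K$. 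Maximality then forces $K$ onto $\mathcal S_R$. To handle the limit, I would argue via a polynomial-in-$t$ identity and the finite-support argument already used in Lemma~\ref{lem:distance}.

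\emph{Step 2: rule out non-product diagonal and edge weights.} Now $K=D+\sum_{r}\sum_{v}c_{r,v}E_{v,v\oplus e_r}$ with diagonal $D$ and edge weights depending on the position. I must show that $D$ is a sum of one-bit functions and that $c_{r,v}$ depends only on $r$ and on the bit $v_r$ (symmetry then makes it depend only on $r$). For this I would repeat the moment argument of Lemma~\ref{lem:local} with $(N^t)^{\circ s}$: equal shortest-path products over the $h!$ orderings give $c_{r,v}c_{r',v\oplus e_r}=c_{r',v}c_{r,v\oplus e_{r'}}$ on every square. Next-order terms, combined with the Wheatstone-type gadget, should force the diagonal differences across parallel edges of a square to agree. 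Together these mean that $K$ commutes with the translation structure and is a ``one-bit local'' operator: $K=\sum_r I\otimes\cdots\otimes K_r\otimes\cdots\otimes I$ with $K_r$ a $2\times2$ symmetric matrix. If the square relations only yield multiplicative consistency, I would supplement them with the entropy uniqueness of Lemma~\ref{lem:entropy} applied to $N$-weighted distance kernels, which pins down each coefficient.

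\emph{Step 3: exponentiate.} Since the summands commute, $N=e^K=\bigotimes_r e^{K_r}$. Setting $C_r=e^{K_r}/\|e^{K_r}\|$ and letting $\gamma$ absorb the scalars gives $C_r\succ0$. Nonnegativity of $N$ forces each $C_r\ge0$ after a sign choice, because entries of a tensor product are products of factor entries. Connectivity of the support of $N$ forbids any $C_r$ from having a zero off-diagonal entry, since such a factor would split the support into blocks, so $C_r>0$. For algebraicity, each $C_r$ can be read off as a normalized slice $N$ restricted to colors agreeing off coordinate $r$, which is algebraic, and then $\gamma$ is a ratio of algebraic entries.
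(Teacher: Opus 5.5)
Your skeleton (confine $\log N$ to the cube, show it is a sum of one-bit terms, exponentiate) matches the paper's, and Steps~1 and~3 can be completed. In Step~1 no limit is needed. For a fixed small rational $t$, a single sufficiently large $n$ in your symmetrized Trotter chain already works: the chain is positive definite, entrywise positive (since $M>0$), and has logarithm close to $L+tK$. So it carries a new logarithmic edge whenever $K_{ij}\ne0$ at a nonedge. The paper does this more cheaply with $M^{t/2}N^{\eta t}M^{t/2}$, whose logarithm is $t(L+\eta K)+O(t^3)$, squared entrywise to guarantee nonnegativity. Your first candidate $M\circ N^t$ need not be nonnegative, so maximality would not apply to it.

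The genuine gap is in Step~2, and it has two parts.

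First, Lemma~\ref{lem:local} cannot be applied to $(N^t)^{\circ s}$ for a general $N$. It requires $R_N=R$ with positive logarithmic edge weights, and nothing guarantees this: the statement itself allows a factor $C_r$ to be diagonal. You must run the path-moment identity on a mixed matrix with logarithm $t(L+\eta K)+O(t^3)$, which is positive on all cube edges for small $\eta$, and then extract the coefficient of $\eta t$. You also need \emph{both} diagonals of each square ($ab=ec$ and $ae=bc$) to get equal opposite edges, and hence a constant weight per direction.

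Second, and more seriously, you give no working mechanism for the diagonal $\phi$ of $K$. The Wheatstone gadget detects adjacency of common neighbours, not diagonal differences. The first correction to the path moments of $(N^t)^{\circ s}$ also carries no information. Once the edge weights are constant, for $z'=z+e_r+e_s$ the entry is $t^{2s}(b_rb_s)^s\bigl(1+\tfrac{st}{6}(2\phi(z)+2\phi(z')+\phi(z+e_r)+\phi(z+e_s))+O(t^2)\bigr)$. To this order, that is exactly what $e^{K'}$ produces for any cube-supported $K'$ with diagonal $st\phi$, whatever $\phi$ is.

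The missing idea is to mix $N$ with a kernel whose logarithm is known exactly and does not commute with $\operatorname{diag}\phi$. Take $C=\log E_{Q_d}(x)=cI+b\sum_rX_r$ with $b>0$. Maximality kills the nonedge entries of $\log(e^{tC/2}e^{uK}e^{tC/2})$ on an open sector of parameters, hence identically. Its third-order BCH coefficient $[t^2u]=-\tfrac1{24}[C,[C,K]]$ has $(z,z')$ entry $-\tfrac1{12}b^2\bigl(\phi(z)+\phi(z')-\phi(z+e_r)-\phi(z+e_s)\bigr)$. Vanishing of these mixed second differences is what makes $\phi$ a sum of one-bit functions. Without it you cannot reach $K=\sum_r I\otimes\cdots\otimes K_r\otimes\cdots\otimes I$, and your appeal to entropy uniqueness does not identify a constraint that would supply it.
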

\begin{proof}
We first determine the off-diagonal entries of $\log N$.
Index colors by $z\in\{0,1\}^d$; let $e_r$ be the $r$th unit bit
vector, with bit additions taken modulo two. Put $B=\log N$.
For small rational $t,\eta>0$, the available matrix
\[
 A(t,\eta)=e^{tL/2}e^{\eta tB}e^{tL/2}
\]
is positive definite. Its entrywise square is nonnegative and has
logarithmic leading term $t^2(L+\eta B)_{ij}^2$ off the diagonal.
Choose $\eta$ small enough to retain every edge of $R$. Maximality
then forces $B_{ij}=0$ outside the cube edges.

For sufficiently small positive $t,\eta$, $A(t,\eta)$ is
entrywise positive: since $L,B$ are supported on the cube, the
leading term of its $(i,j)$ entry has order $t^{d_R(i,j)}$, with
positive coefficient for small $\eta$ by continuity from $\eta=0$.
Its logarithm also retains all cube edges positively. For rational
parameters, availability and maximality therefore give
$R_{A(t,\eta)}=R$; continuity gives the same zeros throughout this
parameter region. We can now apply \eqref{eq:path-moment} to both diagonals
of a cube square:
if its cyclic edge weights are $a,b,c,e$, then $ab=ec$ and $ae=bc$,
so $a=c$, $b=e$. Comparing the coefficient of $\eta t$ shows that
$B$ has a constant weight $b_r$ in each direction. With
$\phi(z)=B_{zz}$, this gives
\[
 B=\operatorname{diag}\phi+\sum_r b_rX_r.
\]
We next prove that $\phi$ is a sum of one-bit functions.
Fix rational $x\in(0,1)$ and put
$C=\log E_{Q_d}(x)=cI+b\sum_rX_r$, with $b>0$.
The same maximality argument makes the nonedge entries of
$\log(e^{tC/2}e^{uB}e^{tC/2})$ vanish in an open sector, hence as
analytic identities near zero. Direct expansion to total degree three
gives the following coefficient, where $[C,B]=CB-BC$ and
$[t^2u]$ denotes coefficient extraction:
\[
 [t^2u]\log(e^{tC/2}e^{uB}e^{tC/2})=-\frac1{24}[C,[C,B]].
\]
For opposite square vertices $z'=z+e_r+e_s$,
\[
 [C,[C,B]]_{z,z'}
 =2b^2\bigl(\phi(z)+\phi(z')-\phi(z+e_r)-\phi(z+e_s)\bigr).
\]
The nonedge coefficient vanishes, so this mixed second difference is zero.
Thus each one-bit increment is independent of the other bits:
$\phi(z)=\phi(0)+\sum_r(\phi(e_r)-\phi(0))z_r$.

Finally, the one-bit summands of $B$ commute, so exponentiating gives a
tensor product of positive definite factors. Normalizing their
$(0,0)$ entries to one gives
\[
 \gamma=N_{0,0},\qquad
 C_r=\begin{pmatrix}
 1&N_{0,e_r}/N_{0,0}\\
 N_{0,e_r}/N_{0,0}&N_{e_r,e_r}/N_{0,0}
 \end{pmatrix}.
\]
These entries are nonnegative, and algebraic when $N$ is algebraic.
A zero off-diagonal entry in a factor disconnects the support.
\end{proof}

\section{The positive definite dichotomy}\label{sec:hardness}

The geometry leaves two possible obstructions to tractability:
clique coordinates of size at least three, and Boolean factors with
unequal diagonals. In each case one must recover a hard factor from
a product of partition functions. The following arguments perform
that recovery.

\begin{lemma}[Isolating a Potts factor]\label{lem:potts}
If a maximizing support has a factor $K_s$ with $s\ge3$, the source
problem is $\#\mathrm P$-hard.
\end{lemma}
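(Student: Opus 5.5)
The plan is to exploit that the distance kernel of a Cartesian product of cliques is a tensor product of Potts matrices. Then we use spectral interpolation to replace every clique factor by an identity, except the factors of one chosen size $s\ge3$, which become the hard matrix $I_s+J_s$.

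\emph{Setup.} Write $R\cong K_{s_1}\square\cdots\square K_{s_d}$ as in Theorem~\ref{thm:geometry}, and fix an index with $s:=s_1\ge3$. In a product of cliques, $d_R$ is Hamming distance, so
\[
 E_R(x)=\bigotimes_{r=1}^d P_{s_r}(x),\qquad P_m(x)=(1-x)I_m+xJ_m .
\]
By Lemma~\ref{lem:distance}, $E_R(x)$ is available for rational $x\in(0,1)$, and by Lemma~\ref{lem:entropy} it is positive definite there. The factor $P_m(x)$ has eigenvalue $1-x$ on $\mathbf1^\perp$ and $1+(m-1)x$ on $\mathbf1$. Hence the eigenvalues of $E_R(x)$ are indexed by subsets $S\subseteq[d]$:
\[
 \lambda_S=(1-x)^{d-|S|}\prod_{r\in S}\bigl(1+(s_r-1)x\bigr).
\]
The corresponding eigenprojector is $\bigotimes_{r\in S}\frac1{s_r}J\otimes\bigotimes_{r\notin S}(I-\frac1{s_r}J)$.

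\emph{Spectral isolation.} Let $k=\#\{r:s_r=s\}$. The target is
\[
 B=\bigotimes_{r:\,s_r=s}(I_s+J_s)\otimes\bigotimes_{r:\,s_r\ne s}I_{s_r},
\]
which is a function of $E_R(x)$: on the $\lambda_S$-eigenspace it acts by $b_S=(s+1)^{\#\{r\in S:\,s_r=s\}}$. I would apply Lemma~\ref{lem:transfer} with $F(t)=E_R(x)^t$, or equivalently the spectral part of Lemma~\ref{lem:interpolation}, with samples realized by paths. This is valid provided every multiplicative coincidence among the $\lambda_S$ forces the same coincidence among the $b_S$. To secure that, I would choose the fixed rational $x$ so that the finitely many numbers $1-x$ and $1+(m-1)x$, for the distinct sizes $m$ occurring, are multiplicatively independent. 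Then $\lambda_S$ determines, for each size $m$, how many $r\in S$ have $s_r=m$, and hence determines $b_S$. This choice of $x$ is the step I expect to be the main obstacle. I would first try a prime-based choice, say $x=1/p$ with $p$ a large prime. Then $1-x=(p-1)/p$ and $1+(m-1)x=(p+m-1)/p$, and one can hope to find primes dividing exactly one of the numerators $p-1$, $p+m-1$, while tracking the common powers of $p$. If that bookkeeping fails, I would fall back on a weaker but sufficient condition: every multiplicative relation among the $\lambda_S$ must preserve the count of size-$s$ coordinates. Checking that condition only requires separating the single value $1+(s-1)x$ from the others.

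\emph{Reaching the Tutte endpoint.} With $B$ available, the tensor-product identity gives
\[
 Z_B(G)=Z_{I_s+J_s}(G)^k\prod_{r:\,s_r\ne s}s_r^{c(G)} .
\]
The second factor is computable, so we obtain $Z_{I_s+J_s}(G)^k$. Since $Z_{I_s+J_s}(G)=s^{c(G)}T_G(s+1,2)$ is a positive integer, its exact $k$-th root can be extracted in polynomial time. Alternatively, one can avoid roots by letting only one size-$s$ coordinate carry $I+J$ and the others carry $I$; this works if the independence step can separate individual coordinates, which is impossible under the symmetry that permutes the size-$s$ coordinates, so the root is the cleaner route. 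Planar hardness of $I_s+J_s$ for $s\ge3$, from the planar Tutte classification \cite{vertigan2005}, then transfers to the source problem, since all the constructions above are planar substitutions and finitely many interpolation queries.
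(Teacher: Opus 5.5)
Your proposal is correct, but it takes a different route from the paper. The paper never fixes $x$. It takes the polynomial family $F(t)=E_R(t)^2$, in which a size-$k$ coordinate has diagonal entry $p_k(t)=1+(k-1)t^2$ and off-diagonal entry $q_k(t)=2t+(k-2)t^2$. It then applies Lemma~\ref{lem:transfer} to entrywise (thickening) products. Because the $p_k$ are distinct irreducible quadratics over $\mathbb R$ while every $q_k$ splits, unique factorization preserves the number of diagonal choices in each size class. This supplies $(I_s+J_s)^{\otimes n_s}\otimes J_\ell$, and the Potts value is an $n_s$-th root.

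That route needs no hand-chosen parameter, since Lemma~\ref{lem:transfer} selects nonexceptional samples automatically. Yours uses one fixed available matrix, spectral interpolation by paths, and an identity complement in place of $J_\ell$. It separates eigenvalues arithmetically rather than by polynomial factorization.

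Two corrections are needed.
\begin{enumerate}
\item The tool you want is not Lemma~\ref{lem:transfer}, which concerns entrywise product identities. It is the Vandermonde grouping in the spectral part of Lemma~\ref{lem:interpolation}, with each group of eigenvalue products assigned the corresponding product of the $b_S$. That proof, also used in Lemma~\ref{lem:boolean-field}, goes through verbatim once $b$-products depend only on $\lambda$-products.
\item The step you flag as the main obstacle is easy under your weaker condition, but $x=1/p$ with $p$ prime is not guaranteed to work, since $p+s-1$ may have all its prime factors at most $\max_r s_r$. Instead take a prime $P>\max_r s_r$ and $x=1/N$ with $N=P-s+1$. Then $1+(s-1)x=P/N$, while $N$, $N-1=P-s$ and $N+m-1=P+(m-s)$ for $m\ne s$ are positive and not divisible by $P$. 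Hence $v_P(\lambda_S)=\#\{r\in S:s_r=s\}$, where $v_P$ is the $P$-adic valuation. Every equal-length eigenvalue-product identity therefore preserves the size-$s$ count, and hence the $b$-products.
\end{enumerate}

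With this choice, the rest of your argument stands as written: the factorization $Z_B(G)=Z_{I_s+J_s}(G)^k\prod_{s_r\ne s}s_r^{c(G)}$, the exact integer $k$-th root, and the planar Tutte endpoint.
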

\begin{proof}
We recover $Z_{I_s+J_s}$ from the source, reducing the hard Potts
problem to it. Write $R=\square_rK_{s_r}$ and
$E_R(t)=\bigotimes_r W_{s_r}(t)$, where
$W_k(t)=(1-t)I_k+tJ_k$. A two-edge path gives $F(t)=E_R(t)^2$.
In a size-$k$ coordinate its diagonal and off-diagonal entries are
\[
 p_k(t)=1+(k-1)t^2,\qquad q_k(t)=2t+(k-2)t^2.
\]
If size $k$ occurs $n_k$ times, a product of $m$ entries of $F(t)$
has the form $\prod_kp_k(t)^{a_k}q_k(t)^{mn_k-a_k}$.
The polynomials $p_k$ are distinct irreducible quadratics over
$\mathbb R$, whereas every $q_k$ splits into real linear factors.
Unique factorization therefore preserves each $a_k$ in a product
identity. Lemma~\ref{lem:transfer} supplies
\[
 H_s=(I_s+J_s)^{\otimes n_s}\otimes J_\ell,
 \qquad \ell=\prod_{r:s_r\ne s}s_r,
\]
by replacing a product by $2^{a_s}$.
The recovered value satisfies
\[
 Z_{I_s+J_s}(G)
   =\left(\frac{Z_{H_s}(G)}{\ell^{|V(G)|}}\right)^{1/n_s}.
\]
The root is the unique positive one. Exact algebraic root recovery
is polynomial-time, and the Potts endpoint is hard.
\end{proof}

\begin{lemma}[Isolating a Boolean field]\label{lem:boolean-field}
Let
\[
 N=\bigotimes_{r=1}^d C_r,\qquad
 C_r=\begin{pmatrix}a_r&b_r\\b_r&c_r\end{pmatrix},\qquad
 a_r,b_r,c_r>0,\quad a_rc_r>b_r^2,
\]
with fixed algebraic parameters. If $a_r\ne c_r$ for some $r$,
then $\PlGH{N}$ is $\#\mathrm P$-hard.
\end{lemma}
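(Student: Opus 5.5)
Fix an index $r_0$ with $a_{r_0}\ne c_{r_0}$. The plan is to isolate from $N$ the single hard Boolean matrix $C_{r_0}$ (up to harmless factors) and then invoke the planar hardness endpoint of Section~\ref{sec:tools}: a strictly positive, invertible symmetric $2\times2$ matrix with unequal diagonal entries is $\#\mathrm P$-hard. The factor $C_{r_0}$ is positive, and $a_{r_0}c_{r_0}>b_{r_0}^2$ makes it invertible, so it is such an endpoint. As in Lemma~\ref{lem:potts}, we reduce it to $\PlGH{N}$ by using Lemma~\ref{lem:interpolation} or Lemma~\ref{lem:transfer} to make available a matrix of the form $H=C_{r_0}^{\otimes n}\otimes J_\ell$. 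Then $Z_H(G)=\ell^{|V(G)|}Z_{C_{r_0}}(G)^n$, and $Z_{C_{r_0}}(G)$ is recovered as the unique positive $n$th root, using exact algebraic root extraction. The root is positive because all entries are positive.

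To build $H$, group the coordinates $r$ into classes according to which factors are \emph{equivalent} to $C_{r_0}$, meaning they give the same multiplicative data after normalization. First normalize each factor to $C_r/b_r$, which is harmless since it only contributes a global scalar per edge. The entries of $N$ are then products $\prod_r x_r$ with $x_r\in\{a_r,b_r,c_r\}$.

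The aim is to apply Lemma~\ref{lem:transfer} to a family $F(t)$ whose entry products remember, for each factor, which of its three entry types was used, at least for factors in the class of $C_{r_0}$. The target is the matrix obtained by replacing every other factor by $J_2$. For the family I would use $F(t)=N^t$, which is available by paths since $N\succ0$ with $N>0$. Its entries are products of $(C_r^t)_{xy}$. Writing $C_r=\lambda_rP_r+\mu_rP_r'$, each entry is a two-exponential function $\alpha\lambda_r^t+\beta\mu_r^t$ whose coefficients encode the type: diagonal entries carry weights $\cos^2\theta_r$ and $\sin^2\theta_r$, off-diagonal entries carry $\pm\sin\theta_r\cos\theta_r$.

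The key claim is that a product identity $\prod F_{i_sj_s}(t)\equiv\prod F_{k_s\ell_s}(t)$ forces the counts of $a$-, $b$- and $c$-entries in coordinate $r_0$ to agree, modulo the ambiguity from other coordinates with the same eigenvalue pair, which are handled as a class. This gives the target $C_{r_0}^{\otimes n}\otimes J_\ell$ after replacing the equivalent factors consistently.

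I expect the main obstacle to be exactly this unique-factorization step: showing that the functions $t\mapsto(C_r^t)_{00},(C_r^t)_{11},(C_r^t)_{01}$ are multiplicatively independent across different coordinates. When several factors coincide, or are swaps of one another (the case $a_r\leftrightarrow c_r$), they must be grouped. I would first try factoring in the ring of exponential polynomials in the variables $\lambda^t$. After dividing by the dominant term $\lambda_r^t$, each entry becomes $\alpha+\beta(\mu_r/\lambda_r)^t$, a polynomial in $y_r=(\mu_r/\lambda_r)^t$.

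The three polynomials are $\cos^2\theta+\sin^2\theta\,y$, $\sin^2\theta+\cos^2\theta\,y$, and $\sin\theta\cos\theta(1-y)$. When $a\ne c$ we have $\cos^2\theta\ne\frac12$, so their roots $-\tan^2\theta$, $-\cot^2\theta$ and $1$ are distinct, and unique factorization in $\mathbb R[y]$ separates the three types. Different ratios $\mu_r/\lambda_r$ give algebraically independent-looking variables only up to multiplicative relations among the ratios. To handle this, I would pass to a $\mathbb Z$-basis of the multiplicative group they generate, as a Laurent polynomial ring, which is still a UFD. The linear forms with distinct roots then remain distinguishable, except for coordinates sharing the same data, which are grouped into the class. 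If this becomes unwieldy, the fallback is to first apply Proposition~\ref{prop:coordinates}-style perturbations, using Hadamard powers $N^{\circ s}$ combined with paths, to make the eigenvalue ratios generic.
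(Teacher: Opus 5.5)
Your reduction skeleton is sound, and it differs from the paper's: integer samples $N^n$ by paths, Lemma~\ref{lem:transfer} with $F(t)=N^t$, the entrywise target $C_{r_0}^{\otimes n}\otimes J_{2^{d-n}}$, and a positive $n$th root. The gap is in the one step everything rests on, the independence claim, and the reason you give for it does not hold. After passing to a $\mathbb Z$-basis of $\langle\rho_1,\dots,\rho_d\rangle$, with $\rho_r=\mu_r/\lambda_r$, the normalized entries become binomials $\alpha+\beta Y^{e_r}$ in a Laurent ring. They are not linear forms, they are generally reducible, and inequivalent coordinates can share prime factors. For example, if $\rho_{r'}=\rho_r^2$ then $1-Y$ divides $1-Y^2$. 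If $\rho_{r'}=\rho_r^3$ and $\tan^2\theta_{r'}=\tan^6\theta_r$, then $1+\tan^2\theta_r\,Y$ divides $1+\tan^2\theta_{r'}Y^3$, so the $a$-type factors of two non-proportional, unequal-diagonal coordinates share a prime. So ``distinct roots remain distinguishable'' does not follow from unique factorization. Your Hadamard-power fallback also gives no guarantee that the new ratios, which are still algebraic, become multiplicatively independent.

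The claim is nevertheless true, and a short argument works directly in $t$:
\begin{itemize}
\item \emph{Setup.} Flip bits first so that $a_r\ge c_r$ for every $r$; this removes swaps. Extend the identity to complex $t$, since both sides are entire.
\item \emph{Zero sets.} With $\theta_r\in(0,\pi/2)$, the entry $(C_r^t)_{00}$ vanishes exactly at $t=(\log\cot^2\theta_r+(2k+1)\pi\mathrm i)/\log\rho_r$. The entry $(C_r^t)_{11}$ vanishes at $(\log\tan^2\theta_r+(2k+1)\pi\mathrm i)/\log\rho_r$, and $(C_r^t)_{01}$ at $2\pi\mathrm ik/\log\rho_r$.
\item \emph{Separation by real part.} All zeros lie on the imaginary axis except the $a$-type zeros (real part $<0$) and the $c$-type zeros (real part $>0$) of unequal-diagonal coordinates. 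On a line $\operatorname{Re}t=s\ne0$, a contributing coordinate has zeros $s+(2j+1)\pi\mathrm i/L_r$ with $L_r=|\log\rho_r|$. Equal $(s,L_r)$ means proportional $C_r$.
\item \emph{Induction on $L$.} At $s+\pi\mathrm i/L$ only coordinates with $L_r\in\{L,3L,5L,\dots\}$ vanish. Comparing net orders of vanishing for decreasing $L$ shows that each proportionality class has the same total $a$-count and $c$-count on both sides. It then has the same $b$-count, since each coordinate contributes $m$ factors.
\end{itemize}
This is exactly the hypothesis of Lemma~\ref{lem:transfer} for your target.

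The paper avoids this arithmetic altogether. It adjoins a continuous parameter $x$ through the distance kernel $E_{Q_d}(x)$ and loops. It then separates classes by monodromy around the distinct branch points $\mathrm i\beta_g$ of $c_g\pm\sqrt{a_g^2+w_g^2x^2}$. The price is a custom sample-selection, trace and interpolation-in-$x$ procedure and a spectral target with $I_2$ factors. Once completed, your route needs only paths and Lemma~\ref{lem:transfer}, and not the cube structure of $\log N$.
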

\begin{proof}
We recover a Boolean interaction with unequal diagonal entries.

\emph{An available tensor family.}
After a scalar normalization and independent bit flips, write
\[
 C_r=\begin{pmatrix}\theta_r&w_r\\w_r&\theta_r^{-1}\end{pmatrix},
 \qquad \theta_r\ge1,\quad 0<w_r<1.
\]
The logarithm of $N$ has positive weights on cube edges, so
Lemma~\ref{lem:distance} supplies $E_{Q_d}(x)$. For any fixed
positive odd $t$, add $(t-1)/2$ source loops at each terminal of an
$N$-edge, then take a parallel product with this kernel. This supplies
\begin{equation}\label{eq:field-family}
 F_t(x)=\bigotimes_r B_r(x),\qquad
 B_r(x)=\begin{pmatrix}\theta_r^t&w_rx\\w_rx&\theta_r^{-t}\end{pmatrix}.
\end{equation}
This simulation is uniform for rational $0<x<1$.

Group identical pairs $(\theta_r,w_r)$ into classes $g$ of size
$n_g$. Set $a_g=(\theta_g^t-\theta_g^{-t})/2$ and
$c_g=(\theta_g^t+\theta_g^{-t})/2$. Choose $t$ so that
$\beta_g=a_g/w_g$ are distinct among classes with $\theta_g>1$.
Classes with equal $\theta_g$ already have distinct $w_g$.
For different values $\theta_g=e^h,\theta_{g'}=e^k$, the ratio
$\sinh(th)/\sinh(tk)$ is strictly monotone in $t>0$, as follows
from the strict increase of $z\coth z$. Each pair excludes at
most one value of $t$, leaving an admissible positive odd integer.

\emph{Product identities that retain one factor.}
We show that spectral product identities preserve each
unequal-diagonal class separately. The eigenvalue branches and
their ratio are
\[
 \lambda_g^\pm(x)=c_g\pm\sqrt{a_g^2+w_g^2x^2},\qquad
 \delta_g(x)=\lambda_g^-(x)/\lambda_g^+(x).
\]
Both eigenvalues are positive on $(0,1)$. A product of $m$ eigenvalues
of the full tensor $F_t(x)$ has the form
\[
 \prod_g(\lambda_g^+(x))^{mn_g}\delta_g(x)^{k_g},
 \qquad 0\le k_g\le mn_g.
\]
The common $\lambda_g^+$ factors cancel when comparing two such
products, leaving $\prod_g\delta_g^{z_g}\equiv1$ with integer $z_g$.
For $\theta_g>1$, the
square root has a simple branch point at $x=\mathrm i\beta_g$,
where $\mathrm i^2=-1$. Continuing once around that point changes
the sign of this square root and leaves the others unchanged, since
the $\beta_g$ are distinct. Thus it exchanges just the two branches
of class $g$. Applying this continuation to the product identity gives
$\delta_g^{2z_g}\equiv1$.
Since $0<\delta_g<1$, we obtain $z_g=0$. Thus an identity between
equal-length products of eigenvalues of $F_t(x)$ preserves the
number of smaller-branch choices in each unequal-diagonal class.

Fix one such class $g$. These identities permit the spectral
replacement
\[
 T_g(x)=B_g(x)^{\otimes n_g}\otimes I_2^{\otimes(d-n_g)}
\]
in the tensor eigenbasis. This retains the chosen Boolean interaction
and replaces every other factor by equality. To implement it, we
select samples at which the functional product identities are the
only numerical coincidences.

\emph{Implementing the replacement with finite queries.}
Let $K$ be a fixed number field containing all $\theta_h,w_h$.
For an input with $m$ edges, there are at most
$P_m=\binom{m+2^d-1}{2^d-1}$ formal spectral products.
They lie in the fixed-degree extension
\[
 K\bigl(x,\sqrt{a_h^2+w_h^2x^2}:\theta_h>1\bigr)/K(x).
\]
If there are $b$ unequal-diagonal classes, multiply a nonzero
difference by all its algebraic conjugates (its field norm).
The result is a nonzero polynomial of degree at most $D_m=2^bdm$:
the spectral products are integral over $K[x]$, each conjugate
grows at most as $x^{dm}$, and there are at most $2^b$ conjugates.
Every zero of the difference is a zero of this polynomial. Hence there are at most
$B_m=\binom{P_m}{2}D_m$ exceptional real samples at which two
nonidentical products coincide.

\Needspace{13\baselineskip}
The reduction now has three finite steps.
\begin{enumerate}[label=\textup{\arabic*.},leftmargin=2em,itemsep=2pt]
\item \emph{Separate products.} Choose $B_m+n_gm+1$ distinct rational
points in $(0,1)$ of polynomial bit length. Retain those with the
maximum number of distinct numerical spectral products. These are
exactly the nonexceptional samples, so at least $n_gm+1$ remain.
\item \emph{Recover the chosen factor at each sample.} For each retained
$x$, replace input edges by paths of consecutive lengths and use
Vandermonde interpolation to recover the coefficients grouped by
spectral product. Assign each group its product for $T_g(x)$,
as permitted by the branch argument. This gives $Z_{T_g(x)}(G)$;
return it to $K$ by the trace operation in Section~\ref{sec:tools}.
\item \emph{Fix the target.} The value $Z_{T_g(x)}(G)$ is a polynomial
in $x$ of degree at most $n_gm$. Interpolate these values to evaluate
it at $x_0=1/2$.
\end{enumerate}
The sample fields have bounded degree, and all sample counts, path
lengths and bit lengths are polynomial. Each $F_t(x)$ is simulated
from $N$, so every oracle call evaluates the original fixed
interaction on a planar graph.

Finally,
\[
 Z_{T_g(x_0)}(G)
   =2^{(d-n_g)c(G)}Z_{B_g(x_0)}(G)^{n_g}.
\]
Recovering the positive root evaluates the invertible Boolean
matrix $B_g(x_0)$, whose diagonal is unequal. The Boolean hardness
endpoint completes the reduction.
\end{proof}

\begin{theorem}[Positive definite core]\label{thm:pd}
Let $M=M^{\mathsf T}\in\mathbb R_{\ge0}^{q\times q}$ have fixed
algebraic entries, connected support, and $M\succ0$.
Then $\PlGH{M}\in\mathsf{FP}$ if
\[
 q=2^d,\qquad M\cong\gamma T(\boldsymbol\rho),\qquad
 d\ge0,\quad\gamma>0,\quad\boldsymbol\rho\in(0,1)^d,
\]
and $\PlGH{M}$ is $\#\mathrm P$-hard otherwise.
\end{theorem}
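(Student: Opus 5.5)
The tractable direction is immediate from the tensor identity $Z_{A\otimes B}=Z_AZ_B$. We have $Z_{\gamma T(\boldsymbol\rho)}(G)=\gamma^{|E(G)|}\prod_rZ_{W(\rho_r)}(G)$, and each Ising factor is evaluated by the FKT algorithm via the high-temperature expansion of Section~\ref{sec:tools}. For hardness, we take the source language $\{M\}$ with $X=[q]$. The family $\mathcal A$ contains $M$ itself, which is connected, nonnegative and positive definite. Hence the maximizing matrix exists, and by Theorem~\ref{thm:geometry} its logarithmic support is $R\cong K_{s_1}\square\cdots\square K_{s_d}$. If some $s_r\ge3$, Lemma~\ref{lem:potts} gives $\#\mathrm P$-hardness. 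So we may assume every $s_r=2$, which means $R\cong Q_d$, $q=2^d$, and we fix cube coordinates.

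Next we apply Proposition~\ref{prop:coordinates} to $N=M$ itself, which lies in $\mathcal A$, is nonnegative and positive definite, and has connected support. This gives $M=\gamma\bigotimes_rC_r$ with algebraic $C_r>0$ and $C_r\succ0$. Writing $C_r=\begin{pmatrix}a_r&b_r\\b_r&c_r\end{pmatrix}$, we have $a_r,b_r,c_r>0$ and $a_rc_r>b_r^2$. If $a_r\ne c_r$ for some $r$, then Lemma~\ref{lem:boolean-field}, applied to $\gamma^{-1}M$, gives hardness; the known scalar $\gamma$ contributes only the factor $\gamma^{|E(G)|}$. Otherwise every factor is $a_rW(\rho_r)$ with $\rho_r=b_r/a_r\in(0,1)$, where positive definiteness gives $\rho_r<1$. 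Absorbing the scalars into $\gamma$ yields $M\cong\gamma T(\boldsymbol\rho)$. The identification with $\{0,1\}^d$ provides the relabeling.

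One point needs care here. Lemma~\ref{lem:boolean-field} is stated for $\PlGH{N}$ with $N$ the given product, and indeed $N=\gamma^{-1}M$ is exactly our input matrix up to scale. So no availability issue arises: the hardness endpoint applies directly to $M$. We must also check the edge case $d=0$, where $q=1$ and $M=[\gamma]$, which is trivially tractable. Finally, we must confirm that the dichotomy is consistent, i.e.\ that a matrix of the stated form never triggers a hardness case. This holds because the hardness cases are only invoked when the structure fails.

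The main obstacle is conceptual rather than computational. The Cartesian structure is proved for the maximizing matrix, whereas the statement concerns $M$. Proposition~\ref{prop:coordinates} is exactly the bridge: it transfers the cube coordinates of the maximizer to every positive definite nonnegative member of $\mathcal A$. So the delicate step is verifying its hypotheses for $M$, namely membership in $\mathcal A$ with unit weights and connected support so that each $C_r>0$. Once that is checked, the remaining steps are bookkeeping.
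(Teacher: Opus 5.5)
Your proposal is correct and follows the paper's proof essentially step for step. You choose a maximizer in the family generated by $M$, use Theorem~\ref{thm:geometry} and Lemma~\ref{lem:potts} to reduce to a cube, factor $M$ itself via Proposition~\ref{prop:coordinates}, force equal diagonals with Lemma~\ref{lem:boolean-field}, read off $0<\rho_r<1$ from connected support and positive definiteness, and use the FKT algorithm for the converse; your treatment of $\gamma$ is fine (it could equally be absorbed into one factor $C_r$), and the closing ``consistency'' remark is unnecessary because the case split already sends every matrix outside the stated form to a hardness case.
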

\begin{proof}
Assume $\PlGH{M}$ is not $\#\mathrm P$-hard, and choose a maximum
logarithmic support in the available family.
Theorem~\ref{thm:geometry} and Lemma~\ref{lem:potts} leave a cube.
Proposition~\ref{prop:coordinates} factors $M$ in its coordinates,
and Lemma~\ref{lem:boolean-field} forces each factor to have equal
diagonals. Positive definiteness then gives $0<\rho_r<1$.
Conversely, the partition function factors into zero-field Ising
partition functions, each evaluated by the FKT algorithm.
\end{proof}

\begin{corollary}[A whole family in the same coordinates]\label{cor:common}
Suppose the source is not $\#\mathrm P$-hard and its available
family $\mathcal A$ contains a matrix $M$ with connected support and
\[
 M\ge0,\qquad M\succ0.
\]
Then the color set has size $2^d$ and admits one identification
with $\{0,1\}^d$ such that every $N\in\mathcal A$ with $N>0$ has,
in these same coordinates,
\[
 N=cT(\boldsymbol\rho),\qquad c>0,\quad
 \boldsymbol\rho\in\mathbb R_{>0}^d.
\]
\end{corollary}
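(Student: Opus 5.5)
The plan is to combine the chain that proves Theorem~\ref{thm:pd} with Proposition~\ref{prop:coordinates}, but applied to the whole available family rather than a single matrix. Since $\mathcal A$ contains a connected nonnegative positive definite matrix, the setup of Section~\ref{sec:geometry} applies: choose $M_*\in\mathcal A$ connected, $M_*\ge0$, $M_*\succ0$, maximizing $|E(R_{M_*})|$. Theorem~\ref{thm:geometry} gives $R\cong K_{s_1}\square\cdots\square K_{s_d}$. Because the source is not $\#\mathrm P$-hard, Lemma~\ref{lem:potts} excludes any $s_r\ge3$, so $R\cong Q_d$ and the color set has size $2^d$. We fix one identification of $V(R)$ with $\{0,1\}^d$ once and for all; this is the common coordinate system.

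Next, let $N\in\mathcal A$ with $N>0$. If $N\succ0$, Proposition~\ref{prop:coordinates} gives $N=\gamma\bigotimes_rC_r$ in the fixed coordinates, with $C_r>0$ (the support is connected), $C_r\succ0$, and algebraic entries. Since $N$ is available, $\PlGH{N}\le_T$ the source. Lemma~\ref{lem:boolean-field} therefore forces $a_r=c_r$ in every factor, for otherwise the source would be hard. Normalizing each factor gives $C_r=a_rW(\rho_r)$, so $N=cT(\boldsymbol\rho)$ with $c=\gamma\prod_ra_r$ and $\rho_r=b_r/a_r\in(0,1)$.

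The main obstacle is an entrywise positive $N$ that is not positive definite, since then $\log N$ is undefined and Proposition~\ref{prop:coordinates} does not apply directly. I would reduce this case to the previous one by a positive definite modification that keeps the needed information. The first option is $N^2=NN$, realized by a two-edge path: it is positive and positive semidefinite, and positive definite when $N$ is invertible. The second option is a parallel product with a distance kernel, $N\circ E_{Q_d}(x)$, which is positive definite for small $x$ only if $N$ is; this fails, so I would instead use the Gram matrix $K=(N N)^{\circ p}$ from Lemma~\ref{lem:weights-remove}. By that lemma, $K\succ0$ whenever the rows of $N$ are pairwise nonproportional.

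The more robust route is to handle the whole family through $A_\epsilon=M_*^{t}\,N\,M_*^{t}$ together with spectral data, and I would take the following concrete plan.

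\textbf{(a) Invertible $N$.} For $s$ even, $N^{s}\succ0$ and $N^{s}>0$, so by the previous step $N^{s}=c_sT(\boldsymbol\rho^{(s)})$ in the common coordinates. The matrix $N^2$ is a positive definite Ising tensor, whose eigenvectors are the characters $\chi_S(z)=(-1)^{S\cdot z}$. Then $N$ is symmetric and commutes with $N^2$. If the eigenvalues of $N^2$ are simple, $N$ is diagonal in the character basis, and its eigenvalues are $\pm$ square roots. I would then use availability of $N\circ N$, $N^3$, and similar matrices, all positive, to force the signs to factor as a product, giving $N=cT(\boldsymbol\rho)$ with some $\rho_r>1$ allowed.

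\textbf{(b) Degeneracies and singular $N$.} Eigenvalue degeneracies, and singular $N$, are handled by perturbation: consider $N+\eta M_*$, or the product interpolation $N\circ M_*^{\circ k}$. These are available and positive definite for generic rational parameters. The step-(a) factorization in the common coordinates then holds for each parameter value, and letting the parameter tend to zero, by analytic identity in the parameter, gives the tensor form for $N$ itself.

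I expect this last step, extending the factorization from positive definite members to indefinite or singular positive members while keeping the same coordinates, to be where the real work lies.
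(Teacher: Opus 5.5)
There is a genuine gap, and it is exactly where you expected it: the passage from positive definite members to a general entrywise positive $N$. Your first two paragraphs match the paper and are fine: cube support via Theorem~\ref{thm:geometry} and Lemma~\ref{lem:potts}, then Proposition~\ref{prop:coordinates} with Lemma~\ref{lem:boolean-field} for $N>0$, $N\succ0$.

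The problem is that you discarded the decisive tool on a false claim. The matrix $N\circ E_{Q_d}(x)$ is positive definite for small $x$ \emph{regardless} of the definiteness of $N$. Indeed $E_{Q_d}(x)\to I$ as $x\downarrow0$, so $N\circ E_{Q_d}(x)\to\operatorname{diag}(N_{ii})$, which is positive definite because $N>0$. For example, $N=\left(\begin{smallmatrix}1&2\\2&1\end{smallmatrix}\right)$ is indefinite, while $N\circ E_{Q_1}(x)$ is positive definite for $x<1/2$. This matrix is available: Lemma~\ref{lem:distance} supplies $E_{Q_d}(x)$ for rational $x$, and parallel composition gives the entrywise product. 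It is entrywise positive, so your positive definite step makes it $cT(\boldsymbol\eta)$ in the fixed coordinates. Dividing entry $(i,j)$ by $x^{d(i,j)}$ then gives $N=cT(\eta_1/x,\ldots,\eta_d/x)$. That is the paper's entire argument. It handles singular and indefinite $N$ at once, and allows $\rho_r\ge1$ automatically.

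Your replacement plan does not close the gap.
\begin{itemize}
\item In (a), after diagonalizing $N$ in the character basis you still have to show that the eigenvalue magnitudes and signs factor. You propose using $N\circ N$ and $N^3$, but these are positive and not positive definite in general. Assuming tensor structure for them is circular, since that is the statement being proved, and the degenerate-spectrum case is left open.
\item In (b), $N+\eta M_*$ is not shown to be available. The family $\mathcal A$ is generated by planar gadget compositions and the interpolations of Section~\ref{sec:tools}; sums are not among these operations, and you give no interpolation producing one. Moreover, for indefinite $N$ this matrix is not positive definite for small $\eta$, so the limit $\eta\to0$ cannot run through positive definite members.
\item Your other candidate, $N\circ M_*^{\circ k}$, is not positive definite for ``generic'' parameters. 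It is positive definite for \emph{large} $k$, since $M_*=\gamma T(\boldsymbol\rho)$ with $\rho_r<1$ gives $\gamma^{-k}M_*^{\circ k}\to I$. That is precisely the diagonal-dominance mechanism you rejected for $E_{Q_d}(x)$. Once you use it, entrywise division finishes the proof, and no limit or analytic continuation is needed.
\end{itemize}
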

\begin{proof}
The preceding proof and Proposition~\ref{prop:coordinates} first
give common coordinates for the positive definite members.
For general $N>0$, the available matrix
$N\circ E_{Q_d}(x)$ tends to the positive diagonal
$\operatorname{diag}(N_{ii})$ as $x\downarrow0$.
For small rational $x>0$ it is therefore $cT(\boldsymbol\eta)$
in these coordinates. Entrywise comparison gives
$N=cT(\eta_1/x,\ldots,\eta_d/x)$.
\end{proof}

\section{Nonnegative interactions and vertex weights}\label{sec:nonnegative}

We now remove the definiteness assumption. Support interpolation
first limits the possible zero patterns. Normalization then separates
amplitudes from a core with nonproportional rows. A symmetric core
is controlled by the common-coordinate result; a rectangular core
requires its two sets of coordinates to be aligned. Finally, moments
restore the amplitudes and identify the admissible vertex weights.

\subsection{The zero--one classification}

\begin{proposition}[Support rigidity]\label{prop:zero-one}
Let $A=A^{\mathsf T}\in\{0,1\}^{q\times q}$ and let
$w\in\mathbb R_{>0}^q$ be fixed and algebraic. A support component
of $A$ is \emph{basic} if it is a fully looped clique, a loopless
complete bipartite graph, or a loopless isolated vertex.
If every support component is basic, then
\[
 \PlGH{A}\in\mathsf{FP},\qquad \PlGH{A,w}\in\mathsf{FP}.
\]
Otherwise $\PlGH{A}$ is $\#\mathrm P$-complete and
$\PlGH{A,w}$ is $\#\mathrm P$-hard.
\end{proposition}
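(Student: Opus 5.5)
The plan has two halves: tractability for basic components, and hardness otherwise via the positive definite core.

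\emph{Tractability.} Here the argument follows the algorithmic discussion after Theorem~\ref{thm:main}. A fully looped clique $J_k$ is the rank-one form $aa^{\mathsf T}$ with $a=\mathbf1_k$. A loopless complete bipartite component is the bipartite form with $a=\mathbf1_k$, $b=\mathbf1_\ell$. A loopless isolated vertex is $[0]$. In all three cases $d=0$. The partition function is then computed per connected input component by summing over target components, so both $\PlGH{A}$ and $\PlGH{A,w}$ lie in $\mathsf{FP}$. With weights, each vertex contributes the local sum $\sum_{i}w_i$ over its side or clique, and isolated input vertices also pick up the zero singleton weights.

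\emph{Hardness.} First, by Lemma~\ref{lem:weights-remove} and the remark after \eqref{eq:twins}, hardness for $\PlGH{A,w}$ follows from hardness of the twin-quotient unweighted problem. This is because distinct rows of a zero--one matrix are nonproportional. Membership in $\#\mathrm P$ is clear for $\PlGH{A}$. Now fix a nonbasic support component $S$. Lemma~\ref{lem:root} gives $\PlGH{A[S]}\le_T\PlGH{A}$, so we may assume the support is connected. Merging twins via \eqref{eq:twins} then yields a weighted problem $(R,\bar w)$ whose core $R$ has pairwise distinct, hence nonproportional, zero--one rows. By Lemma~\ref{lem:weights-remove} again, it suffices to show $\PlGH{R}$ is hard. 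Nonbasic means $R$, after merging twins, has more than one vertex and is not $[1]$ or the loopless single edge $\begin{pmatrix}0&1\\1&0\end{pmatrix}$. Indeed, a looped clique collapses to $[1]$, and a complete bipartite graph collapses to that edge.

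\emph{Reduction to Theorem~\ref{thm:pd}.} Next I would produce a connected, nonnegative, positive definite available matrix on the core colors that is not of the form $\gamma T(\boldsymbol\rho)$.
\begin{itemize}
\item If $R$ is not bipartite, take $K=(R^2)^{\circ p}$ with $p=\max(1,q-1)$, as in Lemma~\ref{lem:weights-remove}. It is positive definite, nonnegative, and its support is the connected two-step graph.
\item If $R$ is bipartite, work on one side: $RR^{\mathsf T}$ restricted to a side, realized through a two-terminal gadget with both endpoints in that side via Lemma~\ref{lem:root}.
\end{itemize}
Theorem~\ref{thm:pd} then gives hardness unless $K\cong\gamma T(\boldsymbol\rho)$. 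In that remaining case I would apply Corollary~\ref{cor:common}. Stretched and thickened gadgets such as $R^{2h}$ and its entrywise powers, and $K\circ R^2$, lie in the available family. Whenever they are entrywise positive, they must be Ising tensors in common coordinates. This tightly restricts the $0/1$ pattern of $R$, since entries of $R^2$ are common-neighbour counts, i.e.\ integers. I expect that the equal-diagonal and constant-off-diagonal-per-direction constraints force $R^2$ to be $c$ times a tensor of $J$'s. Equivalently, all rows of $R$ are equal, or all rows on each side are equal in the bipartite case, which contradicts distinct rows unless $R$ is basic.

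\emph{Main obstacle.} The hardest step is ruling out residual cases where $K$ genuinely is an Ising tensor with some $\rho_r\ne1$, for instance $q=2$ and $R=\begin{pmatrix}1&1\\1&0\end{pmatrix}$. There $R^2=\begin{pmatrix}2&1\\1&1\end{pmatrix}$ has unequal diagonal and is hard by the Boolean endpoint. The approach I would try first is to exploit the loop unary $i\mapsto R_{ii}$, or the degree unary $i\mapsto(R^2)_{ii}$. These are available, and multiplying them into $K$ breaks any diagonal equality unless all degrees and loops are constant. Lemma~\ref{lem:boolean-field} then applies after Proposition~\ref{prop:coordinates} factors the modified matrix. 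In the regular case I would instead compare the integrality of $R^{2}$ with Ising entries $\gamma\prod\rho_r^{z_r}$. The diagonal of $R^2$ is the degree $\deg$, and neighbouring entries are $\deg\cdot\rho_r$, with $\rho_r<1$ integral multiples. Distinct rows would then force some common-neighbour count to differ inconsistently across a tensor direction.
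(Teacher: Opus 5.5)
\textbf{Genuine gap: the step that rules out a nonsingleton Ising block is not proved.}

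Your setup matches the paper's. You restrict to a component, merge twins, remove weights, and apply Theorem~\ref{thm:pd} to the connected blocks of $Q=(A^2)^{\circ p}$. Taking entrywise $p$th roots, every block of $A^2$ is then $c\bigotimes_r W(\rho_r)$ with $0<\rho_r<1$. In the bipartite case, use these blocks rather than $RR^{\mathsf T}$ on a side, which can be singular.

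The gap is the step that actually finishes the proof: excluding a nonsingleton block.
\begin{itemize}
\item You only say you ``expect'' the constraints to force this. The sketch about distinct rows making a common-neighbour count differ across a direction is not an argument.
\item The gadgets you list ($R^{2h}$, entrywise powers, $K\circ R^2$) are automatically Ising tensors in the same coordinates once $R^2$ is one. They carry no new information.
\item Your example $R=\left(\begin{smallmatrix}1&1\\1&0\end{smallmatrix}\right)$ is not a residual case: $R^2$ has unequal diagonal, so Theorem~\ref{thm:pd} already makes it hard.
\item The target ``$R^2$ is $c$ times a tensor of $J$'s'' is misstated, since $\rho_r=1$ is excluded by positive definiteness. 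What must be shown is that there are no factors at all, i.e.\ every block of $A^2$ is $1\times1$. Then the Gram matrix of the rows is diagonal, so distinct rows have disjoint supports. Symmetry and connectedness then leave only $[1]$ and the single edge.
\end{itemize}

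The missing idea is to combine regularity with integrality.
\begin{enumerate}
\item The constant diagonal $A^2_{ii}=\sum_jA_{ij}$ makes $A$ $k$-regular.
\item In the bipartite case $A=\left(\begin{smallmatrix}0&V\\V^{\mathsf T}&0\end{smallmatrix}\right)$, you also need equal degrees on the two sides. Positive definiteness of both $VV^{\mathsf T}$ and $V^{\mathsf T}V$ gives equal part sizes, and counting edges then gives equal degrees.
\item A nonsingleton block $B$ of $A^2$ therefore has diagonal $k$ and row sum $k^2$. Since each $k\rho_r$ is an integer entry, $B=k\bigotimes_{r=1}^{\ell}W(s_r/t_r)$ with $\ell\ge1$ and coprime integers $0<s_r<t_r$.
\item Every entry $k\prod_{r\in S}s_r/t_r$ is an integer. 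Applying this with $S_p=\{r:p\mid t_r\}$ for each prime $p$ gives $\tau:=\prod_rt_r\mid k$, so $k\ge\tau$.
\item The row sum gives $k\tau=\prod_r(t_r+s_r)<\tau^2$, so $k<\tau$, a contradiction.
\end{enumerate}
Integrality alone does not suffice: $\left(\begin{smallmatrix}2&1\\1&2\end{smallmatrix}\right)=2W(1/2)$ is an integer Ising tensor. It is the row-sum identity, coming from regularity, that ties $k$ to the parameters and rules it out.
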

\begin{proof}
We first prove hardness for nonbasic supports.
Restrict to a connected component; a zero singleton is already
basic. On every other component, merge equal rows.
Lemmas~\ref{lem:root} and~\ref{lem:weights-remove}
transfer hardness back to the original weighted problem.
Let $q$ be the order of the resulting zero--one matrix $A$. It has
distinct nonzero rows, so
$Q=(A^2)^{\circ p}\succ0$ for $p=\max(1,q-1)$.
Unless some connected block of $Q$ is hard, Theorem~\ref{thm:pd}
makes every block of $Q$ an Ising tensor. Positive entrywise
$p$th roots act factor by factor, replacing each parameter
$\eta_r\in(0,1)$ by $\eta_r^{1/p}\in(0,1)$. Thus every block of
$A^2$ has the form
\begin{equation}\label{eq:integer-gram}
 c\bigotimes_{r=1}^{\ell}W(\rho_r),\qquad 0<\rho_r<1.
\end{equation}
In particular these blocks are positive definite with constant
diagonal.

The components of $A^2$ are the even-walk classes of the connected
support of $A$: one class in the nonbipartite case, or the two parts
in the bipartite case. In the former case the constant diagonal
$A^2_{ii}=\sum_jA_{ij}$ makes $A$ regular, say of degree $k$.
In the latter, write $A=\left(\begin{smallmatrix}0&V\\V^{\mathsf T}&0\end{smallmatrix}\right)$.
Positive definiteness of both $VV^{\mathsf T}$ and $V^{\mathsf T}V$
makes the two parts equally large; their constant degrees then agree
by counting edges. Again $A\mathbf1=k\mathbf1$.

If every block of $A^2$ is a singleton, distinct rows of $A$ have
disjoint supports, and symmetry forces each row to have at most one
nonzero entry. Connectedness leaves the basic cases. Otherwise take
a nonsingleton integer block $B$ of $A^2$. Its diagonal is $k$ and
row sum is $k^2$, so \eqref{eq:integer-gram} becomes
$B=k\bigotimes_{r=1}^{\ell}W(s_r/t_r)$, with $\ell\ge1$ and
coprime integers $0<s_r<t_r$. All numbers
$k\prod_{r\in S}s_r/t_r$ are integers. For a prime $p$, take
$S_p=\{r:p\mid t_r\}$. Each selected numerator is coprime to $p$,
so integrality for $S_p$ gives
\[
 v_p(k)\ge\sum_{r\in S_p}v_p(t_r)=\sum_rv_p(t_r),
\]
where $v_p$ is the exponent of $p$. Thus
$\tau:=\prod_rt_r\mid k$. But the row sum gives
\[
 k\tau=\prod_r(t_r+s_r)<\prod_rt_r^2=\tau^2,
\]
contradicting $k\ge\tau$. Thus a nonbasic support yields a hard
block.

Conversely, basic components are evaluated by independent color sums and
the two bipartite orientations. The unweighted count belongs to
$\#\mathrm P$, giving completeness.
\end{proof}

Consequently a nonnegative matrix, with positive background weights,
is hard unless every component is a positive symmetric block, a
positive rectangular bipartite block, or a zero singleton.

\subsection{Symmetric and rectangular cores}

\begin{lemma}\label{lem:positive-core}
Let $B=B^{\mathsf T}\in\mathbb R_{>0}^{q\times q}$ have fixed
algebraic entries, distinct rows, and $B_{ii}=1$ for all $i$.
Either $\PlGH{B}$ is $\#\mathrm P$-hard, or
\[
 q=2^d,\qquad B\cong T(\boldsymbol\rho),\qquad
 \rho_r>0,\quad\rho_r\ne1\quad(r\in[d]).
\]
\end{lemma}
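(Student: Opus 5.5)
Assume $\PlGH{B}$ is not $\#\mathrm P$-hard. The plan is to build a positive definite, entrywise positive member of the available family $\mathcal A$ generated from $B$, and then apply Corollary~\ref{cor:common} to that family with $N=B$ itself.

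\textbf{Step 1: a positive definite Gram gadget.} The rows of $B$ are distinct and every diagonal entry equals $1$. Two proportional rows $r_i=c\,r_j$ would then satisfy $B_{ij}=cB_{jj}=c$ and $B_{ji}=cB_{ii}\cdot$, more precisely comparing the $i$th and $j$th coordinates gives $1=cB_{ji}$ and $B_{ij}=c$. Hence $c^2=1$, so $c=1$ by positivity, and the rows would be equal, a contradiction. So the rows are pairwise nonproportional. By the tensor-independence part of Lemma~\ref{lem:weights-remove} with $w=\mathbf1$ and $p=\max(1,q-1)$, the matrix
\[
M=(B^2)^{\circ p}
\]
is positive definite. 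It is realized by a two-edge path thickened $p$ times, so it lies in $\mathcal A$. It is also entrywise positive, hence has connected support.

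\textbf{Step 2: common coordinates.} Corollary~\ref{cor:common} now applies to $\mathcal A$. It gives $q=2^d$ and a single identification of $[q]$ with $\{0,1\}^d$ such that every entrywise positive member of $\mathcal A$ has the form $cT(\boldsymbol\rho)$ in those coordinates. Since $B>0$ and $B\in\mathcal A$, we get $B\cong cT(\boldsymbol\rho)$ with $\boldsymbol\rho\in\mathbb R_{>0}^d$. The diagonal of $T(\boldsymbol\rho)$ is identically $1$ and $B_{ii}=1$, so $c=1$.

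\textbf{Step 3: nondegeneracy.} Suppose $\rho_r=1$ for some $r$. Then the factor $W(1)=J_2$ makes the rows indexed by $z$ and $z+e_r$ identical. This contradicts the assumption that $B$ has distinct rows, so $\rho_r\ne1$ for all $r$. The case $d=0$, i.e.\ $q=1$ and $B=[1]$, is consistent with the statement.

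\textbf{Main point to check.} The work is essentially in confirming that Corollary~\ref{cor:common} needs only that $\mathcal A$ contain some connected, nonnegative, positive definite matrix, rather than requiring $B$ itself to be positive definite. Step~1 supplies exactly such a matrix. The corollary also covers every $N>0$ in $\mathcal A$, which includes $B$ even when $B$ is singular or indefinite. I would double-check two things:
\begin{itemize}
\item $\rho_r>1$ is permitted by the corollary's conclusion, so no sign restriction on $\rho_r-1$ is lost.
\item The hardness hypothesis in the corollary is exactly the negation of "$\PlGH{B}$ is $\#\mathrm P$-hard", since the family is generated from the source $B$.
\end{itemize}
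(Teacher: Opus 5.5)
Your proposal is correct and matches the paper's proof step for step: row nonproportionality from the unit diagonal, positive definiteness of the thickened two-edge path $(B^2)^{\circ p}$ via Lemma~\ref{lem:weights-remove}, Corollary~\ref{cor:common} applied with $N=B$, the unit diagonal forcing $c=1$, and distinct rows excluding $\rho_r=1$. The opening sentence of your Step~1 is garbled before you correct it; the clean version is $B_{ij}=c$ from coordinate $j$ and $1=cB_{ji}=c^2$ from coordinate $i$.
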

\begin{proof}
Two proportional rows $B_i=cB_j$ would give $1=cB_{ij}$ and
$B_{ij}=c$, hence $c=1$. Thus $(B^2)^{\circ p}$ is positive
definite for some fixed $p$. Corollary~\ref{cor:common} applies
to the family generated from $B$, giving $B=cT(\boldsymbol\rho)$
unless the source is hard. Unit diagonal sets $c=1$, and distinct
rows exclude $\rho_r=1$.
\end{proof}

\paragraph{Rectangular cores.}

For $V>0$ write $A_V=\left(\begin{smallmatrix}0&V\\V^{\mathsf T}&0\end{smallmatrix}\right)$,
on disjoint color sets $X,Y$. Counting on a prescribed input
bipartition $L\to X$, $R\to Y$ is Turing equivalent to the ordinary
$A_V$ problem: one root restriction determines the orientation of
a connected bipartite input, and summing the two orientations
recovers the ordinary value. Nonbipartite inputs contribute zero;
isolated vertices contribute the appropriate weight sums. To use a
same-side Gram $VV^{\mathsf T}$ on an arbitrary planar graph,
subdivide every input edge and sum the new vertex over $Y$, keeping
the original vertices in $X$. The subdivision supplies a bipartite
instance with the required prescribed orientation.

\begin{proposition}[Rectangular rigidity]\label{prop:rectangular}
Let $B\in\mathbb R_{>0}^{p\times q}$ have fixed algebraic entries,
pairwise nonproportional rows, and pairwise nonproportional columns.
Either $\PlGH{A_B}$ is $\#\mathrm P$-hard, or there are permutation
matrices $P,Q$ such that
\[
 p=q=2^d,\qquad PBQ^{\mathsf T}=cT(\boldsymbol\rho),\qquad
 c>0,\quad\boldsymbol\rho\in(0,1)^d.
\]
\end{proposition}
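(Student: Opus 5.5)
Assume $\PlGH{A_B}$ is not $\#\mathrm P$-hard. Our plan is to apply the symmetric theory to each side separately, and then align the two coordinate systems.

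\emph{Step 1: each side is an Ising cube.} Via the subdivision gadget described above, the same-side Gram matrices $G_X=BB^{\mathsf T}$ on $X$ and $G_Y=B^{\mathsf T}B$ on $Y$ are available as planar edge interactions. Both are entrywise positive. Because the rows (respectively columns) of $B$ are pairwise nonproportional, Lemma~\ref{lem:weights-remove} makes $(G_X)^{\circ p'}\succ0$ and $(G_Y)^{\circ p'}\succ0$ for a fixed $p'$. Corollary~\ref{cor:common} then applies to the family generated on $X$, and separately to the family on $Y$. This gives $p=2^{d}$ and $q=2^{d'}$, together with fixed identifications $X\cong\{0,1\}^d$ and $Y\cong\{0,1\}^{d'}$. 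In these coordinates every positive available matrix on $X$ (respectively $Y$) is a scalar times an Ising tensor. In particular $G_X=c_XT(\boldsymbol\eta)$ and $G_Y=c_YT(\boldsymbol\eta')$. Positive definiteness of $G_X$ would need an extra argument if it failed, but we can pass to $G_X^{\circ p'}$ and take entrywise roots factor by factor, as in Proposition~\ref{prop:zero-one}. The nonzero spectra of $BB^{\mathsf T}$ and $B^{\mathsf T}B$ coincide. Combined with positive definiteness of both Grams, this gives $p=q$, hence $d=d'$, and the multisets $\{\eta_r\}$ and $\{\eta'_r\}$ agree up to the scalar.

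\emph{Step 2: align the coordinates through Fourier analysis of $B$.} Expand $B$ in the character bases $\chi_S(x)=(-1)^{S\cdot x}$ of the two cubes, writing $B=\sum_{S,S'}\hat B(S,S')\chi_S\chi_{S'}^{\mathsf T}$. Then $BB^{\mathsf T}$ diagonal in the $X$-characters with eigenvalue $\prod_{r\in S}\lambda_r^-\prod_{r\notin S}\lambda_r^+$ says that the rows of $\hat B$ are orthogonal with prescribed norms, and the same holds for the columns. Thus $\hat B$ equals a diagonal scaling times an orthogonal matrix, on both sides. Next we use distance kernels to separate Fourier degrees. The kernel $E_{Q_d}(x)$ on each side is available by Lemma~\ref{lem:distance}. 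Composing it with $B$, and then taking parallel products with $B$ or with the Gram matrices and interpolating, will show that $\hat B$ maps each eigenspace of $G_Y$ into the matching eigenspace of $G_X$. It also maps each degree level of $E_{Q_d}$ into the matching level. This uses spectral interpolation (Lemma~\ref{lem:interpolation}) to isolate projectors onto single eigenvalues of the commuting family generated by the Grams and $E_{Q_d}(x)$.

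\emph{Step 3: the main obstacle.} After Step 2, $\hat B$ can still mix characters within a single degree when several $\rho_r$ coincide. To rule out this mixing, we plan to use positivity of $B$ together with a norm inequality. We compare $\|B\|$ measured through the $\ell^\infty$-type bound $B_{xy}\le\sum|\hat B(S,S')|$ with the value forced by the Ising row sums. Equality in a Cauchy–Schwarz/Fourier-norm bound should hold only when $\hat B$ is a signed permutation on degree-one characters. Such a permutation yields permutations $P,Q$ with $PBQ^{\mathsf T}$ a tensor of $2\times2$ positive factors. Equal diagonals, and hence the form $cT(\boldsymbol\rho)$, then follow from Lemma~\ref{lem:boolean-field}, applied to the symmetric matrix $BB^{\mathsf T}$ computed factorwise. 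A factor $\begin{pmatrix}\alpha&\beta\\\gamma&\delta\end{pmatrix}$ whose Gram matrices on both sides have equal diagonals must be, after possibly swapping the columns, of the form $c\,W(\rho)$. Nonproportionality gives $\rho\ne1$, and a flip arranges $\rho<1$. I expect that pinning down the exact norm-equality case in Step 3 is the hardest part. The fallback is to use the product interpolation of Lemma~\ref{lem:transfer} to show directly that $\log$-type supports of $B^{\mathsf T}E_{Q_d}(x)B$ stay within the cube's edges.
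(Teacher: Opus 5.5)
Your Steps~1 and~2 broadly follow the paper's route. You form Gram matrices on both sides, take entrywise roots to get positive definiteness and $p=q$, pass to the Walsh form $s_0^{-1}H_X^{\mathsf T}BH_Y=\Theta O$ (your $\hat B$ up to normalization), and use noise kernels to make $O$ block diagonal by Fourier degree. For the last point you still owe the multiplicativity argument $h(S)=\sum_{i\in S}h(\{i\})$, $h(\{i\})\ge1$, $h([d])\le d$, which shows that degrees are matched and not merely kept in blocks.

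The genuine gap is Step~3, which is the heart of the proposition, and the tools you propose there cannot close it. Write $O=\bigoplus_rU_r$ by degree, and suppose all $\theta_i$ are equal. Then replacing the $U_r$ by arbitrary orthogonal matrices has three effects:
\begin{itemize}
\item it leaves both Gram matrices unchanged;
\item it keeps every same-side series composition of $B$, $B^{\mathsf T}$ and the noise kernels diagonal in the Walsh basis;
\item for small $\theta_i$, it keeps $B$ entrywise positive and invertible, with constant row and column sums.
\end{itemize}
So positivity and row sums cannot force equality in a bound such as $B_{xy}\le\sum|\hat B(S,S')|$. Your fallback is vacuous for the same reason: $B^{\mathsf T}E_{Q_d}(x)B$ is positive and available, hence already an Ising tensor by Corollary~\ref{cor:common}, whatever the $U_r$ are. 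Your claim that a signed permutation in degree one already yields the tensor form is also unjustified. Distinct $\theta_i$ do force $U_1$ to be a signed permutation, but a coincidence such as $\theta_1\theta_4=\theta_2\theta_3$ still leaves $U_2$ free to rotate inside that eigenspace.

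The missing idea is that only parallel composition breaks this orthogonal invariance. The paper first applies noise on $X$, setting $F_\epsilon=\frac n{s_0}K_X(\epsilon)B$. It then squares this entrywise with two parallel edges and joins two copies at an internal $Y$ vertex, obtaining $L_\epsilon=n^{-2}F_\epsilon^{\circ2}(F_\epsilon^{\circ2})^{\mathsf T}$. By Corollary~\ref{cor:common}, $L_\epsilon$ is again an Ising tensor in the fixed $X$ coordinates. Its eigenvalues $\mu_S$ are the squared row norms of the Fourier matrix of $F_\epsilon^{\circ2}$, and they satisfy $\mu_S\mu_\varnothing^{|S|-1}=\prod_{i\in S}\mu_{\{i\}}$. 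Comparing lowest-order terms gives $\mu_S=4^{|S|}\theta_S^2\epsilon^{2|S|}+O(\epsilon^{2|S|+1})$.

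In degree two this is equality in $\|(U_2)_{\{i,k\},*}+P_2(U_1)_{\{i,k\},*}\|\le1+\|P_2(U_1)_{\{i,k\},*}\|\le2$, where $P_2(U_1)_{\{i,k\},\{j,l\}}=(U_1)_{ij}(U_1)_{kl}+(U_1)_{il}(U_1)_{kj}$. For orthonormal rows $u,v$ one has $\sum_{j<l}(u_jv_l+u_lv_j)^2=1-2\sum_ju_j^2v_j^2$, so equality forces disjoint row supports. Hence $U_1$ is a signed permutation, and after relabeling $U_1=U_2=I$. An induction on $r$ uses the lowest-order row $\theta_S\bigl(2(U_r)_{S,*}+(2^r-2)e_S\bigr)$ and the same norm equality to give $U_r=I$ for all $r$. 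Only then does $B$ factor as $s_0\bigotimes_i\frac12\bigl(\begin{smallmatrix}1+\theta_i&1-\theta_i\\1-\theta_i&1+\theta_i\end{smallmatrix}\bigr)$.
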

\begin{proof}
Assume the source is not hard. Tensor Gram matrices on both sides
give positive definite available matrices. Choose the common cube
coordinates of Corollary~\ref{cor:common} separately on $X,Y$.
For a fixed $h$, $(BB^{\mathsf T})^{\circ h}$ and
$(B^{\mathsf T}B)^{\circ h}$ are positive definite Ising tensors.
For either Gram matrix $G$, taking roots factor by factor gives
\[
 G^{\circ h}=c\bigotimes_rW(\eta_r)
 \quad\Longrightarrow\quad
 G=c^{1/h}\bigotimes_rW(\eta_r^{1/h})\succ0,
 \qquad 0<\eta_r<1.
\]
Hence $B$ is square of order $n=2^d$.
For $n=1$ the conclusion is immediate; assume $d\ge1$.
Their constant Perron vectors imply
$B\mathbf1=B^{\mathsf T}\mathbf1=s_0\mathbf1$ for some $s_0>0$:
the two Grams have the same largest eigenvalue, and $B$ carries
one positive Perron vector to the other.

Let $H_X,H_Y$ be the orthogonal Walsh matrices in the chosen
coordinates, with entries $n^{-1/2}\chi_S(x)$,
$\chi_S(x)=(-1)^{\sum_{i\in S}x_i}$ for $S\subseteq[d]$.
The \emph{Fourier degree} of this basis vector is $|S|$; the empty
set indexes the constant vector. Put
\[
 C=s_0^{-1}H_X^{\mathsf T}BH_Y=\Theta O,\qquad
 \Theta=\operatorname{diag}(\theta_S),\quad
 \theta_S=\prod_{i\in S}\theta_i,\quad 0<\theta_i<1,
\]
where $O$ is orthogonal. Indeed, $CC^{\mathsf T}$ is diagonal,
and its tensor eigenvalues are $\theta_S^2$; dividing each row by
$\theta_S$ leaves an orthogonal matrix. The empty-set row and column
of $C$ have sole entry $1$ at the empty set. The remaining freedom
is $O$; we show that relabeling the $Y$ bits makes it the identity.

\emph{Noise separates Fourier degrees.}
Distance kernels supply the noise matrix
\[
 K(z)=H\operatorname{diag}(z^{|S|})H^{\mathsf T}
     =\left(\frac{1+z}{2}\right)^d
        E_{Q_d}\!\left(\frac{1-z}{1+z}\right),\qquad 0<z<1.
\]
The subscripts $X,Y$ specify the coordinates on the corresponding
side. Thus $K(z)$ scales a degree-$r$ character by $z^r$.
For rational $z$, the available matrix $s_0^{-2}BK_Y(z)B^{\mathsf T}$
is an Ising tensor in the fixed $X$ coordinates. Thus
$O\operatorname{diag}(z^{|S|})O^{\mathsf T}$ is diagonal.
For fixed $0<z<1$, different degrees have distinct eigenvalues;
each row of $O$ is an eigenvector of $\operatorname{diag}(z^{|T|})$ and is
therefore supported on a single degree $h(S)$. The normalized
Gram eigenvalues are
\[
 \lambda_S(z)=\theta_S^2z^{h(S)},\qquad
 \lambda_\varnothing(z)=1.
\]
The tensor identities $\lambda_S=\prod_{i\in S}\lambda_{\{i\}}$
and $\theta_S=\prod_{i\in S}\theta_i$ give
$h(S)=\sum_{i\in S}h(\{i\})$. The empty row occupies the unique
degree-zero direction. Every other row is orthogonal to it, so
$h(\{i\})\ge1$. Since $h([d])\le d$, all
$h(\{i\})=1$ and $h(S)=|S|$. We have
\begin{equation}\label{eq:fourier-layers}
 C_{S,T}=\begin{cases}
 \theta_S(U_{|S|})_{S,T},&|S|=|T|,\\0,&|S|\ne|T|,
 \end{cases}
\end{equation}
with each $U_r$ an orthogonal matrix indexed by the $r$-element
subsets of $[d]$.

\emph{Parallel edges force equality in a norm bound.}
Apply noise on $X$ and normalize:
\[
 F_\epsilon(x,y)=\frac n{s_0}(K_X(\epsilon)B)_{x,y}
  =\sum_{|S|=|T|}\epsilon^{|S|}\theta_S
                   (U_{|S|})_{S,T}\chi_S(x)\chi_T(y).
\]
Define the double Fourier coefficients by
\[
 F_\epsilon(x,y)^2=\sum_{S,T}D_{S,T}(\epsilon)\chi_S(x)\chi_T(y),
 \qquad H_X^{\mathsf T}F_\epsilon^{\circ2}H_Y=nD(\epsilon).
\]
Character multiplication, $\chi_I\chi_J=\chi_{I\triangle J}$,
where $\triangle$ denotes symmetric difference, gives the convolution
\[
 D_{S,T}(\epsilon)
 =\sum_{\substack{I\triangle J=S\\A\triangle B=T}}
       \epsilon^{|I|+|J|}C_{I,A}C_{J,B}.
\]
Since $|I|+|J|=|S|+2|I\cap J|$, the coefficient of
$\epsilon^{|S|}$ comes exactly from ordered disjoint decompositions
$I\cup J=S$.
Two parallel edges on each side of an internal
$Y$ vertex supply the positive symmetric matrix
$L_\epsilon=n^{-2}F_\epsilon^{\circ2}(F_\epsilon^{\circ2})^{\mathsf T}$.
It is an Ising tensor in the same $X$ coordinates, even if singular,
by Corollary~\ref{cor:common}. Its Fourier matrix is $DD^{\mathsf T}$,
so its eigenvalues are the squared row norms
$\mu_S=\sum_TD_{S,T}^2$. Tensor eigenvalues therefore satisfy, for
$S\ne\varnothing$,
\[
 \mu_S\mu_\varnothing^{|S|-1}=\prod_{i\in S}\mu_{\{i\}}.
\]
These identities hold for rational $0<\epsilon<1$, hence identically
as polynomials in $\epsilon$. Since $\mu_\varnothing(0)=1$ and
$\mu_{\{i\}}=4\theta_i^2\epsilon^2+O(\epsilon^3)$, comparing
lowest-order terms gives
\begin{equation}\label{eq:leading-norm}
 \mu_S=4^{|S|}\theta_S^2\epsilon^{2|S|}
           +O(\epsilon^{2|S|+1}).
\end{equation}

For any matrix $A$, write $A_{S,*}$ for its row indexed by $S$,
and use Euclidean row norms. For $U=U_1$, abbreviate singleton
indices by their elements and define
$P_2(U)_{\{i,k\},\{j,l\}}=U_{ij}U_{kl}+U_{il}U_{kj}$.
The leading degree-two row of $D$ is
$2\epsilon^2\theta_i\theta_k((U_2)_{\{i,k\},*}+P_2(U)_{\{i,k\},*})$;
its possible degree-zero entry is proportional to
$\sum_jU_{ij}U_{kj}=0$ for $i\ne k$.
Equation~\eqref{eq:leading-norm} forces the squared norm of the
parenthesis to be $4$. For orthogonal unit rows $u,v$ of $U$,
\[
 \sum_{j<l}(u_jv_l+u_lv_j)^2
       =1-2\sum_ju_j^2v_j^2\le1.
\]
For the corresponding rows, the triangle inequality gives
\[
 2=\|(U_2)_{\{i,k\},*}+P_2(U)_{\{i,k\},*}\|
   \le1+\|P_2(U)_{\{i,k\},*}\|\le2.
\]
Equality forces
$u_jv_j=0$ for every $j$. The rows of $U$ have disjoint supports,
so $U$ is a signed permutation. Relabel the $Y$ coordinates and
flip bits to make $U_1=I$; equality then gives $U_2=I$.
For $d=1$ the same conclusion follows directly by a bit flip.

Inductively suppose $U_j=I$ for $j<r$, with $r\ge3$.
By the convolution formula, the lowest-order terms in row $S$,
$|S|=r$, come from ordered disjoint decompositions $I\cup J=S$.
Two use an empty set; the other $2^r-2$ use only lower-degree
identity matrices. The coefficient of $\epsilon^r$ is therefore
$\theta_S(2(U_r)_{S,*}+(2^r-2)e_S)$.
Here $e_S$ is the unit row vector at $S$ in degree $r$.
Equation~\eqref{eq:leading-norm} makes its normalized norm $2^r$.
As both $(U_r)_{S,*}$ and $e_S$ are unit vectors, equality forces
$(U_r)_{S,*}=e_S$. Thus all $U_r=I$, and
\[
 B=s_0\bigotimes_{i=1}^d
       \frac12\begin{pmatrix}1+\theta_i&1-\theta_i\\1-\theta_i&1+\theta_i\end{pmatrix}.
\]
Absorb the scalar and set $\rho_i=(1-\theta_i)/(1+\theta_i)$.
\end{proof}

\subsection{Amplitudes and vertex weights}\label{sec:weights}

Normalization merges proportional rows into an Ising core, whereas
the theorem merges only identical rows. The information lost in
normalization is the amplitude and total weight above each core color.
Finite moments recover this information, proving the weighted and
unweighted results together.
For example, let $M=(vv^{\mathsf T})\otimes W(\rho)$, with
$v=(a,b)^{\mathsf T}$, positive $a\ne b$, and $\rho>0$, $\rho\ne1$.
The two amplitude colors above each bit state have proportional
but distinct rows. Diagonal normalization changes $vv^{\mathsf T}$
to $J_2$, merging these colors into one core state. The moment
argument keeps their weighted masses at amplitudes $a,b$ separate.

\begin{theorem}[One fixed positive vertex-weight vector]\label{thm:weighted}
Let $M=M^{\mathsf T}\in\mathbb R_{\ge0}^{q\times q}$ and
$w\in\mathbb R_{>0}^q$ have fixed algebraic entries.
For the identical-row classes $I_1,\ldots,I_s$ of $M$, define
\[
 R_{ab}=M_{ij}\quad(i\in I_a,\ j\in I_b),\qquad
 \bar w_a=\sum_{i\in I_a}w_i.
\]
For each support-component vertex set $S$ of $R$, put
$R_S=R[S]$ and $\bar w_S=\bar w|_S$.
Then $\PlGH{M,w}\in\mathsf{FP}$ if each $(R_S,\bar w_S)$, after
relabeling its colors and weights together, has one of the forms
\begin{align}\label{eq:weighted-forms}
 &(R_S,\bar w_S)=([0],[\lambda]),\qquad \lambda>0;\notag\\
 &R_S=(aa^{\mathsf T})\otimes T(\boldsymbol\rho),\qquad
     \bar w_S=\lambda\otimes\mathbf1_{2^d};\\
 &R_S=\begin{pmatrix}0&ab^{\mathsf T}\\ba^{\mathsf T}&0\end{pmatrix}
          \otimes T(\boldsymbol\rho),\qquad
     \bar w_S=\binom{\lambda}{\eta}\otimes\mathbf1_{2^d}.\notag
\end{align}
For each component, the parameters satisfy
\[
 a,\lambda\in\mathbb R_{>0}^{k},\qquad
 b,\eta\in\mathbb R_{>0}^{\ell},\qquad
 \rho_r>0,\quad\rho_r\ne1\quad(r\in[d]),
\]
with $k,\ell\ge1$ when present and $d\ge0$.
In every other case $\PlGH{M,w}$ is $\#\mathrm P$-hard under
polynomial-time Turing reductions.
\end{theorem}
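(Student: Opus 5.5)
Throughout, use \eqref{eq:twins} to pass from $(M,w)$ to the twin quotient $(R,\bar w)$, and Lemma~\ref{lem:root} to restrict to one support component $(R_S,\bar w_S)$ at a time. For tractability, a connected input is summed over target components. For the rank-one form, each vertex contributes $\sum_i\lambda_i a_i^{\deg v}$, which is constant over the Ising bit string; the remaining factor is then $Z_{T(\boldsymbol\rho)}$, computed by the FKT algorithm. The bipartite form is handled the same way for each of the two orientations. Zero singletons matter only on edgeless inputs.

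For hardness, first pass to the support. Lemma~\ref{lem:interpolation} makes the zero--one support of $R$ available. Proposition~\ref{prop:zero-one} then shows that, unless hardness already follows, each component is entrywise positive, positive rectangular bipartite, or a zero singleton.

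Next normalize away the amplitudes. In a positive component, group colors into classes with proportional rows. Write $R_{ab}=\alpha_a\alpha_b B_{\pi(a)\pi(b)}$, where the core $B$ has pairwise nonproportional rows. Apply weight removal (Lemma~\ref{lem:weights-remove}) to the language generated from $R$; this yields a problem on the core colors, and $\PlGH{B}$ (suitably rescaled to unit diagonal) reduces to it. Lemma~\ref{lem:positive-core}, respectively Proposition~\ref{prop:rectangular} in the bipartite case, then gives either hardness or $B\cong T(\boldsymbol\rho)$ with all $\rho_r\neq1$, or $B=cT(\boldsymbol\rho)$ after permutations.

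The remaining task is to recover the amplitude-weight data. For each core color $z$, consider the moments $m_h(z)=\sum_{a\in\pi^{-1}(z)}\bar w_a\alpha_a^h$. Attach $h$ pendant edges at a vertex, or use loops and an edge to a degree-fixing gadget; after the core is removed, this gives a unary $u_h(z)$ on core colors, available in the core problem. Since there are finitely many distinct amplitudes, a Vandermonde system in $h$ recovers the unary $\nu_\alpha(z)=\sum_{a:\alpha_a=\alpha,\pi(a)=z}\bar w_a$ for each amplitude $\alpha$. There are two cases.
\begin{enumerate}[label=\textup{(\roman*)},leftmargin=2.2em]
\item If $\nu_\alpha$ is constant on $\{0,1\}^d$ for every $\alpha$, then $R_S$ has the product form: proportional-but-distinct rows give the factor $aa^{\mathsf T}$, and $\bar w_S=\lambda\otimes\mathbf1$ as in \eqref{eq:weighted-forms}.
\item Otherwise, some available unary $u$ on the cube is nonconstant. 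I would show that $\PlGH{T(\boldsymbol\rho),u}$ is hard.
\end{enumerate}

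The main obstacle is case (ii). Let $D_u=\operatorname{diag}(u)$. The matrix $T D_u T$, or a thickened version of it, is positive and available. By Corollary~\ref{cor:common} it must equal $cT(\boldsymbol\rho')$ in the same coordinates; otherwise hardness already follows. I would first try to show that this common-coordinate form forces $u$ to be constant. One route is to place $u$ on the middle vertex of a two-path with noise kernels $E_{Q_d}(x)$ and compare the diagonal entries: they equal $\sum_y u(y)x^{2|y\oplus z|}$, which depends on $z$ unless $u$ is constant, contradicting the constant diagonal of $cT$.

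For the bipartite case, run the same argument separately on the $X$ side and the $Y$ side, using the subdivision trick to obtain same-side Grams; this gives $\lambda,\eta$ constant across bit strings. The hard endpoints are then exactly those of Lemmas~\ref{lem:potts} and~\ref{lem:boolean-field}, reached through Corollary~\ref{cor:common}.
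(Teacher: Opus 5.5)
Your proposal is correct and follows the paper's proof. The shared steps are: the twin quotient and root restriction; Proposition~\ref{prop:zero-one} for the zero patterns; normalization of amplitudes to a core handled by Lemma~\ref{lem:positive-core} or Proposition~\ref{prop:rectangular}; finitely many amplitude moments inverted by a Vandermonde system; and a diagonal test on a two-path carrying the field at its middle vertex. That last test is the content of Lemma~\ref{lem:core-fields}. There the paper uses $D^{1/2}BDBD^{1/2}$ with a max/min comparison, and in the bipartite case this couples the two sides. Your noise-kernel version instead uses invertibility of $E_{Q_d}(x^2)$ and treats each side separately; both work.

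One point to tighten: you cannot apply Lemma~\ref{lem:weights-remove} to the language of $R$ directly, because it needs nonproportional rows. First realize the normalization by the loop unary $R_{aa}$ (in the bipartite case, the two-edge return $\sum_j(w_Y)_jV_{ij}^2$), attaching its interpolated power $-1/2$ at edge endpoints. Then merge the resulting twins, and only then remove the weights.
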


\paragraph{Fields on the Ising core.}

\begin{lemma}\label{lem:core-fields}
Let $B=T(\boldsymbol\rho)$ have order $n=2^d$, with
$\rho_r>0$ and $\rho_r\ne1$ for every $r$.
For fixed algebraic $u,\mu,\nu\in\mathbb R_{>0}^{n}$,
the respective tractable cases are
\[
 \begin{aligned}
 \PlGH{B,u}:&\qquad u=\alpha\mathbf1_n
                 &&(\alpha>0),\\
 \PlGH{A_B,(\mu,\nu)}:&\qquad
     \mu=\alpha\mathbf1_n,\quad\nu=\beta\mathbf1_n
                 &&(\alpha,\beta>0).
 \end{aligned}
\]
Each problem is in $\mathsf{FP}$ under its displayed condition and
$\#\mathrm P$-hard otherwise.
\end{lemma}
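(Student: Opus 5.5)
The tractable directions are immediate. For $u=\alpha\mathbf1_n$ we have $Z_{B,u}(G)=\alpha^{|V(G)|}Z_B(G)$, and $Z_B$ factors into zero-field Ising partition functions evaluated by the FKT algorithm. For constant $\mu,\nu$ on the two sides of $A_B$, the same holds after summing the two orientations of each connected bipartite input component; nonbipartite components contribute zero. Everything else is hardness, and I would reduce the bipartite case to the symmetric one.

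\emph{Symmetric case.} Suppose $\PlGH{B,u}$ is not $\#\mathrm P$-hard. Since $\rho_r\ne1$, the rows of $B$ are distinct, and since $B_{ii}=1$ they are nonproportional (as in the proof of Lemma~\ref{lem:positive-core}). Lemma~\ref{lem:weights-remove} therefore makes $D^{-1}$ and the unaries $u_r(i)=u_i^r$ available, with $D=\operatorname{diag}(u)$, and it also lets us cancel the background weight at chosen vertices. I would then work with the available symmetric matrix
\[
 N=D^{1/2}BD^{1/2},
\]
obtained by placing the unary $u_{1/2}$ at both terminals of a $B$-edge and cancelling the terminal backgrounds. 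In the unit-background family, $N$ and $B$ sit together with the tensor Gram matrix $(BDB)^{\circ p}\succ0$, which is connected and positive. Corollary~\ref{cor:common} then puts every positive available matrix into one fixed cube coordinate system. In particular $B=T(\boldsymbol\rho)$ and $N=cT(\boldsymbol\eta)$ in the same coordinates, up to coordinate relabeling. Comparing diagonals gives $u_i=N_{ii}=c$ for every $i$, so $u$ is constant.

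The point I expect to be delicate is making sure the simulation of $N$ respects the unit-background convention of Section~\ref{sec:geometry}. The family $\mathcal A$ is defined with unit background, so I would use the second reduction of Lemma~\ref{lem:weights-remove}: it passes to $\PlGH{\mathcal F,\mathbf1}$ with $\mathcal F=\{B,u_{1/2}\}$, where the unary $u_{1/2}$ has been retained. A unary at both terminals of an edge is a legitimate two-terminal gadget. If the common-coordinate argument requires symmetric matrices only, $N$ qualifies.

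\emph{Bipartite case.} The prescribed bipartition lets me reduce to the symmetric case on each side. Subdivide every input edge and sum the new vertex over $Y$ with weight $\nu$; this realizes $BD_\nu B^{\mathsf T}$ on the $X$ side with background $\mu$, and symmetrically on the $Y$ side. By Lemma~\ref{lem:weights-remove}, applied in the bipartite setting with nonproportional rows and columns, both $D_\mu^{-1}$ and $D_\nu^{-1}$ become available, and so do the unaries $\mu^r$ and $\nu^r$. We may then form $D_\mu^{1/2}BD_\nu^{1/2}$ as a rectangular available matrix. Suppose the problem is not hard. Proposition~\ref{prop:rectangular}, run on the family containing both $B$ and this reweighted matrix, gives Ising coordinates on $X$ and $Y$. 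The argument of Corollary~\ref{cor:common}, applied on each side to the Grams
\[
 D_\mu^{1/2}BD_\nu B^{\mathsf T}D_\mu^{1/2}\quad\text{and}\quad BB^{\mathsf T},
\]
places them in common coordinates. In those coordinates the Gram has constant diagonal, so $\mu_x\sum_y B_{xy}^2\nu_y$ is constant in $x$.

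From this I would extract constancy by varying the weighting. Using the unary powers $\nu^r$ for several rational $r$, each Gram $D_\mu^{1/2}BD_\nu^{r}B^{\mathsf T}D_\mu^{1/2}$ must be an Ising tensor in the fixed coordinates. Its off-diagonal entries along cube edges are then prescribed products. A Vandermonde comparison over $r$ should force $\nu$ to be constant, and hence $\mu$ as well. This extraction step is the main obstacle, and I would first try to argue directly from the diagonal identities as $r$ varies.
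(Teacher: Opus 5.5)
The gap is in the bipartite case, at exactly the step you flag. The identity you reach, $\mu_x(Q\nu)_x=h$ for all $x$ with $Q=B^{\circ2}$, cannot force constancy by itself: for any positive $\nu$, the weights $\mu_x=1/(Q\nu)_x$ satisfy it. Your proposed ``Vandermonde comparison over $r$'' is stated as an expectation and never carried out. The remark about off-diagonal cube-edge entries is not what is needed. So necessity in the bipartite case is not established as written.

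The paper closes this step using both sides. Put $\widetilde B=D_\mu^{1/2}BD_\nu^{1/2}$. The $Y$-Gram gives $\nu_y(Q\mu)_y=h'$, and the equal traces of $\widetilde B\widetilde B^{\mathsf T}$ and $\widetilde B^{\mathsf T}\widetilde B$ give $h=h'$. Let $\kappa=\prod_i(1+\rho_i^2)$ be the row sum of $Q$. If $\mu$ is nonconstant with maximum $U$, then $(Q\mu)_y<\kappa U$, so $\nu_y>h/(\kappa U)$ for every $y$. Hence $U(Q\nu)_x>h$ at a maximizer of $\mu$, a contradiction. The second identity then makes $\nu$ constant.

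Within your own setup there is an even shorter fix. You already have $D_\nu^{-1}$, so use exponent $0$, i.e.\ the Gram $D_\mu^{1/2}BB^{\mathsf T}D_\mu^{1/2}$. It is positive, positive definite because every $\rho_i\ne1$, and has diagonal $\kappa\mu_x$. Theorem~\ref{thm:pd} therefore forces $\mu$ to be constant, exactly as in your symmetric case, and the $Y$ side gives $\nu$.

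Your Vandermonde idea also works once written out. Group $y$ by the distinct values of $\nu$ and use exponents $0,\ldots,k-1$. This makes $\mu_x(Q\mathbf1_{S})_x$ independent of $x$ for every level set $S$ of $\nu$. Summing over level sets gives constant $\mu$. Invertibility of $Q$, together with $Q\mathbf1=\kappa\mathbf1$, then forces each level set to be all of $Y$.

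Your tractable directions and your symmetric case are sound. The symmetric case is in fact more direct than the paper's. The paper applies Theorem~\ref{thm:pd} to $S=D^{1/2}BDBD^{1/2}$ and compares its diagonal $u_i(Qu)_i$ at a maximum and a minimum of $u$. You instead apply Corollary~\ref{cor:common} to the positive matrix $N=D^{1/2}BD^{1/2}$, whose family contains the positive definite $N^2=S$, and read off $u_i=N_{ii}$.
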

\begin{proof}
The rows of $B$ are nonproportional. For the symmetric problem,
weight removal and square-root unaries (Lemma~\ref{lem:weights-remove}) supply, with unit background,
\[
 S=D^{1/2}BDBD^{1/2}\succ0,\qquad D=\operatorname{diag}(u).
\]
The positive matrix $Q=B^{\circ2}$ has a constant row sum $r$,
and $S_{ii}=u_i(Qu)_i$.
If $U=\max u_i>m=\min u_i$, then at a maximum of $u$ the diagonal
of $S$ is strictly greater than $Umr$, and at a minimum it is
strictly smaller. Theorem~\ref{thm:pd} makes $S$ hard.

For the bipartite problem, put
$\widetilde B=D_\mu^{1/2}BD_\nu^{1/2}$, where
$D_\mu=\operatorname{diag}(\mu)$ and $D_\nu=\operatorname{diag}(\nu)$.
The two positive definite Gram matrices
$S_X=\widetilde B\widetilde B^{\mathsf T}$ and
$S_Y=\widetilde B^{\mathsf T}\widetilde B$ are available with unit
background. Unless the source is hard, both have constant diagonal;
their equal traces make the constants equal, say to $h>0$. Thus
\[
 \mu_i(Q\nu)_i=h,\qquad \nu_j(Q^{\mathsf T}\mu)_j=h.
\]
If $\mu$ is nonconstant and $U=\max\mu_i$, positivity gives
$(Q^{\mathsf T}\mu)_j<rU$, hence $\nu_j>h/(rU)$ for every $j$.
At a maximum of $\mu$, the first equation is then impossible.
Thus $\mu$ is constant, and the second equation makes $\nu$
constant. Constant weights give the algorithms by multiplying
the Ising value by vertex factors, separately for the two
orientations in the bipartite case.
\end{proof}

\paragraph{Recovering amplitudes by moments.}

\begin{proof}[Necessity in Theorem~\ref{thm:weighted}]
Support interpolation and Proposition~\ref{prop:zero-one} leave
only positive symmetric components, positive rectangular bipartite
components, and zero singletons. Root restriction handles each
component separately.

For a positive component $M>0$, set $d_i=M_{ii}$, $a_i=\sqrt{d_i}$, and
$C_{ij}=M_{ij}/(a_i a_j)$. A source loop supplies the unary $d_i$;
unary interpolation supplies its powers. Attaching $d_i^{-1/2}$
at each endpoint of an $M$-edge realizes the normalization $C$.
Merge identical rows of $C$, obtaining a
unit-diagonal core $B$ indexed by classes $I_s$. Its rows are
nonproportional. With the additional unary $d_i^t$, its
background is
\begin{equation}\label{eq:symmetric-moments}
 u_s(t)=\sum_{i\in I_s}w_i d_i^t,\qquad t=0,1,\ldots.
\end{equation}
Weight removal and Lemma~\ref{lem:positive-core} make
$B\cong T(\boldsymbol\rho)$ unless the source is hard.
Lemma~\ref{lem:core-fields} requires $u_s(t)$ to be independent
of $s$ for every $t$.

Only finitely many moments are needed. List the distinct values
of the $d_i$ as $\delta_1,\ldots,\delta_k$ and put
$p_{sj}=\sum_{i\in I_s:d_i=\delta_j}w_i$. For
$0\le t<k$, equation~\eqref{eq:symmetric-moments} gives
\[
 \sum_{j=1}^k(p_{sj}-p_{s'j})\delta_j^t=0.
\]
The Vandermonde matrix is invertible, so $p_{sj}=\lambda_j$
is independent of $s$. Every $\delta_j$ occurs in some class,
where its mass is positive because $w>0$; hence $\lambda_j>0$.
In the original matrix, identical rows
are exactly the colors with the same normalized class $s$ and
the same amplitude $\sqrt{\delta_j}$. Consequently its actual
quotient is $(aa^{\mathsf T})\otimes T$ with
$a_j=\sqrt{\delta_j}$ and $\bar w=\lambda\otimes\mathbf1$.

For a bipartite component, write its positive cross block as $V$,
with background weights
$w_X,w_Y$. The two-edge return gadgets supply the positive unaries
\[
 a_i^2=\sum_j(w_Y)_jV_{ij}^2,\qquad
 b_j^2=\sum_i(w_X)_iV_{ij}^2.
\]
Their inverse square roots, attached to the respective endpoints,
normalize $V$ to $C_{ij}=V_{ij}/(a_i b_j)$. Merge its equal rows
and columns into classes $I_s,J_t$. Proportional rows of $V$
have proportional norms, so the resulting core $B$ has no
proportional rows or columns. Independent unary powers on the
prescribed input sides give
\begin{equation}\label{eq:bipartite-moments}
 \mu_s(m)=\sum_{i\in I_s}(w_X)_i a_i^{2m},\qquad
 \nu_t(n)=\sum_{j\in J_t}(w_Y)_j b_j^{2n}.
\end{equation}
After weight removal, Proposition~\ref{prop:rectangular} makes
$B=cT(\boldsymbol\rho)$ unless the source is hard.
Lemma~\ref{lem:core-fields} requires each vector in
\eqref{eq:bipartite-moments} to be constant. Apply the same
Vandermonde argument separately to the distinct left and right
amplitudes. Every core state has the same weighted mass
$\lambda_h$ at each left amplitude and $\eta_k$ at each right
amplitude. Merging actual twins therefore gives the last form
in \eqref{eq:weighted-forms}; absorb $c$ into an amplitude vector.
\end{proof}

For sufficiency, consider first a positive component and define
$F_a(k)=\sum_i\lambda_i a_i^k$.
Its value on any graph is
\begin{equation}\label{eq:weighted-algorithm}
 Z_{R,\bar w}(G)=\prod_v F_a(\deg v)
                         \prod_r Z_{W(\rho_r)}(G).
\end{equation}
For a connected bipartite input with parts $L,R$, the bipartite
target form instead contributes
\[
 \left(\prod_{v\in L}F_a(\deg v)\prod_{v\in R}F_b(\deg v)
      +\prod_{v\in R}F_a(\deg v)\prod_{v\in L}F_b(\deg v)\right)
      \prod_r Z_{W(\rho_r)}(G),
\]
where $F_b(k)=\sum_j\eta_j b_j^k$; it vanishes on a nonbipartite
input. For an isolated vertex the answer is simply the total
vertex weight. A zero singleton contributes only to edgeless
inputs. Finally, sum over target components and multiply over
input components. These formulas and the FKT algorithm
prove sufficiency and polynomial bit complexity.

To finish Theorem~\ref{thm:main}, apply the necessity argument with
original weights $w_i=1$. The masses recovered by the moment
equations are then integer multiplicities, equal above every
core state. Repeat each amplitude that many times in $a$ or $b$;
this recovers precisely the original matrix forms
\eqref{eq:tractable-forms}. Formula~\eqref{eq:weighted-algorithm}
with unit amplitude weights, and its bipartite counterpart, prove
sufficiency.

Nonnegative vertex weights require no new argument: first delete
zero-weight colors and then apply Theorem~\ref{thm:weighted}.
If none remain, the value is zero on every nonempty input and one
on the empty graph.

\section{Real parameters and applications}\label{sec:applications}

The real-parameter extension in Theorem~\ref{thm:real} lets us apply
the classifications directly to Boltzmann weights.

\subsection{Arbitrary fixed real weights}\label{sec:real}

We retain the exact model of \cite[Appendix~A]{cmy2026}:
the weights generate a fixed field
\[
 F=\mathbb Q(X_1,\ldots,X_b)(\beta),\qquad
 z=\sum_{j=0}^{e-1}z_j(X)\beta^j,\quad z_j\in\mathbb Q(X),
\]
Here $X=(X_1,\ldots,X_b)$ consists of algebraically independent
real constants, and $\beta$ has degree $e$ over $\mathbb Q(X)$.
An exact answer $z\in F$ is encoded by the rational functions $z_j$.
The minimal polynomial of $\beta$, the expressions for the source
weights, and compatible presentations of auxiliary fields are fixed
data. Equality is checked symbolically in these representations.

\begin{theorem}[Fixed real parameters]\label{thm:real}
Let
\[
 M=M^{\mathsf T}\in\mathbb R_{\ge0}^{q\times q},\qquad
 w\in\mathbb R_{>0}^q
\]
be fixed in the field representation above.
The classifications and hardness alternatives of
Theorems~\ref{thm:main} and~\ref{thm:weighted} hold for
$\PlGH{M}$ and $\PlGH{M,w}$, respectively.
For $w\ge0$, the deletion rule for zero-weight colors applies.
\end{theorem}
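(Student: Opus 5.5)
The plan is to rerun the algebraic proof of Theorems~\ref{thm:main} and~\ref{thm:weighted} line by line over the field $F=\mathbb Q(X)(\beta)$ and check two things. First, every algorithmic step (gadget signatures, Vandermonde interpolation, spectral grouping, field traces, root extraction) must be implementable with exact arithmetic in $F$ or in bounded-degree extensions of it. Second, every structural deduction must use only properties of real numbers, namely order, positivity and analyticity in an auxiliary parameter, and never algebraicity itself. The tractable side needs no change. Formula~\eqref{eq:weighted-algorithm}, its bipartite counterpart and the FKT algorithm use only field operations and Pfaffians over $F$, and the degree bounds for the rational functions $z_j$ stay polynomial, as in \cite[Appendix~A]{cmy2026}.

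For hardness, I would first audit the reduction lemmas. In Lemmas~\ref{lem:interpolation} and~\ref{lem:transfer}, the only operations are the comparisons $\lambda=\lambda'$ between products of entries and the solution of Vandermonde systems. In $F$, equality is checked symbolically and linear algebra works over any effective field, so these lemmas hold verbatim. The spectral version needs eigenvalues and projectors of a fixed matrix. For these I would adjoin the finitely many eigenvalues once, giving a fixed finite extension $F'$, and return answers to $F$ by the trace map, exactly as in Section~\ref{sec:tools}. Lemmas~\ref{lem:root} and~\ref{lem:weights-remove} only solve fixed Gram systems and apply the spectral interpolation, so they transfer in the same way.

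Next I would check the structural part. Sections~\ref{sec:geometry}--\ref{sec:nonnegative} apply real analysis to the fixed real matrices: matrix logarithms, the maximality of $|E(R_M)|$, entropy continuation, Euclidean embeddings, the Fourier norm equalities and the Vandermonde moment argument. None of these uses algebraicity, since the maximizing support exists for any finite family of real matrices. Where the paper invokes ``algebraic'', it is only so that the sampled matrices $F(n)$ and $B(x)$ have effective encodings. With fixed real data, these objects lie in $F$ or in bounded-degree extensions $F(x,\sqrt{\cdot})$, and the exceptional-sample bounds in Lemmas~\ref{lem:transfer} and~\ref{lem:boolean-field} still hold: they are zero counts of nonzero exponential sums or polynomials in the sampling parameter, and the norm argument works over $F(x)$ just as over $K(x)$.

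The main obstacle is the endpoints. The hard planar $2\times2$ unequal-diagonal case and the Potts case $I_s+J_s$ must be hard for real weights in this representation. Lemma~\ref{lem:boolean-field} lands on the matrix $B_g(x_0)$ with entries in $F$. I would cite the real-weight version of the Boolean planar dichotomy from \cite[Appendix~A]{cmy2026}, and I would note that the Potts endpoint has integer entries and needs nothing new. I would also check that exact positive-root recovery (in Lemmas~\ref{lem:potts} and~\ref{lem:boolean-field}) works over $F$. The value $Z^{n}$ lies in $F$, and its $n$th root can be found by factoring $y^n-c$ over $F$ with polynomial-time multivariate factorization. The correct real root is then selected by sign information from the fixed embedding. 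As a fallback, I would avoid roots altogether: use the Vandermonde step to isolate a single factor, replacing the other $n_g-1$ copies by $I_2$ so that they contribute $2^{c(G)}$. Finally, the zero-weight deletion for $w\ge0$ is purely combinatorial and carries over unchanged.
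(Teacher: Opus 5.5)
Your route differs from the paper's. You re-run the whole algebraic proof over $F=\mathbb Q(X)(\beta)$ and audit each step; the paper never re-runs the structural machinery over $F$.

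\textbf{The paper's route.} For the unweighted dichotomy it takes the relation lattice $\Lambda$ of the positive entries of $M$. The vector of their logarithms lies in $\Lambda^\perp$; the paper expands it in an integer basis of $\Lambda^\perp$ and replaces each $e^{t_j}$ by a nearby positive rational $r_j$. This gives a rational $M'$ with the same support, arbitrarily close to $M$, that satisfies every multiplicative relation of $M$. Product interpolation then gives $\PlGH{M'}\le_T\PlGH{M}$ using only equality tests in $F$. The tractable class is closed among matrices with a fixed support, so a nontractable $M$ has a nearby nontractable $M'$, which is hard by the algebraic theorem. The weighted statement then follows from this real unweighted dichotomy together with the exact support, normalization, weight-removal and Vandermonde steps.

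\textbf{What each route buys.} In the paper, spectral interpolation, parameter-dependent extensions, root extraction and the hardness endpoints are only ever used for the rational matrix $M'$. Your route avoids the approximation and closedness arguments. The price is that every effective step must work over a field in which the model decides only equality.

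\textbf{The gap: root selection.} At the end of Lemma~\ref{lem:boolean-field} you know $Z_{B_g(x_0)}(G)^{n_g}$ and select the positive root ``by sign information from the fixed embedding''. The representation of \cite[Appendix~A]{cmy2026} supports exact arithmetic and symbolic equality tests, not sign determination. For algebraically independent real $X$, the sign of an input-dependent element of $\mathbb Q(X)(\beta)$ is not determined by any fixed finite data. So for even $n_g$ you cannot separate $\pm Z$.

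This is repairable. One way: interpolate $Q(x)=Z_{B_g(x)}(G)^{n_g}$ as a polynomial in $x$, as in the paper's step~3. Then take the power-series $n_g$th root normalized by the explicitly known constant term $Z_{B_g(0)}(G)=\prod_c(\theta_g^{tm_c}+\theta_g^{-tm_c})$. Here $c$ runs over the components of $G$ and $m_c$ is the number of edges of $c$. Your proposal does not contain this step.

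\textbf{Your fallback fails.} The branch argument shows only that product identities preserve the total number $k_g$ of smaller-branch choices in class $g$. The $n_g$ copies in that class are identical, so source products that differ by which copy carries a smaller branch coincide identically. The target $B_g\otimes I_2^{\otimes(n_g-1)}$ assigns them different values, so product interpolation cannot realize it. This is exactly why the paper needs a root extraction.

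\textbf{The endpoint.} Your real-weight Boolean endpoint rests on an external citation. The relation-lattice approximation, applied to that $2\times 2$ matrix, gives it directly from the rational case: positivity, invertibility and unequal diagonals all survive a close enough approximation.
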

\begin{proof}
\emph{Exact reductions.}
The fixed numbers of variables and basis elements make arithmetic
and equality polynomial-time on rational-function encodings of
polynomial degree and coefficient bit length. Products of
polynomially many fixed weights have these bounds, as do
polynomial-size interpolation systems by fraction-free elimination.
Grouping colorings by their fixed-weight occurrence counts bounds
the exact output size in the same way. Thus product interpolation,
root restriction, and weight removal remain valid. Square roots
and spectral gadgets use fixed finite algebraic extensions.

Suppose an answer computed in an auxiliary field $F$ lies in the
original field $F_0\subseteq F$. Choose algebraically independent
elements $Y$ over $F_0$ so that $F/F_0(Y)$ is finite algebraic.
Normalized trace to $F_0(Y)$ gives $z=p(Y)/q(Y)$.
Since $z\in F_0$, coefficient comparison in $p=zq$ returns
\[
 z=\frac{[Y^\alpha]p}{[Y^\alpha]q}
 \quad\text{for any nonzero denominator coefficient}.
\]
Here $[Y^\alpha]$ extracts the coefficient of the monomial $Y^\alpha$.
Fixed compatible presentations keep all these operations within
polynomial degree and bit bounds.

\emph{Transfer of the unweighted dichotomy.}
We transfer hardness through nearby rational matrices that preserve
all multiplicative relations.
List the positive entries of $M$ as $m_1,\ldots,m_s$ and consider
their complete relation lattice
\[
 \Lambda=\left\{z\in\mathbb Z^s:\prod_{\alpha=1}^s m_\alpha^{z_\alpha}=1\right\}.
\]
The vector $\ell=(\log m_1,\ldots,\log m_s)$ lies in
$\Lambda^\perp$, a subspace defined by rational linear equations.
Choose integer vectors $u_1,\ldots,u_h$ forming a real basis of
$\Lambda^\perp$, and write $\ell=\sum_jt_j u_j$. Replacing each $e^{t_j}$ by a
nearby positive rational $r_j$ produces rational entries
$m'_\alpha=\prod_jr_j^{(u_j)_\alpha}$ arbitrarily close to $m$.
They preserve every multiplicative relation, hence all equal-length
product relations. Equal entries remain equal; zero positions are
kept. Product interpolation gives
$\PlGH{M'}\le_T\PlGH{M}$ exactly.
Here $M$ is fixed. The lattice, basis, and one suitable rational
replacement are constants of its reduction; the input is the graph,
not a description of an arbitrary real matrix from which these
choices must be computed uniformly.

The structural class \eqref{eq:tractable-forms} is closed among
matrices with a fixed support. Indeed, there are only finitely many
factor sizes and relabelings. For a positive block in fixed coordinates
$(i,x)$, where $i$ indexes amplitudes and $x$ is a bit string, recover
\[
 a_i=\sqrt{M_{(i,0),(i,0)}},\qquad
 \rho_r=\frac{M_{(1,0),(1,e_r)}}{M_{(1,0),(1,0)}}.
\]
These are continuous on the fixed positive support. For a bipartite
block, fix one left amplitude to $1$ and recover the other amplitudes
and ratios from a positive reference row and column. In both cases
all remaining entries satisfy continuous tensor identities.
Limits may have $\rho_r=1$, already allowed in the
unweighted theorem. If $M$ is outside the class, choose a sufficiently
close rational $M'$ outside it. The algebraic theorem makes $M'$
hard and the preceding reduction makes $M$ hard. Inside the class,
the same amplitude and Ising formulas work over the fixed field.

\emph{Positive vertex weights.}
The support and weight-removal reductions just established give
the real versions of the normalized-core statements directly from
the now-proved unweighted dichotomy. A positive unit-diagonal
core with distinct rows must be a nondegenerate Ising tensor;
a positive rectangular core with nonproportional rows and columns
must be a square Ising tensor after independent relabelings.
The same dichotomy excludes a positive definite tractable matrix
with nonconstant diagonal. It therefore proves
Lemma~\ref{lem:core-fields} over the real field. The normalization,
finite Vandermonde equations, and algorithms above are unchanged,
proving the full weighted statement.
\end{proof}

For signed real symmetric $M$, let $P(M)_{ij}=\mathbf1_{M_{ij}\ne0}$;
write $|M|$ and $\operatorname{sgn}M$ for entrywise absolute values
and signs, with $\operatorname{sgn}(0)=0$. Equal entry products preserve
their magnitudes and signs, so product interpolation gives, for
every fixed $w>0$,
\begin{gather*}
 \PlGH{P(M),w}\le_T\PlGH{|M|,w}\le_T\PlGH{M,w},\\
 \PlGH{\operatorname{sgn}M,w}\le_T\PlGH{M,w}.
\end{gather*}
Only the signs of the finitely many fixed entries are needed.
Consequently a nonbasic support component, or a magnitude problem
outside the weighted tractable class, makes the signed problem hard.
In particular $|M|$ outside \eqref{eq:tractable-forms} implies
unweighted hardness. The zero--one criterion of
Proposition~\ref{prop:zero-one} therefore holds with every fixed
positive real vertex-weight vector.

\subsection{Targets and input graph classes}\label{sec:bipartite-inputs}

For unweighted target graphs, Proposition~\ref{prop:zero-one}
recovers exactly the unrestricted-input tractable list of Dyer
and Greenhill \cite{dyerGreenhill2000,dyerGreenhill2004}:
fully looped cliques, loopless complete bipartite graphs, and
loopless isolated vertices. Every other target remains
$\#\mathrm P$-complete on planar inputs. Thus the additional
planar tractability is a weighted phenomenon.

A complete dichotomy for planar bipartite input graphs follows by
doubling the state set; its tractable class can be strictly larger.
Define
\[
 D_M=\begin{pmatrix}0&M\\M&0\end{pmatrix},\qquad
 \widetilde w=(w,w).
\]
On a bipartite graph $G$,
$Z_{D_M,\widetilde w}(G)=2^{c(G)}Z_{M,w}(G)$, where $c(G)$
counts connected components, including isolated vertices; on a
nonbipartite component the left side vanishes. Consequently the
classification on planar bipartite inputs is obtained by applying
Theorem~\ref{thm:weighted} to $(D_M,\widetilde w)$.
More generally, a rectangular interaction $V\ge0$ on prescribed
sublattices, with respective weights $\mu,\nu>0$, has the
classification of $(A_V,(\mu,\nu))$, by the root restriction
argument in Section~\ref{sec:nonnegative}.

For a concrete strict enlargement, let $0<r<1$, put
$T=W(r)\otimes W(r)$, and let $P$ swap its Boolean coordinates.
Then $M=PT=TP$ is symmetric and positive. Relabeling one side of
a bipartite input changes $M$ to $T$, giving an algorithm.
On general planar inputs it is hard: $M$ has full rank and
$\det M=-(1-r^2)^4<0$. Indeed, full rank excludes an amplitude
factor of dimension greater than one, so a positive tractable
four-state matrix must be $cW(\rho_1)\otimes W(\rho_2)$, with
$c>0$ and $\rho_1,\rho_2\ne1$. Its determinant is
\[
 c^4(1-\rho_1^2)^2(1-\rho_2^2)^2>0.
\]

Theorems~\ref{thm:main} and~\ref{thm:weighted} also hold for graphs of
any fixed bounded orientable genus, supplied with an embedding. The amplitude
and component decompositions work on every graph; the genus-$h$
extension of the FKT algorithm uses at most $4^h$ Pfaffians \cite[Theorem~2.3]{pham2020}.
This includes antiferromagnetic parameters: the even-subgraph
polynomial and its signed Pfaffian expression agree as polynomials,
so their identity also holds when $(1-\rho)/(1+\rho)<0$.
For fixed genus the computation has polynomial bit complexity,
and the hard cases already contain the planar subclass.

\subsection{Spin systems and imaginary-time kernels}

Here we use the entrywise logarithm
$\log_\circ M=(\log M_{ij})_{i,j}$. For an Ising tensor, centering
removes the scalar factor, and the resulting rank counts its
nondegenerate coordinates.

\begin{corollary}\label{cor:log-rank}
Let $M=M^{\mathsf T}\in\mathbb R_{>0}^{q\times q}$ have distinct
rows and $M_{ii}=\delta$ for all $i$, and let $w\in\mathbb R_{>0}^q$.
In the tractable cases of Theorem~\ref{thm:weighted},
\[
 q=2^d,\qquad M\cong\gamma T(\boldsymbol\rho),\qquad
 w=\lambda\mathbf1_q,
\]
where $d\ge0$, $\gamma,\lambda>0$, and $\rho_r>0$, $\rho_r\ne1$.
Consequently, with $P=I_q-J_q/q$,
\[
 \operatorname{rank}\!\bigl(P\log_\circ(M)P\bigr)=d.
\]
\end{corollary}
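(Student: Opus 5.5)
We specialize Theorem~\ref{thm:weighted} to this setting and then compute the centered entrywise logarithm of an Ising tensor directly.

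\emph{Structure.} Since $M$ has distinct rows, every identical-row class is a singleton, so $R=M$ and $\bar w=w$. Since $M>0$, there is a single support component, and it is not bipartite. The only possible tractable form in \eqref{eq:weighted-forms} is therefore $M\cong(aa^{\mathsf T})\otimes T(\boldsymbol\rho)$ with $w=\lambda\otimes\mathbf 1_{2^d}$ and $a,\lambda\in\mathbb R_{>0}^k$.

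The diagonal entry at color $(i,x)$ equals $a_i^2$, because $T(\boldsymbol\rho)$ has unit diagonal. Constant diagonal $\delta$ then forces all $a_i=\sqrt\delta$. If $k\ge2$, two colors $(i,x),(i',x)$ with $i\ne i'$ would have identical rows, contradicting distinctness. Hence $k=1$, $q=2^d$, $\gamma=\delta$, and $w=\lambda\mathbf 1_q$ with $\lambda=\lambda_1$. The condition $\rho_r\ne1$ is already part of the tractable form.

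\emph{Rank.} Write colors as $x\in\{0,1\}^d$. Then
\[
 (\log_\circ M)_{xy}=\log\gamma+\sum_r(\log\rho_r)\,[x_r\ne y_r].
\]
The relabeling $M\cong$ is a simultaneous permutation. Such a permutation commutes with $P$, since it fixes $\mathbf1$, so the rank is unchanged. In $\pm1$ notation, $[x_r\ne y_r]=(1-s_r(x)s_r(y))/2$ with $s_r(x)=(-1)^{x_r}$. This gives
\[
 \log_\circ M=c\,J-\tfrac12\sum_r(\log\rho_r)\,s_rs_r^{\mathsf T}
\]
for a constant $c$. Conjugating by $P$ kills $J$, and each character $s_r$ has zero sum, so $Ps_r=s_r$. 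Therefore
\[
 P\log_\circ(M)P=-\tfrac12\sum_r(\log\rho_r)\,s_rs_r^{\mathsf T}.
\]
The vectors $s_1,\dots,s_d$ are orthogonal, hence linearly independent, and every coefficient $\log\rho_r$ is nonzero because $\rho_r\ne1$. So the rank is exactly $d$.

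The only delicate point is the reduction to $k=1$: it uses both hypotheses, constant diagonal to equalize amplitudes and distinct rows to exclude repeated amplitudes. Everything after that is a direct computation.
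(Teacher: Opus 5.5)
Your proposal is correct and follows the paper's argument. Constant diagonal equalizes the amplitudes, distinct rows leave a single amplitude, and the weighted criterion forces $w=\lambda\mathbf1_q$; the centered entrywise logarithm $-\frac12\sum_r(\log\rho_r)\chi_r\chi_r^{\mathsf T}$ then has rank $d$ because the characters are orthogonal and every coefficient is nonzero. You also spell out two points the paper leaves implicit: why the zero and bipartite forms are excluded when $M>0$, and why a simultaneous permutation, which commutes with $P$, does not change the rank.
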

\begin{proof}
In $(aa^{\mathsf T})\otimes T$, constant diagonal makes $a$
constant; distinct rows then make it a singleton. The weighted
criterion gives constant $w$. For the column vectors
$\chi_j=((-1)^{x_j})_{x\in\{0,1\}^d}$,
\[
 P\log_\circ(M)P=-\frac12\sum_j(\log\rho_j)\chi_j\chi_j^{\mathsf T}.
\]
The $\chi_j$ are orthogonal and all coefficients are nonzero,
giving rank $d$.
\end{proof}

\begin{corollary}[Clock models]\label{cor:clock}
Fix $q\ge2$, $K\in\mathbb R$, and $w\in\mathbb R_{>0}^q$. For
$i,j\in\{0,\ldots,q-1\}$, put
\[
 C(q,K)_{ij}=\exp\!\left(K\cos\frac{2\pi(i-j)}q\right).
\]
Then $\PlGH{C(q,K),w}\in\mathsf{FP}$ if
\[
 K=0
 \quad\text{or}\quad
 \bigl(q\in\{2,4\}\ \text{and}\ w=\lambda\mathbf1_q
       \text{ for some }\lambda>0\bigr),
\]
and $\PlGH{C(q,K),w}$ is $\#\mathrm P$-hard otherwise.
\end{corollary}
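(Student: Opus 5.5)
The plan is to reduce everything to Corollary~\ref{cor:log-rank} together with the weighted criterion of Theorem~\ref{thm:weighted}, valid for real $K$ by Theorem~\ref{thm:real}.

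\emph{Trivial and small cases.} If $K=0$ then $C=J_q$: all rows are identical, so the quotient is the $1\times1$ matrix $[1]$ with weight $\sum_iw_i$, which has the form $(aa^{\mathsf T})\otimes T(\varnothing)$ and is tractable for every $w$. For $q=2$ the matrix is $e^{K}W(e^{-2K})$, a one-bit Ising tensor with $\rho=e^{-2K}\ne1$. For $q=4$ the entries are $e^{K},1,e^{-K}$ for differences $0,\pm1,2$, so identifying angle $i$ with the two bits $(\lfloor i/2\rfloor \oplus (i \bmod 2),\, i\bmod 2)$, i.e.\ Gray coding, gives $C(4,K)=W(e^{-K})\otimes W(e^{-K})$. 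Let me check this identification. Distance $1$ flips one bit, giving weight $\rho=e^{-K}$ relative to $e^{K}$; distance $2$ flips both bits, giving $\rho^2$, and the true value $e^{-K}=e^{K}\rho^2$ matches. So $C(4,K)=e^{K}W(e^{-K})^{\otimes2}$ is a nondegenerate Ising tensor. For $q\in\{2,4\}$, Theorem~\ref{thm:weighted} with $k=1$ (distinct rows, so no twins) then says the problem is tractable exactly when $w$ is constant.

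\emph{Hardness for $q=3$ and $q\ge5$ with $K\ne0$.} The matrix is positive with constant diagonal $e^{K}$. Its rows are distinct: row $i$ is a cyclic shift of the row vector $f(k)=e^{K\cos(2\pi k/q)}$, and $f$ attains its unique maximum (for $K>0$) or minimum (for $K<0$) at $k=0$, so distinct shifts differ. Hence Corollary~\ref{cor:log-rank} applies. If the problem were tractable, $q=2^d$ and $\operatorname{rank}(P\log_\circ(C)P)=d$. Now $\log_\circ C=K\cos(2\pi(i-j)/q)$, which equals $\tfrac K2(vv^*+\bar v\bar v^{\mathsf T})$ with $v_i=e^{2\pi\mathrm i i/q}$. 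For $q\ge3$ the vectors $v,\bar v$ are orthogonal to $\mathbf1$ and linearly independent, so $P\log_\circ(C)P=\log_\circ C$ has rank exactly $2$. Thus tractability forces $q=2^d$ with $d=2$, that is $q=4$, and this already rules out $q=3$ and every $q\ge5$.

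\emph{Real parameters.} Since $K$ is arbitrary real, the entries $e^{K\cos(2\pi k/q)}$ need not be algebraic. The statement has to be read in the fixed field representation, and we invoke Theorem~\ref{thm:real} so that Corollary~\ref{cor:log-rank} and Theorem~\ref{thm:weighted} remain available. This is the only delicate point. The rank computation itself is purely combinatorial and unaffected, but the argument relies on Corollary~\ref{cor:log-rank} being derived from the classification rather than from algebraicity, which holds because its proof uses only the structural forms.
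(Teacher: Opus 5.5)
Your argument is the paper's proof: distinct rows from the unique diagonal extremum, a rank-two, already centered entrywise logarithm fed into Corollary~\ref{cor:log-rank}, and explicit Ising factorizations for $q=2,4$. You also invoke Theorem~\ref{thm:real} for non-algebraic $e^{K}$, which the paper leaves implicit and you rightly make explicit. A few local slips need fixing.

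\begin{itemize}
\item \emph{The $q=4$ labeling.} Your map $i\mapsto(\lfloor i/2\rfloor\oplus(i\bmod 2),\,i\bmod 2)$ sends $0,1,2,3$ to $00,11,10,01$. So the adjacent angles $0,1$ differ in both bits, and your ``check'' is false for the map you wrote. Use instead $i\mapsto(\lfloor i/2\rfloor,\,(i\bmod 2)\oplus\lfloor i/2\rfloor)$, which is the cyclic Gray order $00,01,11,10$ used in the paper. It gives $C(4,K)\cong e^{K}W(e^{-K})\otimes W(e^{-K})$; your first display omits both the factor $e^{K}$ and the relabeling.

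\item \emph{The rank-two decomposition.} $\cos\frac{2\pi(i-j)}q$ is the $(i,j)$ entry of $\tfrac12(vv^*+\bar v v^{\mathsf T})$, not of $\tfrac12(vv^*+\bar v\bar v^{\mathsf T})$. The rank-two conclusion is unaffected.

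\item \emph{Why $k=1$ in the weighted step.} This does not follow from the absence of twins alone. Constant diagonal forces all amplitudes to be equal, and distinct rows then leave a single amplitude. Citing Corollary~\ref{cor:log-rank} directly for $w=\lambda\mathbf1_q$, as the paper does, avoids this detour.
\end{itemize}
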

\begin{proof}
For $K\ne0$, the diagonal is constant and each row has its unique
maximum there if $K>0$, or unique minimum if $K<0$; thus the rows
are distinct. For $q\ge3$, set $c_i=\cos(2\pi i/q)$ and
$s_i=\sin(2\pi i/q)$. Then
\[
 \log_\circ C(q,K)=K(cc^{\mathsf T}+ss^{\mathsf T})
\]
has rank two and is already centered. Corollary~\ref{cor:log-rank}
forces $q=4$ and constant weights. The surviving cases factor as
\[
 C(2,K)=e^KW(e^{-2K}),\qquad
 C(4,K)\cong e^KW(e^{-K})\otimes W(e^{-K}),
\]
using the cyclic Gray ordering $00,01,11,10$ for $q=4$.
The same corollary excludes nonconstant weights at $q=2$.
If $K=0$, the answer is $(\sum_iw_i)^{|V(G)|}$.
\end{proof}

\begin{corollary}[Coupled Ising interactions]\label{cor:characters}
Fix $n$, a set $\mathcal A\subseteq\mathbb F_2^n\setminus\{0\}$,
and couplings $J_a\in\mathbb R\setminus\{0\}$ for $a\in\mathcal A$.
For $x,y\in\mathbb F_2^n$, define
\[
 M_{xy}=\exp\!\left(\sum_{a\in\mathcal A}
                 J_a(-1)^{a\cdot x}(-1)^{a\cdot y}\right),
\]
where dot products are taken modulo two.
For fixed $w\in\mathbb R_{>0}^{2^n}$, put
\[
 F(x)=(a\cdot x)_{a\in\mathcal A},\qquad
 \bar w_z=\sum_{x:F(x)=z}w_x\quad(z\in\operatorname{im}F).
\]
Then $\PlGH{M,w}\in\mathsf{FP}$ if
\[
 \operatorname{rank}_{\mathbb F_2}\mathcal A=|\mathcal A|,
 \qquad \bar w_z=\lambda\quad(z\in\operatorname{im}F)
 \quad\text{for some }\lambda>0,
\]
and $\PlGH{M,w}$ is $\#\mathrm P$-hard otherwise.
For $w=\mathbf1$, the fiber sums are automatically constant, so the
criterion is simply linear independence of $\mathcal A$.
\end{corollary}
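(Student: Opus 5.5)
Observe first that $M$ is entrywise positive, so the whole problem reduces to Theorem~\ref{thm:weighted} applied to a single positive support component. In character coordinates the exponent is
\[
 \sum_{a\in\mathcal A}J_a\chi_a(x)\chi_a(y),\qquad \chi_a(x)=(-1)^{a\cdot x}.
\]
Hence row $x$ of $M$ depends only on the vector $F(x)$. We will also show that distinct values of $F$ give distinct rows.

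\emph{Quotient step.} Two colors with $F(x)=F(x')$ have identical rows, so they merge. The quotient is indexed by $z\in\operatorname{im}F$, which is a linear subspace of $\mathbb F_2^{\mathcal A}$, and carries the summed weights $\bar w_z$. On this quotient,
\[
 \log_\circ R=\sum_aJ_a\,\psi_a\psi_a^{\mathsf T},\qquad \psi_a(z)=(-1)^{z_a}.
\]
Each $\psi_a$ is a nontrivial character of the subspace $\operatorname{im}F$. It is nonconstant because $a\ne0$ forces the coordinate $z_a$ to be nonconstant on the image. Two such characters coincide exactly when the coordinate functionals agree on the image, that is, when $a=a'$ as functionals of $x$, and the elements of $\mathcal A$ are distinct. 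So the $\psi_a$ are distinct nontrivial characters, hence orthogonal to each other and to $\mathbf1$. Therefore $P\log_\circ(R)P=\log_\circ R$ has rank exactly $|\mathcal A|$, since every $J_a\ne0$.

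Distinct rows also follow: if rows $z$ and $z'$ agreed, then $\log_\circ R$ would have equal rows $z,z'$. That means $\sum_aJ_a\psi_a(z)\psi_a$ and $\sum_aJ_a\psi_a(z')\psi_a$ coincide, and by independence of the characters $\psi_a(z)=\psi_a(z')$ for all $a$, so $z=z'$. The diagonal of $R$ is the constant $\exp(\sum J_a)$.

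\emph{Hardness direction.} Corollary~\ref{cor:log-rank} applies to $(R,\bar w)$; after the twin merging of equation~\eqref{eq:twins}, Theorem~\ref{thm:weighted} is stated for exactly this pair. In the tractable case it gives $|\operatorname{im}F|=2^d$, where $d$ is the rank of $P\log_\circ(R)P$, and we have $d=|\mathcal A|$. Since $|\operatorname{im}F|=2^{\operatorname{rank}\mathcal A}$, tractability forces $\operatorname{rank}_{\mathbb F_2}\mathcal A=|\mathcal A|$. It also forces $\bar w$ to be constant. In every other case the problem is hard by Theorem~\ref{thm:weighted}; for real couplings, Theorem~\ref{thm:real} extends this to the field representation.

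\emph{Tractability direction.} Suppose $\mathcal A$ is independent. Then $F$ is surjective onto $\mathbb F_2^{\mathcal A}$, and the quotient is
\[
 R_{zz'}=\prod_a\exp\bigl(J_a(-1)^{z_a+z'_a}\bigr)=\prod_a e^{J_a}\,W(e^{-2J_a})_{z_az'_a},
\]
that is, $\gamma\,T(\boldsymbol\rho)$ with $\rho_a=e^{-2J_a}\ne1$. With constant $\bar w$ this is the second form of~\eqref{eq:weighted-forms} with $k=1$, so it lies in $\mathsf{FP}$.

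\emph{Unit weights.} For $w=\mathbf1$, each fiber of $F$ is a coset of $\ker F$, so every $\bar w_z$ equals $|\ker F|$ and the weight condition holds automatically.

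The only delicate step is verifying that the $\psi_a$ are distinct and nonconstant on $\operatorname{im}F$. Both properties come directly from $a\ne0$ and the distinctness of the elements of $\mathcal A$.
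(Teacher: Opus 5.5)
Your proposal is correct and follows the paper's proof. You identify the twin classes with the fibers of $F$, show that the quotient has distinct rows and centered entrywise logarithm $\sum_a J_a\psi_a\psi_a^{\mathsf T}$ of rank $|\mathcal A|$, apply Corollary~\ref{cor:log-rank} to force $\operatorname{rank}_{\mathbb F_2}\mathcal A=|\mathcal A|$ and constant $\bar w$, and get sufficiency from $R=e^{\sum_aJ_a}\bigotimes_aW(e^{-2J_a})$, with the case $\mathcal A=\varnothing$ covered by $d=0$.
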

\begin{proof}
Let $r=\operatorname{rank}_{\mathbb F_2}\mathcal A$ and
$m=|\mathcal A|$. Independence of distinct characters shows that
two rows agree exactly when their $F$-values agree. The quotient
has $2^r$ distinct rows and constant diagonal. Its centered
entrywise logarithm, indexed by $z\in\operatorname{im}F$, is
$\sum_{a\in\mathcal A}J_a\chi_a\chi_a^{\mathsf T}$, of rank
$m$, where $\chi_a(z)=(-1)^{z_a}$. Corollary~\ref{cor:log-rank} therefore requires $m=r$ and
constant $\bar w$. Conversely, if $m=r$, the quotient is
\[
 e^{\sum_aJ_a}\bigotimes_{a\in\mathcal A}W(e^{-2J_a}),
\]
so constant fiber sums give the algorithm. Unit weights have fiber
sum $2^{n-r}$. The empty set $\mathcal A$ gives a one-state quotient.
\end{proof}

For an $n$-layer Ashkin--Teller system \cite{grestWidom1981},
ordinary layer couplings have labels $e_i$, and four-spin couplings
between layers $i,j$ have labels $e_i+e_j$. If all ordinary couplings
are nonzero, any nonzero four-spin coupling creates a dependence
and gives hardness. With only four-spin couplings, tractability
with unit weights is equivalent to the graph of coupled layers
being a forest: a cycle sums to zero, while leaf removal proves
independence for a forest. This fixed graph of layers is distinct
from the variable planar input graph.

For real single-site potentials $U_i$ and inverse temperature
$\beta\ne0$, set $w_i=e^{-\beta U_i}$. An identical-row class $I$
has effective potential
\[
 U_I^{\mathrm{eff}}=-\beta^{-1}\log\sum_{i\in I}e^{-\beta U_i}.
\]
The weighted theorem requires this potential to be independent of
the Ising coordinates within each amplitude and each bipartite side.
Thus unequal microscopic potentials can preserve tractability
when their aggregated weights agree.

As a short example, the spin-one Blume--Capel interaction
\[
 M_{st}=e^{Kst},\qquad w_s=e^{-Ds^2+Fs},\qquad s,t\in\{-1,0,1\},
\]
is hard for every fixed real $K\ne0$ and all finite real $D,F$.
Indeed, with $x=e^K$, its determinant is
$(x-1)^3(x+1)/x^2\ne0$. A positive full-rank tractable matrix
has order a power of two, excluding order three regardless of
the positive vertex weights. At $K=0$ the interaction is all ones.

For a fixed local Hamiltonian $h$, imaginary-time evolution gives
the ordinary matrix exponential $K_\tau=e^{-\tau h}$. We classify
contraction of this fixed kernel over planar graphs, namely
$Z_{K_\tau}(G)$ with unit vertex weights.

\begin{corollary}[Imaginary-time kernels]\label{cor:quantum-kernel}
Let $h$ be a fixed real symmetric matrix of order $q$ with
$h_{ij}\le0$ for $i\ne j$ (a \emph{stoquastic} Hamiltonian),
whose graph of pairs $i\ne j$ with $h_{ij}<0$ is connected.
For each fixed $\tau>0$, $\PlGH{e^{-\tau h}}$ is in $\mathsf{FP}$ if,
after a color relabeling,
\[
 q=2^d,\qquad h=cI-\sum_{r=1}^d b_rX_r,\qquad
 c\in\mathbb R,\quad b_r>0,
\]
where $X_r$ flips bit $r$ and $d=0$ is allowed.
Otherwise it is $\#\mathrm P$-hard, with kernel entries represented
as in Theorem~\ref{thm:real}.
\end{corollary}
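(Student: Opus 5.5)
We will reduce the statement to the unweighted positive classification by first observing the structure of $K_\tau=e^{-\tau h}$. Since $h$ is stoquastic with connected off-diagonal graph, $cI-h$ is entrywise nonnegative and irreducible for large $c$. Hence $K_\tau=e^{-\tau c}e^{\tau(cI-h)}$ is entrywise positive. It is also positive definite, as the exponential of a symmetric matrix. So $K_\tau$ is a single positive support component with $K_\tau\succ0$, and Theorem~\ref{thm:pd} applies directly (over the field representation via Theorem~\ref{thm:real} when entries are not algebraic). Thus $\PlGH{K_\tau}$ is in $\mathsf{FP}$ exactly when $q=2^d$ and, after relabeling, $K_\tau=\gamma T(\boldsymbol\rho)$ with $\gamma>0$ and $\boldsymbol\rho\in(0,1)^d$; otherwise it is $\#\mathrm P$-hard. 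What remains is to translate this matrix condition into the stated condition on $h$.

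For sufficiency, suppose $h=cI-\sum_rb_rX_r$ with $b_r>0$. The $X_r$ commute, so
\[
 e^{-\tau h}=e^{-\tau c}\bigotimes_r\begin{pmatrix}\cosh\tau b_r&\sinh\tau b_r\\\sinh\tau b_r&\cosh\tau b_r\end{pmatrix}
 =e^{-\tau c}\prod_r\cosh(\tau b_r)\,T(\boldsymbol\rho),
\]
with $\rho_r=\tanh(\tau b_r)\in(0,1)$. The FKT algorithm then applies factor by factor.

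For necessity, suppose $K_\tau\cong\gamma T(\boldsymbol\rho)$ with $0<\rho_r<1$. Since $K_\tau\succ0$, the ordinary matrix logarithm is unique, so $-\tau h=\log K_\tau$. By \eqref{eq:ising-log},
\[
 \log K_\tau=\alpha I+\sum_r\tfrac12\log\tfrac{1+\rho_r}{1-\rho_r}\,X_r .
\]
Hence $h=cI-\sum_rb_rX_r$ with $c=-\alpha/\tau$ and $b_r=\frac1{2\tau}\log\frac{1+\rho_r}{1-\rho_r}>0$, where positivity of $b_r$ follows from $\rho_r\in(0,1)$. The relabeling used for $K_\tau$ is the same relabeling for $h$, because conjugation by a permutation commutes with the matrix exponential and logarithm. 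The case $d=0$ corresponds to $q=1$, where $h$ is a scalar.

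The only delicate point is the real-parameter setting. Theorem~\ref{thm:pd} is stated for algebraic entries, and the entries of $e^{-\tau h}$ are generally transcendental. We would invoke Theorem~\ref{thm:real}, whose proof transfers the unweighted dichotomy, including its positive definite core case, to entries in the fixed field representation. Within the resulting structural class, positivity and full rank exclude amplitude factors of size greater than one and the degenerate value $\rho_r=1$. This leaves exactly $\gamma T(\boldsymbol\rho)$ with $\rho_r\in(0,1)$, which is the case treated above.
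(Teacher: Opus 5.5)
Your proposal is correct and follows essentially the same route as the paper: positivity and positive definiteness of $e^{-\tau h}$, then Theorems~\ref{thm:pd} and~\ref{thm:real} to force $K_\tau\cong\gamma T(\boldsymbol\rho)$ with $0<\rho_r<1$. Necessity then comes from the unique real symmetric logarithm together with \eqref{eq:ising-log}, and sufficiency from the $\cosh$/$\tanh$ factorization and the FKT algorithm. One small precision for your last paragraph: excluding $\rho_r>1$, and not just $\rho_r=1$, needs $K_\tau\succ0$ rather than mere full rank, since a factor with $\rho_r>1$ gives the tensor a negative eigenvalue.
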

\begin{proof}
Shifting $-h$ by a scalar and using the exponential series gives
$K_\tau>0$; its eigenvalues are positive, so $K_\tau\succ0$.
Theorems~\ref{thm:pd} and~\ref{thm:real} require
$K_\tau\cong\gamma T(\boldsymbol\rho)$, $0<\rho_r<1$, in the
tractable case. Taking its unique real symmetric logarithm and
using \eqref{eq:ising-log} gives the stated form of $h$.
Conversely,
\[
 e^{-\tau h}=e^{-\tau c}\prod_r\cosh(\tau b_r)\,
                 \bigotimes_r W(\tanh(\tau b_r)),
\]
so the FKT algorithm applies.
\end{proof}
The criterion is independent of $\tau$; disconnected hopping
graphs are handled componentwise.

For the quantum clock hopping term $h_q=-(S_q+S_q^{\mathsf T})$
\cite[Appendix~A]{vishnu2025}, let $S_q$ cyclically shift the
$q\ge2$ basis states. The corollary gives tractability exactly at
$q=2,4$, for every fixed $\tau>0$. For $q\ge3$ its hopping graph
is the cycle $C_q$, a cube only at $q=4$: the Gray ordering above gives
$h_4=-X_1-X_2$, while $h_2=-2X_1$.

\section*{Acknowledgments}
We thank Jin-Yi Cai and Zhuxiao Tang for informing us that they had
independently obtained the main result and for explaining its
significance.

The authors used OpenAI's ChatGPT in preparing
this manuscript, including for language editing and \LaTeX{} preparation.
ChatGPT also contributed to the exploratory development of the main
technical arguments, in particular entropy continuation and the
recovery of product structure from logarithmic support. All claims
and proofs were independently checked and finalized by the authors,
who take full responsibility for the content.

\end{document}